\newtheorem{theorem}{Theorem}[section]
\newtheorem{lemma}[theorem]{Lemma}
\newtheorem{definition}[theorem]{Definition}
\newtheorem{remark}[theorem]{Remark}
\newtheorem{claim}[theorem]{Claim}
\newcommand{\wh}{\widehat}
\newcommand{\wt}{\widetilde}
\newcommand{\ov}{\overline}
\newcommand{\R}{\mathbb{R}}
\renewcommand{\varepsilon}{\epsilon}
\renewcommand{\tilde}{\wt}
\renewcommand{\hat}{\wh}
\DeclareMathOperator{\OPT}{OPT}
\DeclareMathOperator{\sd}{sd}
\DeclareMathOperator{\poly}{poly}
\DeclareMathOperator{\nnz}{nnz}
\DeclareMathOperator{\rank}{rank}
\DeclareMathOperator{\diag}{diag}
\DeclareMathOperator{\LS}{LS}
\DeclareMathOperator{\TLS}{TLS}
\DeclareMathOperator{\FTLS}{FTLS}
\newcommand*{\RN}[1]{\expandafter\@slowromancap\romannumeral #1@}
\newcommand{\define}[4][ignore]{%
  \ifstrequal{#1}{ignore}{}{
  \@namedef{thmtitle@#2}{#1}}%
  \@namedef{thm@#2}{#4}%
  \@namedef{thmtypen@#2}{lemma}%
  \newtheorem{thmtype@#2}[theorem]{#3}%
  \newtheorem*{thmtypealt@#2}{#3~\ref{#2}}%
}
\newcommand{\state}[1]{%
  \@namedef{curthm}{#1}
  \@ifundefined{thmtitle@#1}{
  \begin{thmtype@#1}
    }{
  \begin{thmtype@#1}[\@nameuse{thmtitle@#1}]
  }
    \label{#1}
    \@nameuse{thm@#1}
  \end{thmtype@#1}
  \@ifundefined{thmdone@#1}{
  \@namedef{thmdone@#1}{stated}%
  }{}
}
\newcommand{\restate}[1]{%
  \@namedef{curthm}{#1}
  \@ifundefined{thmtitle@#1}{
    \begin{thmtypealt@#1}
    }{
  \begin{thmtypealt@#1}[\@nameuse{thmtitle@#1}]
  }
    \@nameuse{thm@#1}
  \end{thmtypealt@#1}
  \@ifundefined{thmdone@#1}{
  \@namedef{thmdone@#1}{stated}%
  }{}
}
\newcommand{\thmlabel}[1]{
  \@ifundefined{thmdone@\@nameuse{curthm}}{\label{#1}
    }{\tag*{\eqref{#1}}}
}
\title{Total Least Squares Regression in Input Sparsity Time\thanks{A preliminary version of this paper appeared in NeurIPS 2019.}}
\author{
  Huaian Diao\thanks{\texttt{hadiao@nenu.edu.cn}. 
  Northeast Normal University.}
  \quad
   Zhao Song\thanks{\texttt{zhaosong@uw.edu}.
  University of Washington. This work was partly done while Zhao Song was visiting the Simons Institute for the Theory of Computing.}
  \quad
  David P. Woodruff\thanks{\texttt{dwoodruf@cs.cmu.edu}.
  Carnegie Mellon University. David Woodruff would like to thank support from the Office of Naval Research (ONR) grant N00014-18-1-2562. This work was also partly done while David Woodruff was visiting the Simons Institute for the Theory of Computing.}
  \quad
  Xin Yang\thanks{\texttt{yx1992@cs.washington.edu}.
  University of Washington.}
}
\date{}
\begin{document}

\begin{titlepage}
  \maketitle
  \begin{abstract}
In the total least squares problem, one is given an $m \times n$ matrix $A$, and an $m \times d$ matrix $B$, and one seeks to ``correct'' both $A$ and $B$, obtaining matrices $\hat{A}$ and $\hat{B}$, so that there exists an $X$ satisfying the equation $\hat{A}X = \hat{B}$. Typically the problem is overconstrained, meaning that $m \gg \max(n,d)$. The cost of the solution $\hat{A}, \hat{B}$ is given by $\|A-\hat{A}\|_F^2 + \|B - \hat{B}\|_F^2$. We give an algorithm for finding a solution $X$ to the linear system $\hat{A}X=\hat{B}$ for which the cost $\|A-\hat{A}\|_F^2 + \|B-\hat{B}\|_F^2$ is at most a multiplicative $(1+\epsilon)$ factor times the optimal cost, up to an additive error $\eta$ that may be an arbitrarily small function of $n$. Importantly, our running time is $\tilde{O}( \nnz(A) + \nnz(B) ) + \poly(n/\epsilon) \cdot d$, where for a matrix $C$, $\nnz(C)$ denotes its number of non-zero entries. Importantly, our running time does not directly depend on the large parameter $m$. As total least squares regression is known to be solvable via low rank approximation, a natural approach is to invoke fast algorithms for approximate low rank approximation, obtaining matrices $\hat{A}$ and $\hat{B}$ from this low rank approximation, and then solving for $X$ so that $\hat{A}X = \hat{B}$. However, existing algorithms do not apply since in total least squares the rank of the low rank approximation needs to be $n$, and so the running time of known methods would be at least $mn^2$. In contrast, we are able to achieve a much faster running time for finding $X$ by never explicitly forming the equation $\hat{A} X = \hat{B}$, but instead solving for an $X$ which is a solution to an implicit such equation.
Finally, we generalize our algorithm to the total least squares problem with regularization. 


  \end{abstract}
  \thispagestyle{empty}
\end{titlepage}


\section{Introduction}
In the least squares regression problem, we are given an $m \times n$ matrix $A$ and an $m \times 1$ vector $b$, and we seek to find an $x \in \mathbb{R}^n$ which minimizes $\|Ax-b\|_2^2$. A natural geometric interpretation is that there is an unknown hyperplane in $\mathbb{R}^{n+1}$, specified by the normal vector $x$, for which we have $m$ points on this hyperplane, the $i$-th of which is given by $(A_i, \langle A_i, x \rangle)$, where $A_i$ is the $i$-th row of $A$. However, due to noisy observations, we do not see the points $(A_i, \langle A_i, x \rangle)$, but rather only see the point $(A_i, b_i)$, and we seek to find the hyperplane which best fits these points, where we measure (squared) distance only on the $(n+1)$-st coordinate. This naturally generalizes to the setting in which $B$ is an $m \times d$ matrix, and the true points have the form $(A_i, A_i X)$ for some unknown $n \times d$ matrix $X$. This setting is called multiple-response regression, in which one seeks to find $X$ to minimize $\|AX-B\|_F^2$, where for a matrix $Y$, $\|Y\|_F^2$ is its squared Frobenius norm, i.e., the sum of squares of each of its entries. This geometrically corresponds to the setting when the points live in a lower $n$-dimensional flat of $\mathbb{R}^{n+d}$, rather than in a hyperplane. 

While extremely useful, in some settings the above regression model may not be entirely realistic. For example, it is quite natural that the matrix $A$ may also have been corrupted by measurement noise. In this case, one should also be allowed to first change entries of $A$, obtaining a new $m \times n$ matrix $\hat{A}$, then try to fit $B$ to $\hat{A}$ by solving a multiple-response regression problem. One should again be penalized for how much one changes the entries of $A$, and this leads to a popular formulation known as the {\it total least squares} optimization problem $\min_{\hat{A}, X} \|A-\hat{A}\|_F^2 + \|\hat{A}X-B\|_F^2$. Letting $C = [A, B]$, one can more compactly write this objective as $\min_{\hat{C} = [\hat{A}, \hat{B}]} \|C-\hat{C}\|_F^2$, where it is required that the columns of $\hat{B}$ are in the column span of $\hat{A}$. Total least squares can naturally capture many scenarios that least squares cannot. For example, imagine a column of $B$ is a large multiple $\lambda \cdot a$ of a column $a$ of $A$ that has been corrupted and sent to $0$. Then in least squares, one needs to pay $\lambda^2 \|a\|_2^2$, but in total least squares one can ``repair $A$" to contain the column $a$, and just pay $\|a\|_2^2$. We refer the reader to \cite{mv07} for an overview of total least squares. There is also
a large amount of work on total least squares with regularization \cite{rg04,lpt09,lv14}. 

Notice that $\hat{C}$ has rank $n$, and therefore the optimal cost is at least $\|C-C_n\|_F^2$, where $C_n$ is the best rank-$n$ approximation to $C$. If, in the optimal rank-$n$ approximation $C_n$, one has the property that the last $d$ columns are in the column span of the first $n$ columns, then the optimal solution $\hat{C}$ to total least squares problem is equal to $C_n$, and so the total least squares cost is the cost of the best rank-$n$ approximation to $C$. In this case, and only in this case, there is a closed-form solution. However, in general, this need not be the case, and $\|C-C_n\|_F^2$ may be strictly smaller than the total least squares cost. Fortunately, though, it cannot be much smaller, since one can take the first $n$ columns of $C_n$, and for each column that is not linearly independent of the remaining columns, we can replace it with an arbitrarily small multiple of one of the last $d$ columns of $C_n$ which is not in the span of the first $n$ columns of $C_n$. Iterating this procedure, we find that there is a solution to the total least squares problem which has cost which is arbitrarily close to $\|C-C_n\|_F^2$. We describe this procedure in more detail below.

The above procedure of converting a best rank-$n$ approximation to an arbitrarily close solution to the total least squares problem can be done efficiently given $C_n$, so this shows one can get an arbitrarily good approximation by computing a truncated singular value decompostion (SVD), which is a standard way of solving for $C_n$ in $O(m(n+d)^2)$ time. However, given the explosion of large-scale datasets these days, this running time is often prohibitive, even for the simpler problem of multiple response least squares regression. Motivated by this, an emerging body of literature has looked at the {\it sketch-and-solve} paradigm, where one settles for randomized approximation algorithms which run in much faster, often input sparsity time. Here by input-sparsity, we mean in time linear in the number $\nnz(C)$ of non-zero entries of the input description $C = [A, B]$. By now, it is known, for example, how to output a solution matrix $X$ to multiple response least squares regression satisfying $\|AX-B\|_F^2 \leq (1+\epsilon) \min_{X'} \|AX'-B\|_F^2$, in $\nnz(A) + \nnz(B)+ \poly(nd/\epsilon)$ time. This algorithm works for arbitrary input matrices $A$ and $B$, and succeeds with high probability over the algorithm's random coin tosses. For a survey of this and related results, we refer the reader to \cite{w14}. 

Given the above characterization of total least squares as a low rank approximation problem, it is natural to ask if one can directly apply sketch-and-solve techniques to solve it. Indeed, for low rank approximation, it is known how to find a rank-$k$ matrix $\hat{C}$ for which $\|C-\hat{C}\|_F^2 \leq (1+\epsilon)\|C-C_k\|_F^2$ in time $\nnz(C) + m \cdot k^2/\epsilon$ \cite{acw17}, using the fastest known results. Here, recall, we assume $m \geq n+d$. From an approximation point of view, this is fine for the total least squares problem, since this means after applying the procedure above to ensure the last $d$ columns of $\hat{C}$ are in the span of the first $n$ columns, and setting $k = n$ in the low rank approximation problem, our cost will be at most $(1+\epsilon)\|C-C_n\|_F^2 + \eta$, where $\eta$ can be made an arbitrarily small function of $n$. Moreover, the optimal total least squares cost is at least $\|C-C_n\|_F^2$, so our cost is a $(1+\epsilon)$-relative error approximation, up to an arbitarily small additive $\eta$. 

Unfortunately, this approach is insufficient for total least squares, because in the total least squares problem one sets $k = n$, and so the running time for approximate low rank approximation becomes $\nnz(C) + m \cdot n^2/\epsilon$. Since $n$ need not be that small, the $m \cdot n^2$ term is potentially prohibitively large. Indeed, if $d \leq n$, this may be much larger than the description of the input, which requires at most $mn$ parameters. Note that just outputting
$\hat{C}$ may take $m \cdot (n+d)$ parameters to describe. However, as in
the case of regression, one is often just interested in the matrix $X$ or
the hyperplane $x$ for ordinary least squares regression. Here
the matrix $X$ for total least squares can be described using only $nd$
parameters, and so one could hope for a much faster running time.
\subsection{Our Contributions}
Our main contribution is to develop a $(1+\epsilon)$-approximation to the total least squares regression problem, returning a matrix $X \in \mathbb{R}^{n \times d}$ for which there exist $\hat{A} \in \mathbb{R}^{m \times n}$ and
$\hat{B} \in \mathbb{R}^{m \times d}$ for which $\hat{A} X = \hat{B}$ and
$\|C-\hat{C}\|_F^2 \leq (1+\epsilon)\|C-C_n\|_F^2 + \eta$, where
$C = [A, B]$, $\hat{C} = [\hat{A}, \hat{B}]$, $C_n$ is the best rank-$n$
approximation to $C$, and $\eta$ is an arbitrarily small function of $n$. 
Importantly, we achieve a running time of 
$\tilde{O}(\nnz(A) + \nnz(B)) + \poly(n/\epsilon) \cdot d$.

Notice that this running time may be faster than the time it takes {\it even to write down $\hat{A}$ and $\hat{B}$}. 
Indeed, although one can write $A$ and $B$ down in $\nnz(A) + \nnz(B)$ time, it could be that 
the algorithm can only efficiently find an $\hat{A}$ and a $\hat{B}$ that are dense; nevertheless the algorithm does
not need to write such matrices down, as it is only interested in outputting the solution $X$ to the equation
$\hat{A}X = \hat{B}$. This is motivated by applications in which one wants generalization error. Given $X$, and a 
future $y \in \mathbb{R}^n$, one can compute $yX$ to predict the remaining unknown $d$ coordinates of the extension
of $y$ to $n+d$ dimensions. 

Our algorithm is inspired by using dimensionality reduction techniques for low rank approximation, such as fast
oblivious ``sketching'' matrices, as well as leverage score sampling. The rough idea is to
quickly reduce the low rank approximation problem to a problem of the form
$\min_{\textrm{rank - }n \ Z \in \mathbb{R}^{d_1 \times s_1}} \|(D_2 C D_1) Z (S_1 C) - D_2 C\|_F$, where $d_1, s_1 = O(n/\epsilon)$, $D_2$
and $D_1$ are row and column subset selection matrices, and $S_1$ is a so-called CountSketch matrix, which is a fast
oblivious projection matrix. We describe the matrices $D_1, D_2,$ and $S_1$ in more detail in the next section, though
the key takeaway message is that $(D_2 C D_1), (S_1 C)$, and $(D_2 C)$ are each efficiently computable small matrices 
{\it with a number of non-zero entries no larger than that of $C$.} Now the problem is a small, rank-constrained regression
problem for which there are closed form solutions for $Z$. We then need additional technical work, of the form described above, in order to 
find an $X \in \mathbb{R}^{n \times d}$ given $Z$, and to ensure that $X$ is the solution to an equation of the form
$\hat{A} X = \hat{B}$. 
Surprisingly, fast sketching methods have not been applied 
to the total least squares
problem before, and we consider this application to be one of the main 
contributions of this paper. 

We carefully bound the running time at each step to achieve 
$\tilde{O}(\nnz(A) + \nnz(B) + \poly(n/\epsilon) d)$ 
overall time, and prove its overall approximation ratio. 
Our main result is Theorem \ref{thm:ftls}.
We also generalize the theorem to the important case of total least squares regression with {\it regularization}; see Theorem~\ref{thm:regular_ftls_informal} for a precise statement. 

We empirically validate our algorithm on real and synthetic data sets. As expected, on a number of datasets 
the total least squares error can be much smaller than the error of ordinary least squares regression. We then implement
our fast total least squares algorithm, and show it is roughly $20-40$ times faster than computing the exact solution to
total least squares, while retaining $95\%$ accuracy.

{\bf Notation.} 
For a function $f$, we define $\wt{O}(f)$ to be $f\cdot \log^{O(1)}(f)$. 
For vectors $x,y\in \R^{n}$, let $\langle x,y\rangle:=\sum_{i=1}^n x_iy_i$ denote the inner product of $x$ and $y$. Let $\nnz(A)$ denote the number of nonzero entries of $A$. Let $\det(A)$ denote the determinant of a square matrix $A$. Let $A^\top$ denote the transpose of $A$. Let $A^\dagger$ denote the Moore-Penrose pseudoinverse of $A$. Let $A^{-1}$ denote the inverse of a full rank square matrix. Let $\| A\|_F$ denote the Frobenius norm of a matrix $A$, i.e., $\|A\|_F= (\sum_i \sum_j A_{i,j}^2)^{1/2}$.

Sketching matrices play an important role in our algorithm. 
Their usefulness will be further explained in Section \ref{sec:proof_of_main_theorem}.
The reader can refer to Appendix \ref{sec:sketching_matrices} for detailed introduction.

\section{Problem Formulation}
We first give the precise definition of the exact (i.e., non-approximate) version of the total least squares problem, and then define the approximate case.
\begin{definition}[Exact total least squares]\label{def:exact_total_least_square}
Given two matrices $A \in \R^{m \times n}$, $B \in \R^{m \times d}$, let $C = [A, ~ B] \in \R^{m \times ( n + d )}$. The goal is to solve the following minimization problem:
\begin{align}\label{eq:tls1}
\min_{X \in \R^{n \times d}, \Delta A \in \R^{m \times n}, \Delta B \in \R^{m \times d} } & ~ \| [ \Delta A,  ~ \Delta B ] \|_F\\
\mathrm{~subject~to~} & ~ ( A + \Delta A ) X = ( B + \Delta B )  \notag
\end{align}
\end{definition}
It is known that total least squares problem has a closed form solution. 
For a detailed discussion, see Appendix \ref{sec:closed_form_tls}.
It is natural to consider the approximate version of total least squares:
\begin{definition}[Approximate total least squares problem]\label{def:approx_total_least_square}
Given two matrices $A \in \R^{m \times n}$ and $B \in \R^{m \times d}$, let $\OPT = \min_{\rank-n~C'} \| C' - [ A, ~ B ] \|_F$, for parameters $\epsilon > 0 , \delta > 0$. The goal is to output $X'\in \R^{n\times d}$ so that there exists $A' \in \R^{m \times n}$  
such that
\begin{align*}
\| [ A', ~ A'X' ] - [ A, ~ B ] \|_F \leq (1+\epsilon) \OPT + \delta.
\end{align*}
\end{definition}

One could solve total least squares directly, but it is much slower than solving least squares (LS). 
We will use fast randomized algorithms, the basis of which are sampling and sketching ideas \cite{cw13,nn13,mm13,w14,rsw16,psw17,swz17,ccly19,cls19,lsz19,swyzz19,swz19c,swz19,swz19b,djssw19}, 
to speed up solving total least squares in both theory and in practice.

The total least squares problem with regularization is also an important variant of this problem~\cite{lv10}. 
We consider the following version of the regularized total least squares problem.
\begin{definition}[Approximate regularized total least squares problem]\label{def:regularized_total_least_square}
Given two matrices $A \in \R^{m \times n}$ and $B \in \R^{m \times d}$ and $\lambda>0$, let $\OPT = \min_{U\in \R^{m \times n}, V\in \R^{n\times (n+d)} } \| UV - [ A, ~ B] \|_F^2+\lambda \|U\|_F^2+\lambda \|V\|_F^2$, for parameters $\epsilon > 0 , \delta > 0$. The goal is to output $X'\in \R^{n\times d}$ so that there exist $A' \in \R^{m \times n}$  
, $U'\in \R^{m \times n}$ and $V'\in \R^{n\times (n+d)}$ satisfying $\|[A', ~ A'X']-U'V'\|_F^2\leq \delta$
and $ \| [ A', ~ A'X' ] - [ A, ~ B ] \|_F^2 \leq (1+\epsilon) \OPT + \delta
$.
\end{definition}

\begin{algorithm*}[!t]\caption{Our Fast Total Least Squares Algorithm}\label{alg:ftls}
\begin{algorithmic}[1] {
\Procedure{FastTotalLeastSquares}{$A,B,n,d,\epsilon,\delta$} \Comment{Theorem~\ref{thm:ftls}}
	\State $s_1 \leftarrow O(n/\epsilon)$, $s_2 \leftarrow O(n/\epsilon)$, $d_1 \leftarrow \tilde O(n/\epsilon)$, $d_2 \leftarrow \tilde O(n/\epsilon)$
	\State Choose $S_1 \in \R^{ s_1 \times m }$ to be a CountSketch matrix, then compute $S_1 C$ \Comment{Definition~\ref{def:count_sketch_transform}}
	\If{$d > \Omega(n/\epsilon)$} \Comment{Reduce $n+d$ to $O(n/\epsilon)$}
		\State Choose $D_1^\top \in \R^{ d_1 \times ( n + d ) }$ to be a leverage score sampling and rescaling matrix according to the rows of $(S_1 C)^\top$, then compute $C D_1$ 
	\Else \Comment{We do not need to use matrix $D_1$}
		\State Choose $D_1^\top \in \R^{(n + d) \times (n + d)}$ to be the identity matrix
	\EndIf
	\State Choose $D_2 \in \R^{ d_2 \times m }$ to be a leverage score sampling and rescaling matrix according to the rows of $CD_1$
	\State $Z_2 \leftarrow \min_{\rank-n~Z \in \R^{d_1 \times s_1}} \| D_2 C D_1 Z S_1 C - D_2 C \|_F $ \Comment{Theorem~\ref{thm:generalized_rank_constrained_matrix_approximations}}
	\State\label{alg:choice_delta} $\ov{A}, \ov{B}, \pi \leftarrow \textsc{Split}(C D_1, Z_2, S_1 C ,n,d,\delta/\poly(m))$, $X \leftarrow \min \| \ov{A} X - \ov{B} \|_F$ 
	\If{Need $C_{\FTLS}$} \Comment{For experiments to evaluate the cost}
		\State \textsc{Evaluate}($CD_1$, $Z_2$, $S_1C$, $X$, $\pi$, $\delta/\poly(m)$) 
		
	\EndIf 
	\State \Return $X$
\EndProcedure
\Procedure{Split}{$C D_1, Z_2, S_1 C ,n,d,\delta$}\Comment{Lemma \ref{lem:split}}
	\State Choose $S_2 \in \R^{s_2 \times m}$ to be a CountSketch matrix
	\State $\ov{C}\leftarrow (S_2\cdot C D_1) \cdot Z_2 \cdot S_1 C$ \Comment{$\wh{C} =C D_1 Z_2 S_1 C$ ; $\ov{C}=S_2\wh{C}$}
	\State $\ov{A} \leftarrow \ov{C}_{*,[n]}$,  $\ov{B} \leftarrow \ov{C}_{*,[n+d] \backslash [n]}$ \Comment{$\wh{A}= \wh{C}_{*,[n]},\wh{B}= \wh{C}_{*,[n+d] \backslash [n]} $; $\ov{A}=S_2\wh{A}$, $\ov{B}=S_2\wh{B}$}
	\State $T \leftarrow \emptyset $, $\pi(i)=-1$ for all $i\in[n]$
	\For{$i = 1 \to n$}
		\If{$\ov{A}_{*,i}$ is linearly dependent of $\ov{A}_{*,[n] \backslash \{i\} }$}
			\State\label{alg:min_j} $j \leftarrow \min_{j \in [d] \backslash T} \{ \ov{B}_{*,j} \text{~is~linearly~independent~of~}\ov{A} \}$, $\ov{A}_{*,i} \leftarrow \ov{A}_{*,i} + \delta \cdot \ov{B}_{*,j} $, $T \leftarrow T \cup \{ j \}$, $\pi(i)\leftarrow j$
		\EndIf
	\EndFor
	\State \Return $\ov{A}$, $\ov{B}$, $\pi$  \Comment{$\pi: [n]\rightarrow \{-1\}\cup([n+d]\backslash [n])$}
\EndProcedure 

\Procedure{Evaluate}{$CD_1$, $Z_2$, $S_1C$, $X$, $\pi$, $\delta$} \Comment{Appendix \ref{subsec:evaluate}}
	\State\label{alg:start-wha}  $\wh{C} \leftarrow C D_1 Z_2 S_1 C$, $\wh{A} \leftarrow \wh{C}_{*,[n]}$,  $\wh{B} \leftarrow \ov{C}_{*,[n+d] \backslash [n]}$
	\For{$i = 1 \to n$}
		\If{$\pi(i)\neq -1$}
			\State $\wh{A}_{*,i} \leftarrow \wh{A}_{*,i} + \delta \cdot \wh{B}_{*,\pi(i)} $
		\EndIf
	\EndFor\label{alg:end-wha}
	\State \Return $\|[\wh{A}, ~ \wh{A}X] - C\|_F$
\EndProcedure }
\end{algorithmic}
\end{algorithm*}

\begin{table}[h]
\caption{Notations in Algorithm \ref{alg:ftls}}\label{tab:notations_ftls}
\centering
{
\begin{tabular}{ | l| l| l| l| l | l|} 
\hline
{\bf Not.} & {\bf Value} & {\bf Comment} &{\bf Matrix} &{\bf Dim.} &{\bf Comment}\\
\hline
$s_1$& $O(n/\epsilon)$ & \#rows in $S_1$ & $S_1$ & $\R^{s_1\times m}$ & {CountSketch matrix} \\
\hline
$d_1$& $\tilde O(n/\epsilon)$ & \#columns in $D_1$ & $D_1$&  $\R^{n\times d_1}$  & {Leverage score sampling matrix}\\ 
\hline 
$d_2$& $\tilde O(n/\epsilon)$ & \#rows in $D_2$ & $D_2$ &  $\R^{d_2\times m}$& {Leverage score sampling matrix}\\
\hline
$s_2$ & $O(n/\epsilon)$ & \#rows in $S_2$ & $S_2$ &  $\R^{s_2\times m}$  &{CountSketch matrix for fast regression} \\
\hline
&&&$Z_2$ &$\R^{s_1\times d_1}$  &{Low rank approximation solution matrix} \\
\hline
\end{tabular}
}
\end{table}

\section{Fast Total Least Squares Algorithm}\label{sec:proof_of_main_theorem}
We present our algorithm in Algorithm~\ref{alg:ftls} and give the analysis here. 
Readers can refer to Table \ref{tab:notations_ftls} to check notations in Algorithm~\ref{alg:ftls}.
To clearly give the intuition,
we present a sequence of approximations, reducing the size of our problem step-by-step. We can focus on the case when $d \gg \Omega( n/\epsilon )$ and the optimal solution $\wh{C}$ to program \eqref{eq:tls2} has the form $ [ \wh{A},  ~  \wh{A}\wh{X} ] \in \R^{m \times (n+d)}$. For the other case when $d = O(n/\epsilon)$, we do not need to use the sampling matrix $D_1$. In the case when the solution does not have the form $ [ \wh{A},  ~  \wh{A} \wh{X}]$, we need to include \textsc{Split} in the algorithm, since it will perturb some columns in $\wh{A}$ with arbitrarily small noise to make sure $\wh{A}$ has rank $n$. By applying procedure \textsc{Split}, we can handle all cases.

Fix $A\in \R^{m\times n}$ and $B\in \R^{m\times d}$.
Let $\OPT = \min_{\rank-n~C' \in \R^{m \times (n + d)} } \| C' - [ A, ~ B] \|_F$. 
By using techniques in low-rank approximation,
we can find an approximation of a special form. More precisely, 
let $S_1 \in \R^{s_1 \times m}$ be a CountSketch matrix with $s_1 = O( n/\epsilon )$.
Then we claim that it is sufficient to look at solutions of the form $US_1C$.
\begin{claim}[CountSketch matrix for low rank approximation problem]\label{clm:step_1_count_sketch}
With probability $0.98$,
\begin{align*}
\min_{ \rank-n~U \in \R^{m \times s_1} } \| U S_1 C - C \|_F^2 \leq (1+\epsilon)^2 \OPT^2.
\end{align*}
\end{claim}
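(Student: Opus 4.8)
The plan is to reduce the claim to a standard fact about CountSketch matrices being ``affine embeddings'' / providing relative-error low rank approximation, and then argue that the optimal rank-$n$ approximation $C_n$ can be realized in the form $U S_1 C$ after sketching. First I would recall the key property of a CountSketch matrix $S_1 \in \R^{s_1 \times m}$ with $s_1 = O(n/\epsilon)$ rows: with probability at least $0.98$, for every matrix $M$ with $n$ columns, $S_1$ is a subspace embedding for the column space of $M$ simultaneously with an approximate matrix product guarantee, which together imply that for any fixed matrix $C$,
\begin{align*}
\min_{\rank-n~U \in \R^{m \times s_1}} \| U S_1 C - C \|_F^2 \leq (1+\epsilon) \min_{\rank-n~\widetilde C} \| \widetilde C - C \|_F^2 = (1+\epsilon)\OPT^2.
\end{align*}
This is exactly the statement that a CountSketch with $O(n/\epsilon)$ rows gives a $(1+\epsilon)$-approximate rank-$n$ approximation to $C$ when one restricts the right factor to lie in the row space of $S_1 C$; it appears in the literature on input-sparsity-time low rank approximation (e.g.\ \cite{cw13,nn13,acw17}), and I would simply cite it from the sketching preliminaries in Appendix~\ref{sec:sketching_matrices}.

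The substance to verify is that this cited guarantee indeed bounds the quantity in the claim. The standard argument: write $C_n = LW$ where $W \in \R^{n \times (n+d)}$ has orthonormal rows spanning the top-$n$ right singular directions of $C$, so $L = CW^\top$. Consider the regression problem $\min_{X} \| X S_1 C - C \|_F$ restricted to $X$ of rank $n$; plugging in the minimizer of the unconstrained-in-row-space problem and using that $S_1$ is an affine embedding for the system $\min_Y \| Y (S_1 C) - C \|$ along the relevant $n$-dimensional space shows the optimal rank-$n$ choice $U S_1 C$ has cost at most $(1+\epsilon)\| C - C_n\|_F^2$. Since the claim states the bound as $(1+\epsilon)^2 \OPT^2$ rather than $(1+\epsilon)\OPT^2$, there is slack: I can afford to lose an extra $(1+\epsilon)$ factor, e.g.\ from composing the subspace-embedding distortion with the approximate-product error, so I would not need to optimize constants. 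The success probability $0.98$ is obtained by setting the constant in $s_1 = O(n/\epsilon)$ large enough and applying the failure-probability bounds for CountSketch.

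The main obstacle, such as it is, is bookkeeping rather than a genuine difficulty: one must make sure the rank-$n$ constraint on $U$ is compatible with the affine-embedding reduction — i.e.\ that the ``ideal'' solution we compare against, namely $C_n (S_1 C)^\dagger (S_1 C)$ projected appropriately, actually has rank at most $n$ and actually lies in the feasible set $\{ U S_1 C : \rank(U) \le n\}$. This is handled by noting $C_n$ has rank exactly $n$, so any factorization through the row space of $S_1 C$ automatically yields a rank-$\le n$ left factor. I expect the proof to be short: state the CountSketch low-rank-approximation lemma from the appendix, instantiate it with $k = n$ and the matrix $C$, and observe that the resulting bound is even stronger than the $(1+\epsilon)^2 \OPT^2$ claimed.
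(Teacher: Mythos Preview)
Your high-level plan is right and matches the paper's: this is the standard ``row space of $S_1 C$ contains a $(1+\epsilon)$-approximate rank-$n$ factorization'' lemma, and it can either be cited or proved in a few lines. The paper chooses to prove it inline: write the optimal rank-$n$ approximation as $C^* = U^* V^*$, plug in the specific feasible matrix $U = U^*(S_1 U^*)^\dagger$, note this has rank $\le n$, and bound $\|U^*(S_1 U^*)^\dagger S_1 C - C\|_F^2$ by applying the multiple-regression sketching guarantee (Theorem~\ref{thm:count_sketch}) to the problem $\min_V \|U^* V - C\|_F^2$ with sketch $S_1$ on the left. Two applications of that theorem give the $(1+\epsilon)^2$ factor.

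Where your sketch drifts is in the concrete candidate you propose to compare against. The object $C_n (S_1 C)^\dagger (S_1 C)$ is not the one the standard argument controls, and there is no direct reason $\|C_n (S_1 C)^\dagger (S_1 C) - C\|_F$ should be close to $\OPT$: this projects the \emph{rows} of $C_n$ onto the row span of $S_1 C$, which need not preserve the approximation quality. The correct plug-in uses the \emph{left} factor of $C_n$, i.e.\ $U^*(S_1 U^*)^\dagger S_1 C$, so that the sketching guarantee for the regression $\min_V \|U^* V - C\|_F$ applies. If you intend to cite the lemma as a black box that is fine, but if you spell out the argument, use the paper's candidate rather than $C_n (S_1 C)^\dagger (S_1 C)$.
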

We provide the proof in Appendix~\ref{sec:proof_of_step_1_count_sketch}.
We shall mention that we cannot use leverage score sampling here,
because taking leverage score sampling on matrix $C$ would take at least $\nnz(C)+(n+d)^2$ time,
while we are linear in $d$ in the additive term in our running time $\tilde O(\nnz(C))+d\cdot \poly(n/\epsilon)$.

Let $U_1$ be the optimal solution of the program $\min_{ U \in \R^{m \times s_1} } \| U S_1 C - C \|_F^2$, i.e.,
\begin{align}\label{eq:def_U_1}
U_1 = \arg\min_{ \rank-n~U \in \R^{m \times s_1} } \| U S_1 C - C \|_F^2.
\end{align}
If $d$ is large compared to $n$,
then program \eqref{eq:def_U_1} is computationally expensive to solve.
So we can apply sketching techniques to reduce the size of the problem. 
 Let $D_1^\top \in \R^{d_1 \times (n+d)}$ denote a leverage score sampling and rescaling matrix according to the columns of $S_1 C$, with $d_1 = \tilde{O}(n/\epsilon)$ nonzero entries on the diagonal of $D_1$.
Let $U_2 \in \R^{m \times s_1}$ denote the optimal solution to the problem $\min_{\rank-n~U \in \R^{m \times s_1} } \| U S_1 C D_1 - C D_1 \|_F^2$, i.e.,
\begin{align}\label{eq:def_U_2}
U_2 = \arg \min_{ \rank-n~U \in \R^{m \times s_1} }\| U S_1 C D_1 - C D_1 \|_F^2.
\end{align}
Then the following claim comes from the constrained low-rank approximation result (Theorem~\ref{thm:generalized_rank_constrained_matrix_approximations}).
\begin{claim}[Solving regression with leverage score sampling]\label{clm:step_2_leverage_score}
Let $U_1$ be defined in Eq.~\eqref{eq:def_U_1}, and let $U_2$ be defined in Eq.~\eqref{eq:def_U_2}. Then with probability $0.98$,
\begin{align*}
\| U_2 S_1 C - C \|_F^2 \leq & ~ (1 + \epsilon)^2 \| U_1 S_1 C - C \|_F^2 .
\end{align*}
\end{claim}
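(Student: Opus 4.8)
\textbf{Proof proposal for Claim~\ref{clm:step_2_leverage_score}.} The plan is to show that right-multiplication by $D_1$ preserves, up to a $(1\pm\epsilon)$ factor, the cost of \emph{every} feasible point of the regression $\min_{\rank-n~U}\|US_1C-C\|_F^2$, and then to sandwich the two optima $U_1$ (from \eqref{eq:def_U_1}) and $U_2$ (from \eqref{eq:def_U_2}). Throughout, $S_1$ and hence $S_1C$ are treated as fixed; the statement will hold for every realization of $S_1$, with all probability taken over the randomness of $D_1$.

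First I would recast the problem so that Theorem~\ref{thm:generalized_rank_constrained_matrix_approximations} applies. Transposing, $\min_{\rank-n~U}\|US_1C-C\|_F^2=\min_{\rank-n~V}\|(S_1C)^\top V-C^\top\|_F^2$ with $V=U^\top$, and the $D_1$-sketched version becomes $\min_{\rank-n~V}\|D_1^\top(S_1C)^\top V-D_1^\top C^\top\|_F^2$, where $D_1^\top$ samples and rescales rows of $(S_1C)^\top$ (equivalently, columns of $S_1C$) according to their leverage scores. By the closed-form description in Theorem~\ref{thm:generalized_rank_constrained_matrix_approximations}, the optimum of such a rank-constrained regression is governed by the best rank-$n$ approximation of the projection of $C^\top$ onto the column space of $(S_1C)^\top$; in particular the ``active'' part of every near-optimal solution lies in an $O(n)$-dimensional subspace rather than in the full $s_1$-dimensional column space of $(S_1C)^\top$. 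Hence it suffices that $D_1^\top$, with $d_1=\tilde O(n/\epsilon)$ leverage-score samples, (i) act as an $\epsilon$-subspace embedding for the relevant $O(n)$-dimensional subspace and (ii) satisfy the Frobenius-norm approximate matrix multiplication bound controlling the cross terms. Both are standard properties of leverage-score sampling matrices, and together they are exactly the hypotheses of Theorem~\ref{thm:generalized_rank_constrained_matrix_approximations}. This yields: with probability at least $0.98$ over $D_1$, for every rank-$n$ $U\in\R^{m\times s_1}$,
\[
(1-\epsilon)\,\|US_1C-C\|_F^2 \;\le\; \|US_1CD_1-CD_1\|_F^2 \;\le\; (1+\epsilon)\,\|US_1C-C\|_F^2 .
\]

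Given the two-sided bound, the claim follows by the usual sandwiching argument. Both $U_1$ and $U_2$ are rank-$n$, hence feasible for the original problem \emph{and} for the $D_1$-sketched problem; applying the two-sided bound at $U=U_2$ and at $U=U_1$, together with the optimality of $U_2$ for the sketched problem,
\[
\|U_2S_1C-C\|_F^2 \;\le\; \tfrac{1}{1-\epsilon}\,\|U_2S_1CD_1-CD_1\|_F^2 \;\le\; \tfrac{1}{1-\epsilon}\,\|U_1S_1CD_1-CD_1\|_F^2 \;\le\; \tfrac{1+\epsilon}{1-\epsilon}\,\|U_1S_1C-C\|_F^2 .
\]
Since $\tfrac{1+\epsilon}{1-\epsilon}\le(1+\epsilon)^2$ for all sufficiently small $\epsilon$ (rescaling $\epsilon$ by a constant factor otherwise), this gives Claim~\ref{clm:step_2_leverage_score}.

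The main obstacle is the middle step: justifying that a leverage-score sample of size only $d_1=\tilde O(n/\epsilon)$ --- driven by the target rank $n$ rather than by $s_1=O(n/\epsilon)$ --- is enough to obtain two-sided cost preservation over the \emph{entire} rank-$n$ feasible set, a set which in particular contains the $D_1$-dependent minimizer $U_2$. Naively one would demand a subspace embedding for the $s_1$-dimensional column space of $(S_1C)^\top$, which is more than the running-time budget allows; it is precisely the rank-$n$ structure exposed by Theorem~\ref{thm:generalized_rank_constrained_matrix_approximations} that lets one get away with $\tilde O(n/\epsilon)$ samples, and carefully matching that theorem's hypotheses to the leverage-score guarantees (including the approximate-matrix-multiplication bound for the cross terms) is where the real work lies. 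The transposition, the feasibility bookkeeping for $U_1,U_2$, and the final manipulation of $\epsilon$ are all routine.
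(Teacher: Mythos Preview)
Your sandwiching chain
\[
\|U_2S_1C-C\|_F^2 \;\le\; \tfrac{1}{1-\epsilon}\,\|U_2S_1CD_1-CD_1\|_F^2 \;\le\; \tfrac{1}{1-\epsilon}\,\|U_1S_1CD_1-CD_1\|_F^2 \;\le\; \tfrac{1+\epsilon}{1-\epsilon}\,\|U_1S_1C-C\|_F^2
\]
is exactly the paper's proof; the paper writes the same three inequalities (with $(1+\epsilon)$ in place of $1/(1-\epsilon)$) and justifies the outer two simply by ``property of a leverage score sampling matrix $D_1$''.

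Where you diverge is in the justification of the two-sided bound, and here you have made life harder than necessary. The ``obstacle'' you flag is not real. You write that a subspace embedding for the full $s_1$-dimensional row space of $S_1C$ ``is more than the running-time budget allows'', but recall $s_1=O(n/\epsilon)$; hence an affine embedding for the regression $\min_V\|(S_1C)^\top V-C^\top\|_F^2$ needs only $\tilde O(s_1)+O(s_1/\epsilon)=\poly(n/\epsilon)$ leverage-score samples, well within the $d\cdot\poly(n/\epsilon)$ budget (at worst the stated $d_1=\tilde O(n/\epsilon)$ should be $\tilde O(n/\epsilon^2)$, which changes nothing downstream). That affine-embedding property already gives
\[
(1-\epsilon)\,\|US_1C-C\|_F^2 \;\le\; \|US_1CD_1-CD_1\|_F^2 \;\le\; (1+\epsilon)\,\|US_1C-C\|_F^2
\]
\emph{for every} $U\in\R^{m\times s_1}$, with no rank restriction. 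The rank-$n$ constraint enters only in the middle step, via the optimality of $U_2$. So your detour through Theorem~\ref{thm:generalized_rank_constrained_matrix_approximations} to ``expose the rank-$n$ structure'' and argue that the near-optimal solutions live in an $O(n)$-dimensional subspace is unnecessary; worse, that theorem only furnishes a closed form for the minimizer and does not by itself yield uniform cost preservation over the rank-$n$ feasible set, so the step you label as ``where the real work lies'' is not actually completed in your proposal. The clean route is simply: affine embedding (subspace embedding plus approximate matrix multiplication for $(S_1C)^\top$), then sandwich.
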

We provide the proof in Appendix~\ref{sec:proof_of_step_2_leverage_score}. 
We now consider how to solve program \eqref{eq:def_U_2}.
We observe that 
\begin{claim}\label{clm:step_2_span}
$U_2 \in \mathrm{colspan}(C D_1)$.
\end{claim}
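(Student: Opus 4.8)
The plan is to recognize \eqref{eq:def_U_2} as a rank-constrained multiple-response regression problem and to exploit the fact that its right-hand side already lives in $\mathrm{colspan}(CD_1)$. Write $M := S_1 C D_1 \in \R^{s_1\times d_1}$ and $N := C D_1 \in \R^{m\times d_1}$, so that \eqref{eq:def_U_2} reads $\min_{\rank-n~U\in\R^{m\times s_1}}\|UM-N\|_F^2$. Let $P\in\R^{m\times m}$ be the orthogonal projector onto $\mathrm{colspan}(CD_1)=\mathrm{colspan}(N)$; the one structural fact I will use is $PN=N$, since every column of $N$ already lies in that subspace.

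The core of the argument is a single ``projection does not hurt'' step. Take any feasible $U$, i.e.\ any $U\in\R^{m\times s_1}$ with $\rank(U)\le n$, and set $U':=PU$. Then $\rank(U')\le\rank(U)\le n$, so $U'$ is still feasible; every column of $U'$ lies in the range of $P$, which is exactly $\mathrm{colspan}(CD_1)$; and, using $PN=N$ together with the fact that an orthogonal projector is a contraction in Frobenius norm,
\[
\|U'M-N\|_F=\|PUM-PN\|_F=\|P(UM-N)\|_F\le\|UM-N\|_F.
\]
Hence the minimum of $\|UM-N\|_F$ over feasible $U$ equals the minimum over feasible $U$ whose columns lie in $\mathrm{colspan}(CD_1)$, so we may take $U_2$ to be a minimizer of the restricted problem, which gives $U_2\in\mathrm{colspan}(CD_1)$.

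One small point deserves care: \eqref{eq:def_U_2} defines $U_2$ through an $\arg\min$, so to make the statement literal (rather than ``there exists an optimal $U_2$ with this property'') I would either fix the convention that $U_2$ denotes the particular solution produced by the closed form of Theorem~\ref{thm:generalized_rank_constrained_matrix_approximations}, or argue that \emph{every} optimizer has the property. For the first route, that theorem expresses the optimal $U$ as a rank-$n$ truncation of $N$ composed with an orthogonal projector, post-multiplied by $M^\dagger$; since rank truncation, projection, and right-multiplication all keep columns inside $\mathrm{colspan}(N)$, the closed-form solution automatically lies in $\mathrm{colspan}(CD_1)$. Either way the conclusion is immediate.

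I do not expect a genuine obstacle here. The only things to watch are that the rank constraint is preserved under left-multiplication by $P$ (it is, since rank cannot increase under multiplication) and the $\arg\min$-versus-existence wording above. The value of the claim is that it licenses the substitution $U=CD_1\,Z$ with $Z\in\R^{d_1\times s_1}$, which is precisely the reparametrization that turns \eqref{eq:def_U_2} into the small rank-constrained regression problem solved for $Z_2$ in Algorithm~\ref{alg:ftls}.
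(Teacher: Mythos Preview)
Your projection argument is correct. The paper itself does not give a separate proof of this claim: it is stated as an observation in the main text, and the only hint at a justification is the reference to the closed form of Theorem~\ref{thm:generalized_rank_constrained_matrix_approximations} (your ``first route''), which writes the minimizer of \eqref{eq:def_U_2} as a right-multiple of a rank-$n$ truncation of a matrix whose columns already lie in $\mathrm{colspan}(CD_1)$. So your closed-form paragraph matches what the paper implicitly relies on, and your Pythagorean/projection argument is a clean elementary alternative that the paper does not spell out.

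One small caveat worth noting explicitly: your projection step shows that \emph{some} minimizer lies in $\mathrm{colspan}(CD_1)$, not that every minimizer does (if $M=S_1CD_1$ fails to have full row rank, one can add to $U$ a piece orthogonal to $\mathrm{colspan}(CD_1)$ that is annihilated by $M$ without changing the objective). You already anticipated this, and either of the two fixes you propose---adopting the closed-form solution as the definition of $U_2$, or simply declaring that $U_2$ is chosen to be a minimizer in $\mathrm{colspan}(CD_1)$---is exactly what the downstream argument (the passage to $Z$ in \eqref{eq:def_Z_1}) needs.
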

We can thus consider the following relaxation: given $C D_1 $, $S_1 C$ and $C$, solve:
\begin{align}\label{eq:def_Z_1}
\min_{\rank-n~Z \in \R^{d_1 \times s_1}} \| C D_1 Z S_1 C - C \|_F^2.
\end{align}
By setting $CD_1Z=U$,
we can check that program \eqref{eq:def_Z_1} is indeed a relaxation of program \eqref{eq:def_U_2}.
Let $Z_1$ be the optimal solution to program \eqref{eq:def_Z_1}. We show the following claim and delayed the proof in \ref{sec:proof_of_step_3}.
\begin{claim}[Approximation ratio of relaxation]\label{clm:step_3}
With probability $0.98$,
\begin{align*}
\| C D_1 Z_1 S_1 C - C \|_F^2 \leq (1+O(\epsilon)) \OPT^2 .
\end{align*}
\end{claim}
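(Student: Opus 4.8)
The plan is to prove the claim by exhibiting one explicit feasible point for the rank-constrained regression~\eqref{eq:def_Z_1} whose objective value is already within $(1+O(\epsilon))\OPT^2$, and then invoking the optimality of $Z_1$. The natural candidate is
\[
Z' \;:=\; (C D_1)^\dagger U_2 \;\in\; \R^{d_1 \times s_1},
\]
where $U_2$ is the minimizer defined in~\eqref{eq:def_U_2}; note the dimensions match since $C D_1 \in \R^{m\times d_1}$ and $U_2 \in \R^{m\times s_1}$.

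First I would check that $Z'$ is feasible, i.e.\ that $\rank(Z') \le n$. This is immediate: $\rank\big((C D_1)^\dagger U_2\big) \le \rank(U_2) \le n$, since in~\eqref{eq:def_U_2} the variable $U_2$ ranges only over rank-$n$ matrices. Next, by Claim~\ref{clm:step_2_span} we have $U_2 \in \mathrm{colspan}(C D_1)$, so applying $C D_1 (C D_1)^\dagger$ — the orthogonal projector onto $\mathrm{colspan}(C D_1)$ — fixes $U_2$; hence $C D_1 Z' = C D_1 (C D_1)^\dagger U_2 = U_2$, and therefore $C D_1 Z' S_1 C = U_2 S_1 C$.

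Now I chain the inequalities. Since $Z_1$ is optimal for~\eqref{eq:def_Z_1} and $Z'$ is feasible,
\begin{align*}
\| C D_1 Z_1 S_1 C - C \|_F^2
\;\le\; \| C D_1 Z' S_1 C - C \|_F^2
\;=\; \| U_2 S_1 C - C \|_F^2
\;\le\; (1+\epsilon)^2 \| U_1 S_1 C - C \|_F^2
\;\le\; (1+\epsilon)^4 \OPT^2,
\end{align*}
where the second inequality is Claim~\ref{clm:step_2_leverage_score} and the third combines the optimality of $U_1$ in~\eqref{eq:def_U_1} with Claim~\ref{clm:step_1_count_sketch}. For $\epsilon$ below a small constant, $(1+\epsilon)^4 = 1 + O(\epsilon)$, which gives the stated bound. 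For the probability, Claims~\ref{clm:step_1_count_sketch} and~\ref{clm:step_2_leverage_score} each hold with probability $0.98$; enlarging the constants hidden in $s_1 = O(n/\epsilon)$ and $d_1 = \tilde O(n/\epsilon)$ makes each succeed with probability at least $0.99$, so a union bound yields the claimed $0.98$.

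The only genuinely delicate point is the feasibility step: one must ensure that the matrix $Z'$ witnessing $C D_1 Z' S_1 C = U_2 S_1 C$ can be taken to have rank at most $n$, so that it truly lies in the feasible region of the rank-constrained program~\eqref{eq:def_Z_1} — passing to the pseudoinverse solution $(C D_1)^\dagger U_2$ together with Claim~\ref{clm:step_2_span} is exactly what makes this go through. Everything else is bookkeeping of the $(1+\epsilon)$ factors.
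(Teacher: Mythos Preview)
Your proof is correct and follows the same approach as the paper: exhibit a feasible witness $Z$ with $C D_1 Z = U_2$, then combine Claims~\ref{clm:step_1_count_sketch} and~\ref{clm:step_2_leverage_score} and invoke the optimality of $Z_1$. If anything you are more careful than the paper, which simply asserts ``we can choose $Z$ so that $CD_1Z=U_2$'' without explicitly checking the rank-$n$ feasibility; your choice $Z' = (CD_1)^\dagger U_2$ and the observation $\rank(Z')\le \rank(U_2)\le n$ fill that gap cleanly.
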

However,
program \eqref{eq:def_Z_1} still has a potentially large size, i.e., we need to work with an $m\times d_1$ matrix $CD_1$.
To handle this problem,
we again apply sketching techniques.
Let $D_2 \in \R^{d_2 \times m}$ be a leverage score sampling and rescaling matrix according to the matrix $C D_1 \in \R^{m \times d_1}$, so that $D_2$ has $d_2 = \tilde{O}(n/\epsilon)$ nonzeros on the diagonal.
Now,
we arrive at the small program that we are going to directly solve: 
\begin{align}\label{eq:def_D_2}
\min_{\rank-n~Z \in \R^{d_1 \times s_1}} \| D_2 C D_1 Z S_1 C - D_2 C \|_F^2.
\end{align}
We shall mention that here it is beneficial to apply leverage score sampling matrix because we only need to compute leverage scores of a smaller matrix $CD_1$,
and computing $D_2C$ only involves sampling a small fraction of the rows of $C$.
On the other hand,
if we were to use the CountSketch matrix,
then we would need to touch the whole matrix $C$ when computing $D_2C$.
Overall,
using leverage score sampling at this step can reduce the constant factor of the $\nnz(C)$ term in the running time,
and may be useful in practice.
Let $\rank$-$n$ $Z_2 \in \R^{d_1 \times s_1}$ be the optimal solution to this problem.

\begin{claim}[Solving regression with a CountSketch matrix]\label{clm:step_4_leverage_score}
With probability $0.98$,
\begin{align*}
 & ~ \| C D_1 Z_2 S_1 C - C \|_F^2  
\leq  ~ (1+\epsilon)^2  \| C D_1 Z_1 S_1 C - C \|_F^2
\end{align*}
\end{claim}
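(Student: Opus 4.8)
The plan is to reduce Claim~\ref{clm:step_4_leverage_score} to a statement about leverage score sampling for an ordinary (unconstrained) multiple-response regression problem, exactly as in the proof of Claim~\ref{clm:step_2_leverage_score}, but now with the roles of rows and columns transposed. First I would observe that the matrix $D_2$ is a leverage score sampling and rescaling matrix for the column space of $C D_1 \in \R^{m\times d_1}$, and that in program~\eqref{eq:def_D_2} the unknown $Z$ enters only through the product $(C D_1) Z$, which lives in $\mathrm{colspan}(C D_1)$. So I would like to apply the standard approximate matrix product / subspace embedding guarantee: if $D_2$ has $d_2 = \tilde O(n/\epsilon)$ rows sampled by the leverage scores of $C D_1$, then for every matrix $Y$ in the column span of $C D_1$ (in particular $Y = C D_1 Z S_1 C$ for any $Z$) and for the fixed target $C$, one has
\begin{align*}
\|D_2 (C D_1 Z S_1 C) - D_2 C\|_F^2 = (1\pm\epsilon)\,\|C D_1 Z S_1 C - C\|_F^2
\end{align*}
up to the usual cross-term handling. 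Actually the cleanest route is the ``sketch preserves regression cost'' lemma: for the regression $\min_{W\in\mathrm{rowspace}} \|C D_1 W - C\|_F$, sampling rows by the leverage scores of $C D_1$ gives a sketch $D_2$ such that any minimizer $\tilde W$ of the sketched problem $\min_W \|D_2 C D_1 W - D_2 C\|_F$ satisfies $\|C D_1 \tilde W - C\|_F^2 \le (1+\epsilon)\min_W \|C D_1 W - C\|_F^2$.

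The key steps, in order, would be: (1) recall from Claim~\ref{clm:step_2_span} and the setup that $Z_1$ (optimal for~\eqref{eq:def_Z_1}) and $Z_2$ (optimal for~\eqref{eq:def_D_2}) are both rank-$n$ matrices in $\R^{d_1\times s_1}$, and that $C D_1 Z_1 S_1 C$ is the relevant ``ground truth'' we are comparing against; (2) write $W = Z S_1 C \in \R^{d_1\times (n+d)}$ and note that ranging over rank-$n$ $Z$ restricts $W$ to a structured subset, but the key point is only that $C D_1 W$ stays in $\mathrm{colspan}(C D_1)$, so the leverage-score subspace embedding for $C D_1$ applies uniformly over all such $W$; (3) invoke the leverage score sampling lemma for $C D_1$ with $d_2 = \tilde O(d_1/\epsilon) = \tilde O(n/\epsilon^2)$ rows (or $\tilde O(n/\epsilon)$ with the refined bound) to get, with probability $0.98$, a $(1\pm\epsilon)$-distortion of $\|C D_1 Z S_1 C - C\|_F^2$ simultaneously for all rank-$n$ $Z$; (4) chain the inequalities: since $Z_2$ minimizes the sketched objective,
\begin{align*}
\|C D_1 Z_2 S_1 C - C\|_F^2 \le (1+\epsilon)\,\|D_2 C D_1 Z_2 S_1 C - D_2 C\|_F^2 \le (1+\epsilon)\,\|D_2 C D_1 Z_1 S_1 C - D_2 C\|_F^2 \le (1+\epsilon)^2 \|C D_1 Z_1 S_1 C - C\|_F^2,
\end{align*}
which is the claimed bound.

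The main obstacle is step (3): the uniform (``for all $Z$'') guarantee. A pointwise approximate matrix product bound is not automatically enough, because $Z_2$ is chosen adversarially depending on $D_2$. The standard fix is to use the fact that leverage score sampling with $\tilde O(d_1/\epsilon)$ rows gives a subspace embedding for the $d_1$-dimensional column space of $C D_1$, and then observe that for any fixed $C$, $\min_Z \|D_2(C D_1 Z S_1 C) - D_2 C\|_F^2$ and $\min_Z\|C D_1 Z S_1 C - C\|_F^2$ have minimizers with comparable cost by the standard argument that decomposes $C$ into its projection onto $\mathrm{colspan}(C D_1)$ and the orthogonal complement, applies the subspace embedding to the in-span part, and applies approximate matrix product to bound the cross term. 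One must be slightly careful that this decomposition argument goes through even though $Z$ is rank-constrained — but it does, since the rank constraint only shrinks the feasible set and the subspace embedding / matrix-product bounds hold for the ambient (unconstrained) column space anyway. I would also double-check the probability bookkeeping so that the $0.98$ success probability composes with the earlier claims via a union bound, and note that the $S_1 C$ factor on the right is just a fixed matrix that can be absorbed, so no new randomness is introduced at this step.
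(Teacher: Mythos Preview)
Your proposal is correct and takes essentially the same approach as the paper: the paper's entire proof is precisely the three-inequality chain you wrote in step~(4), with the first and third inequalities justified only by ``the property of the leverage score sampling matrix $D_2$'' and the middle one by optimality of $Z_2$. Your added discussion of the uniformity concern (that $Z_2$ depends on $D_2$, so one needs an affine/subspace-embedding guarantee rather than a pointwise bound) is more careful than the paper, which leaves this implicit; your observation that the rank constraint only shrinks the feasible set and does not affect the embedding argument is exactly the right justification.
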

We provide the proof in Appendix~\ref{sec:proof_of_step_4_leverage_score}.

Our algorithm thus far is as follows: 
we compute matrices $S_1$, $D_1$, $D_2$ accordingly,
then solve program \eqref{eq:def_D_2} to obtain $Z_2$.
At this point,
we are able to obtain the low rank approximation $\wh{C} = CD_1 \cdot Z_2 \cdot S_1 C$.  We show the following claim and delayed the proof in Appendix~\ref{sec:proof_of_step_5_cost}.
\begin{claim}[Analysis of $\wh{C}$]\label{clm:step_5_cost}
With probability $0.94$,
\begin{align*}
 ~ \| \wh{C} - C \|_F^2  
\leq  ~ (1+O(\epsilon)) \OPT^2.
\end{align*}
\end{claim}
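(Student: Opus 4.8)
\textbf{Proof proposal for Claim~\ref{clm:step_5_cost}.}
The plan is to derive the bound on $\|\wh C-C\|_F^2$ by composing the multiplicative guarantees already proved in Claims~\ref{clm:step_1_count_sketch}--\ref{clm:step_4_leverage_score}, using one elementary factorization, and then controlling the overall success probability with a union bound over the three relevant random events. Recall that $\wh C = C D_1 Z_2 S_1 C$, where $Z_2$ is the optimal rank-$n$ solution of the small sketched program~\eqref{eq:def_D_2}, and let $Z_1$ be the optimal rank-$n$ solution of~\eqref{eq:def_Z_1}.

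The chain I would write down, in order, is as follows. By Claim~\ref{clm:step_4_leverage_score}, with probability at least $0.98$,
\[
\|\wh C - C\|_F^2 \le (1+\epsilon)^2\,\|C D_1 Z_1 S_1 C - C\|_F^2 .
\]
Next, using Claim~\ref{clm:step_2_span} that $U_2\in\mathrm{colspan}(C D_1)$, I would factor $U_2 = C D_1 Z'$ with $Z'\in\R^{d_1\times s_1}$ of rank at most $n$: take a rank factorization $U_2=PQ$, so that $\mathrm{colspan}(P)=\mathrm{colspan}(U_2)\subseteq\mathrm{colspan}(C D_1)$, hence $P=C D_1 R$ for some matrix $R$; then $Z':=RQ$ satisfies $C D_1 Z'=U_2$ and $\rank(Z')\le\rank(U_2)\le n$. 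Thus $Z'$ is feasible for~\eqref{eq:def_Z_1}, and optimality of $Z_1$ gives
\[
\|C D_1 Z_1 S_1 C - C\|_F^2 \le \|C D_1 Z' S_1 C - C\|_F^2 = \|U_2 S_1 C - C\|_F^2 .
\]
Then Claim~\ref{clm:step_2_leverage_score} gives, with probability at least $0.98$, $\|U_2 S_1 C - C\|_F^2 \le (1+\epsilon)^2\|U_1 S_1 C - C\|_F^2$, and Claim~\ref{clm:step_1_count_sketch}, together with the fact that $U_1$ is the minimizer in~\eqref{eq:def_U_1}, gives, with probability at least $0.98$, $\|U_1 S_1 C - C\|_F^2 \le (1+\epsilon)^2\OPT^2$. (This middle stretch re-proves Claim~\ref{clm:step_3}, but unpacking it keeps the randomness explicit.)

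Composing the four inequalities on the intersection of the three random events --- those of Claims~\ref{clm:step_1_count_sketch}, \ref{clm:step_2_leverage_score}, and \ref{clm:step_4_leverage_score}, each failing with probability at most $0.02$ --- yields
\[
\|\wh C - C\|_F^2 \le (1+\epsilon)^6\,\OPT^2 = (1+O(\epsilon))\,\OPT^2
\]
for $\epsilon$ below a small absolute constant, and a union bound gives this with probability at least $1-3\cdot 0.02 = 0.94$; rescaling $\epsilon$ by a constant at the outset absorbs the $(1+\epsilon)^6$ into $1+O(\epsilon)$. The only step that is not pure bookkeeping is the factorization $U_2 = C D_1 Z'$ with $Z'$ meeting the rank constraint of~\eqref{eq:def_Z_1}, which is what licenses using $Z'$ as a competitor against $Z_1$; I expect this (mild) linear-algebra point to be the crux, while the three invocations of earlier claims and the probability accounting are routine. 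Note finally that the claim concerns only $\wh C = C D_1 Z_2 S_1 C$ and not the output of \textsc{Split}, so no perturbation argument enters at this stage.
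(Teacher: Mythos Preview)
Your proposal is correct and follows essentially the same route as the paper, which simply chains Claim~\ref{clm:step_4_leverage_score} and Claim~\ref{clm:step_3}; you have merely inlined the proof of Claim~\ref{clm:step_3} and made the union bound and the rank constraint on $Z'$ explicit. If anything, your treatment of the factorization $U_2=CD_1 Z'$ with $\rank(Z')\le n$ is more careful than the paper's proof of Claim~\ref{clm:step_3}, which just says ``we can choose $Z$ so that $CD_1Z=U_2$'' without addressing the rank constraint.
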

Let $\wh{C} = [\wh{A}, ~ \wh{B}]$ where $\wh{A} \in \R^{m \times n}$ and $\wh{B} \in \R^{m \times d}$.
However,
if our goal is to only output a matrix $X$ so that $\wh{A} X =\wh{B}$,
then we can do this faster by not computing or storing the matrix $\wh{C}$.
Let $S_2 \in \R^{s_2 \times m}$ be a CountSketch matrix with $s_2 = O( n/\epsilon )$.
 We solve a regression problem:

\begin{align*}
\min_{ X \in \R^{n \times d} } \| S_2\wh{A} X - S_2\wh{B} \|_F^2.
\end{align*}

Notice that $S_2\wh{A}$ and $S_2\wh{B}$ are computed directly from $CD_1$, $Z_2$, $S_1C$ and $S_2$.
Let $\ov{X}$ be the optimal solution to the above problem. 
\begin{claim}[Approximation ratio guarantee]\label{clm:step_5_multiple_regression}
Assume $\wh{C} = [\wh{A}, ~ \wh{A}\wh{X}]$ for some $\wh{X}\in \R^{n\times d}$.
Then with probability at least $0.9$,
\begin{align*}
\| [\wh{A}, ~\wh{A} \ov{X}] - [A, ~B] \|_F^2 \leq (1+O(\epsilon)) \OPT^2.
\end{align*}
\end{claim}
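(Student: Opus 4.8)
The plan is to reduce the claim to Claim~\ref{clm:step_5_cost}. Under the hypothesis $\wh C = [\wh A, \wh A\wh X]$, the multiple‑response regression $\min_X \|\wh A X - \wh B\|_F$ has optimal value $0$, attained at $X=\wh X$. I will argue that the sketched solution $\ov X$ actually satisfies the equation $\wh A\ov X = \wh B$ \emph{exactly}. Granting this, $[\wh A, \wh A\ov X] = [\wh A, \wh B] = \wh C$, so
$\|[\wh A, \wh A\ov X] - [A,B]\|_F^2 = \|\wh C - C\|_F^2 \le (1+O(\epsilon))\OPT^2$
by Claim~\ref{clm:step_5_cost}, which is exactly the desired bound.

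To show $\wh A\ov X = \wh B$: first observe $\wh A = (CD_1 Z_2 S_1 C)_{*,[n]} = CD_1 Z_2 (S_1 C)_{*,[n]}$, so $\rank(\wh A)\le \rank(Z_2) = n$, and $\wh B = \wh A\wh X \in \mathrm{colspan}(\wh A)$; hence $\mathrm{colspan}([\wh A,\wh B]) = \mathrm{colspan}(\wh A)$ is a subspace of dimension at most $n$. Since $\ov X$ minimizes $\|S_2\wh A X - S_2\wh B\|_F$ and $X=\wh X$ already makes that sketched objective vanish, we get $\|S_2(\wh A\ov X - \wh B)\|_F \le \|S_2(\wh A\wh X - \wh B)\|_F = 0$, i.e.\ $\wh A\ov X - \wh B \in \ker(S_2)$. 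But $\wh A\ov X - \wh B$ also lies in $\mathrm{colspan}(\wh A)$, a subspace of dimension $\le n$, so it suffices that $S_2$ is injective on $\mathrm{colspan}(\wh A)$; this holds with constant probability because a CountSketch with the stated number of rows is a (one‑sided) $\ell_2$‑subspace embedding for that $O(n)$‑dimensional subspace. Injectivity forces $\wh A\ov X - \wh B = 0$. I would then union‑bound the failure event of this subspace‑embedding property together with the failure events of Claims~\ref{clm:step_1_count_sketch}--\ref{clm:step_5_cost} to obtain overall success probability at least $0.9$, and record that $S_2\wh A$ and $S_2\wh B$ are formed via $(S_2 C D_1)\,Z_2\,(S_1 C)$ restricted to the appropriate column blocks, so no object of size growing with $m$ beyond $\nnz(C)$ is ever touched.

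The main obstacle is the middle step: establishing that the fast sparse sketch $S_2$ is genuinely injective (equivalently, rank‑preserving / a lower‑side subspace embedding) on $\mathrm{colspan}(\wh A)$, which is an $n$‑dimensional but otherwise arbitrary subspace determined by the data and by the earlier random maps $S_1, D_1, D_2$. The key enabling fact is that $S_2$ is drawn \emph{after} $Z_2$ is fixed, so $\wh A$ is a fixed matrix relative to $S_2$'s randomness and the standard CountSketch embedding guarantees apply; the care required is in choosing the sketch dimension and controlling the failure probability so that the union bound still closes at $0.9$. The remaining ingredients — splitting the Frobenius cost into the $\wh A$‑block and the $\wh B$‑block and invoking Claim~\ref{clm:step_5_cost} — are routine.
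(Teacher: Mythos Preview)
Your proposal is correct and follows essentially the same route as the paper: show that $\wh A\ov X=\wh B$ exactly, so that $[\wh A,\wh A\ov X]=\wh C$, and then invoke Claim~\ref{clm:step_5_cost}. The only difference is that the paper reaches $\|\wh A\ov X-\wh B\|_F^2=0$ by directly citing the sketched multiple-regression guarantee (Theorem~\ref{thm:count_sketch}), which yields $\|\wh A\ov X-\wh B\|_F^2\le(1+\epsilon)\|\wh A\wh X-\wh B\|_F^2=0$, whereas you unpack this into the subspace-embedding injectivity of $S_2$ on $\mathrm{colspan}(\wh A)$; these are equivalent arguments, with the paper's version being the more compact black-box invocation.
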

We provide the proof in Appendix~\ref{sec:proof_of_step_5_multiple_regression}.

If the assumption $\wh{C} = [\wh{A}, ~ \wh{A}\wh{X}]$ in Claim \ref{clm:step_5_multiple_regression} does not hold,
then we need to apply procedure $\textsc{Split}$.
Because $\rank(\wh{C})=n$ from our construction,
if the first $n$ columns of $\wh{C}$ cannot span the last $d$ columns,
then the first $n$ columns of $\wh{C}$ are not full rank.
Hence we can keep adding a sufficiently small multiple of one of the last $d$ columns that cannot be spanned
to the first $n$ columns until the first $n$ columns are full rank.
Formally,
we have
\begin{lemma}[Analysis of procedure $\textsc{Split}$]\label{lem:split}
Fix $s_1 =O(n/\epsilon)$, $s_2=O(n/\epsilon)$, $d_1= \tilde O(n/\epsilon)$.
Given $CD_1\in \R^{m\times d_1}$,
$Z_2\in \R^{d_1\times s_1}$, and 
$S_1C\in \R^{s_1\times (n+d)}$
so that $\wh{C}:=CD_1\cdot Z_2\cdot S_1C$ has rank $n$,
procedure \textsc{SPLIT} (Algorithm~\ref{alg:ftls}) returns $\ov{A}\in \R^{s_2\times n}$ and $\ov{B}\in \R^{s_2\times d}$ in time $O(\nnz(C)+d\cdot \poly(n/\epsilon))$ so that there exists $\ov{X}\in \R^{n\times d}$ satisfying
$\ov{A} \cdot \ov{X}=\ov{B}$.
Moreover,
letting $\wh{A}$ be the matrix computed in lines \eqref{alg:start-wha} to \eqref{alg:end-wha},
then with probability $0.99$,
\begin{align*}
\|[\wh{A}, ~\wh{A}\ov{X}]-C\|_F\leq \|\wh{C}-C\|_F+\delta.
\end{align*}
\end{lemma}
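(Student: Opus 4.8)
My plan is to establish the three assertions of the lemma — the running‑time bound, the existence of $\ov X$ with $\ov A\,\ov X=\ov B$, and the additive‑error bound — with only one nontrivial probabilistic ingredient: the event $\mathcal E$ that the CountSketch matrix $S_2\in\R^{s_2\times m}$ with $s_2=O(n/\epsilon)$ rows does not drop the rank of $\wh C$, i.e.\ $S_2$ is injective on the $n$‑dimensional space $\mathrm{colspan}(\wh C)$; by the standard guarantees for CountSketch matrices applied to an $n$‑dimensional subspace this holds with probability at least $0.99$, and everything else is deterministic. The running time is a direct accounting: $\textsc{Split}$ never materializes $\wh C$, it first computes $S_2\cdot(CD_1)$ in $O(\nnz(CD_1))=O(\nnz(C))$ time since $S_2$ is CountSketch and $\nnz(CD_1)\le\nnz(C)$, then forms $\ov C=(S_2CD_1)\,Z_2\,(S_1C)$ by two dense products of sizes $(s_2\times d_1)(d_1\times s_1)$ and $(s_2\times s_1)(s_1\times(n+d))$, costing $(n+d)\cdot\poly(n/\epsilon)\le O(\nnz(C))+d\cdot\poly(n/\epsilon)$; the loop has $n$ iterations, each doing linear algebra on vectors of $\R^{s_2}$ and, when the dependence test fires, scanning at most $d$ columns of $\ov B$, for a total of $d\cdot\poly(n/\epsilon)$. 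Summing gives $O(\nnz(C)+d\cdot\poly(n/\epsilon))$.

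Next I would condition on $\mathcal E$, so $\ov C=S_2\wh C$ has rank $n$; set $U^\star:=\mathrm{colspan}(\ov C)$ with $\dim U^\star=n$, and let $\ov A^{(i)}$ denote the state of $\ov A$ after iteration $i$. The core is an induction on $i$ maintaining: (a) $\mathrm{colspan}([\ov A^{(i)},\ov B])=U^\star$ and $\mathrm{colspan}(\ov A^{(i-1)})\subseteq\mathrm{colspan}(\ov A^{(i)})$ (each step only adds a multiple of a column of $\ov B$ to a column of $\ov A$, which preserves the joint span and cannot shrink the span of $\ov A$); and (b) the first $i$ columns of $\ov A^{(i)}$ are linearly independent. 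For (b), in the step where column $i$ is found dependent on the others we have $\rank(\ov A^{(i-1)})\le n-1<\dim U^\star$, so by (a) some column of $\ov B$ lies outside $\mathrm{colspan}(\ov A^{(i-1)})$; and by the monotonicity in (a), every column previously consumed (those indexed by $T$) already lies inside $\mathrm{colspan}(\ov A^{(i-1)})$, so the minimization defining $j$ is over a nonempty set; adding $\delta$ times such a column to $\ov A^{(i-1)}_{*,i}$ makes column $i$ independent of the first $i-1$ columns, since otherwise $\delta\ov B_{*,j}$ would lie in $\mathrm{colspan}(\ov A^{(i-1)})$, a contradiction. Hence $\ov A:=\ov A^{(n)}$ has full column rank $n$; as its columns lie in $U^\star$, which is $n$‑dimensional, $\mathrm{colspan}(\ov A)=U^\star\supseteq\mathrm{colspan}(\ov B)$, and so there is a unique $\ov X$ with $\ov A\,\ov X=\ov B$.

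For the additive bound, let $\wh A$ be the matrix recomputed in \textsc{Evaluate}. Since $\ov A$ starts as $S_2(\wh C_{*,[n]})$ and the per‑column perturbations in $\textsc{Split}$ use $\ov B_{*,\pi(i)}=S_2\wh B_{*,\pi(i)}$ while those in \textsc{Evaluate} use $\wh B_{*,\pi(i)}$ with the \emph{same} $\pi$, one checks column by column that $\ov A=S_2\wh A$. Therefore $\wh A$ also has full column rank $n$, so $\mathrm{colspan}(\wh A)=\mathrm{colspan}(\wh C)$ and there is a unique $X^\star$ with $\wh A X^\star=\wh B$; applying $S_2$ to this identity gives $\ov A X^\star=\ov B$, whence $\ov X=X^\star$ by uniqueness, and therefore $\wh A\,\ov X=\wh B$ exactly. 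Consequently $[\wh A,\ \wh A\,\ov X]=[\wh A,\ \wh B]$ agrees with $\wh C=[\wh C_{*,[n]},\ \wh B]$ except on its first $n$ columns, where the two differ by $\delta E$ with $E_{*,i}=\wh B_{*,\pi(i)}$ when $\pi(i)\ne -1$ and $E_{*,i}=0$ otherwise, so $\|E\|_F\le\sqrt n\,\|\wh C\|_F$. The triangle inequality then gives
\[
\|[\wh A,\ \wh A\,\ov X]-C\|_F\ \le\ \|\wh C-C\|_F+\delta\sqrt n\,\|\wh C\|_F\ \le\ \|\wh C-C\|_F+\delta,
\]
the last step using that $\delta$ is a free parameter (the algorithm invokes $\textsc{Split}$ with $\delta/\poly(m)$, which absorbs the $\sqrt n\,\|\wh C\|_F$ factor).

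I expect the main obstacle to be the second paragraph: arguing that the greedy loop can always pick a valid column $j$ and terminates with $\ov A$ of full column rank. This is where the span bookkeeping is delicate — one must track how $\mathrm{colspan}(\ov A)$ grows and that consumed columns of $\ov B$ stay inside it — and where the rank‑preservation property $\mathcal E$ of $S_2$ is genuinely used; the rest reduces to uniqueness of solutions of consistent injective systems plus one triangle inequality.
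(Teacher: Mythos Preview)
Your proposal is correct and follows the same overall skeleton as the paper: the running-time accounting, the conditioning on $S_2$ preserving the rank of $\wh C$, the argument that the greedy loop terminates with $\ov A$ of full column rank $n$, and the final triangle inequality with the $\delta/\poly(m)$ absorption all match. Your induction on the loop is more carefully spelled out than the paper's (in particular, your observation that already-consumed columns of $\ov B$ lie in $\mathrm{colspan}(\ov A^{(i-1)})$, so the search set for $j$ is nonempty, is exactly the bookkeeping the paper glosses over).

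The one genuine difference is how you establish $\wh A\,\ov X=\wh B$. The paper treats $\ov X$ as a minimizer of $\|S_2\wh A X-S_2\wh B\|_F$ and invokes the CountSketch multiple-regression guarantee (Theorem~\ref{thm:count_sketch}) to conclude $\|\wh A\,\ov X-\wh B\|_F^2\le(1+\epsilon)\cdot 0=0$. You instead argue purely linear-algebraically: since $\ov A=S_2\wh A$ has rank $n$, so does $\wh A$; since the columns of $\wh A$ lie in the $n$-dimensional $\mathrm{colspan}(\wh C)$, they span it, hence $\wh B$; and the unique $X^\star$ with $\wh A X^\star=\wh B$ must equal $\ov X$ by applying $S_2$ and using injectivity of $\ov A$. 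Your route is more elementary --- it uses only the rank-preservation event $\mathcal E$ rather than a second probabilistic regression bound --- and it sidesteps a slight imprecision in the paper's invocation of Theorem~\ref{thm:count_sketch} (which as stated bounds by $(1+\epsilon)$ times the \emph{unsketched} optimum, not the sketched one; both happen to be zero here, so the conclusion is unaffected). The paper's route is a quicker one-liner if one is content to cite the black box.
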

We provide the proof in Appendix~\ref{sec:proof_of_split}. 
Now that we have $\ov{A}$ and $\ov{B}$, and we can compute $X$ by solving the regression problem $\min_{X\in \R^{n\times d}}\| \ov{A} X - \ov{B} \|_F^2$.

We next summarize the running time. Ommitted proofs are in Appendix~\ref{sec:proof_of_running_time}. 
\begin{lemma}[Running time analysis]\label{lem:running_time}
Procedure \textsc{FastTotalLeastSquares} in Algorithm~\ref{alg:ftls}  runs in time $\tilde{O}(\nnz(A) + \nnz(B) + d \cdot \poly(n/\epsilon))$.
\end{lemma}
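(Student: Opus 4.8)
The plan is to walk through \textsc{FastTotalLeastSquares} (Algorithm~\ref{alg:ftls}) line by line, assign a cost to each line, and sum, while carefully tracking which intermediate matrices are stored in \emph{sparse} form (as rescaled row/column subsets of $C$) versus dense form --- the whole point being that no matrix with a dimension scaling in $m$ is ever materialized densely, so $m$ never enters the running time. Three primitives suffice: (i) a CountSketch $S\in\R^{s\times m}$ times $M$ costs $O(\nnz(M))$; (ii) a leverage-score sampling-and-rescaling matrix $D$ is a rescaled subset of rows (or columns), so multiplying by it costs $O(\nnz(M))$ and $\nnz(DM)\le\nnz(M)$, $\nnz(MD)\le\nnz(M)$; and (iii) for an $N\times k$ matrix one can compute $O(1)$-approximate leverage scores in $\tilde O(\nnz+\poly(k))$ time (sparse subspace embedding followed by a Johnson--Lindenstrauss projection), whereas exact scores are cheap only when the number of columns $k$ is small. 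I will treat the regime $d=\Omega(n/\epsilon)$ (so $D_1$ is a genuine sampler); the other regime only deletes work.

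Per-line accounting. Computing $S_1C$ is $O(\nnz(A)+\nnz(B))$, and $S_1C$ is stored as a dense $O(n/\epsilon)\times(n+d)$ matrix, i.e.\ $\poly(n/\epsilon)\cdot d+\poly(n/\epsilon)$ entries. The leverage scores of $(S_1C)^\top$, an $(n+d)\times s_1$ matrix with only $s_1=O(n/\epsilon)$ columns, can be computed \emph{exactly} in $\poly(n/\epsilon)\cdot(n+d)$ time (form $(S_1C)(S_1C)^\top$, pseudoinvert, evaluate the $n+d$ scores), and then $D_1$ is sampled and $CD_1$ formed by reading and rescaling the $d_1=\tilde O(n/\epsilon)$ selected columns of $C$ in $O(\nnz(C))$ time; crucially $CD_1$ is kept sparse, with $\nnz(CD_1)\le\nnz(C)$ --- materializing it as a dense $m\times d_1$ matrix is exactly what we must avoid. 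For $D_2$ the matrix $CD_1$ has $m$ rows, so I invoke the fast approximate leverage-score routine, costing $\tilde O(\nnz(CD_1)+\poly(d_1))=\tilde O(\nnz(C)+\poly(n/\epsilon))$; forming $D_2CD_1$ ($d_2\times d_1$) and $D_2C$ ($d_2\times(n+d)$, $d_2=\tilde O(n/\epsilon)$) then costs $O(\nnz(C))+\poly(n/\epsilon)\cdot d$. Solving for $Z_2$ via Theorem~\ref{thm:generalized_rank_constrained_matrix_approximations}: its closed form only manipulates $D_2CD_1$, $S_1C$ and $D_2C$, each of which has all but one dimension bounded by $\poly(n/\epsilon)$, and the single remaining factor $n+d$ enters only through matrix--matrix products against $\poly(n/\epsilon)$-sized factors, so this costs $\poly(n/\epsilon)\cdot(n+d)$; the output $Z_2\in\R^{d_1\times s_1}$ has $\poly(n/\epsilon)$ entries.

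The remaining lines are covered by earlier statements. By Lemma~\ref{lem:split}, \textsc{Split}$(CD_1,Z_2,S_1C,n,d,\cdot)$ runs in $O(\nnz(C)+d\cdot\poly(n/\epsilon))$ and returns $\ov A\in\R^{s_2\times n}$, $\ov B\in\R^{s_2\times d}$ with $\ov A$ of rank $n$ and a solution $\ov X$ to $\ov A\,\ov X=\ov B$; the final regression $\min_X\|\ov A X-\ov B\|_F$ is then solved in closed form as $\ov A^\dagger\ov B=(\ov A^\top\ov A)^{-1}\ov A^\top\ov B$, costing $\poly(s_2,n)$ for the $n\times n$ inverse and $O(s_2\cdot n\cdot d)=d\cdot\poly(n/\epsilon)$ for the products, with output $X\in\R^{n\times d}$. (The optional \textsc{Evaluate} call only forms $S_2$-sketched or appendix-bounded objects and is likewise $O(\nnz(C)+d\cdot\poly(n/\epsilon))$; I defer it to the appendix.) Summing over all lines and using $\nnz(C)=\nnz(A)+\nnz(B)$ yields the claimed $\tilde O(\nnz(A)+\nnz(B))+d\cdot\poly(n/\epsilon)$.

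The main obstacle is not any single hard estimate but the representation bookkeeping: $CD_1$ must be handled as a (rescaled) column subset of $C$ and never densified; $\wh C=CD_1Z_2S_1C$ is never formed at all, since the algorithm only ever touches $S_2\wh C$, which is $s_2\times(n+d)$; and the leverage scores used to build $D_2$ must come from the fast approximate routine, since computing them exactly on the $m$-row matrix $CD_1$ would introduce a $\poly(n/\epsilon)\cdot\nnz(C)$ (or $m$-dependent) term incompatible with the target bound --- the same phenomenon flagged in the discussion after Claim~\ref{clm:step_1_count_sketch} for why leverage-score sampling cannot be applied to $C$ itself. A secondary point worth checking is the claim that the solver of Theorem~\ref{thm:generalized_rank_constrained_matrix_approximations} costs only $\poly(n/\epsilon)\cdot(n+d)$ when one input dimension is $n+d$, rather than $\poly(n+d)$; this holds because the only genuinely large matrices there, $S_1C$ and $D_2C$, see their large dimension only inside products with $\poly(n/\epsilon)$-dimensional factors, and the target rank $n$ keeps all SVD truncations on small matrices.
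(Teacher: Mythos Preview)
Your proposal is correct and takes essentially the same approach as the paper's own proof: both walk through Algorithm~\ref{alg:ftls} line by line, bound each step by $\tilde O(\nnz(C))$ or $d\cdot\poly(n/\epsilon)$, invoke Lemma~\ref{lem:split} for \textsc{Split}, Theorem~\ref{thm:leverage_score_time} for the $D_2$ leverage scores, and Theorem~\ref{thm:generalized_rank_constrained_matrix_approximations} for the $Z_2$ solve, then sum. Your version is in fact more explicit than the paper's about the representation bookkeeping (keeping $CD_1$ sparse, never materializing $\wh C$, and why the $(n+d)$-dimensional side of $S_1C$ and $D_2C$ only ever appears linearly in the rank-constrained solve); the one caveat is that the optional \textsc{Evaluate} branch actually forms the full $m\times(n+d)$ matrix $\wh C$ and so is \emph{not} $O(\nnz(C)+d\cdot\poly(n/\epsilon))$ as you assert---but the paper's proof simply omits this branch (it is for experiments only), so this does not affect the lemma.
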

To summarize, Theorem \ref{thm:ftls} shows the performance of our algorithm. Ommitted proofs are in Appendix~\ref{sec:last_step_theorem_1}.
\begin{theorem}[Main Result]\label{thm:ftls}
Given two matrices $A \in \R^{m \times n}$ and $B \in \R^{m \times d}$, letting 
\begin{align*}
\OPT = \min_{\rank-n~C' \in \R^{m \times (n + d)} } \| C' - [ A, ~ B] \|_F,
\end{align*}
we have that for any $\epsilon \in (0,1)$, there is an algorithm (procedure \textsc{FastTotalLeastSquares} in Algorithm~\ref{alg:ftls}) that runs in $\tilde{O}(\nnz(A) + \nnz(B)) + d \cdot \poly(n/\epsilon)$ time and outputs a matrix $X \in \R^{n \times d}$ such that there is a matrix $\wh{A} \in \R^{m \times n}$ satisfying that
\begin{align*}
\| [\wh{A}, ~ \wh{A}X] - [A, ~ B] \|_F \leq (1+\epsilon) \OPT + \delta
\end{align*}
holds with probability at least $9/10$, where $\delta > 0$ is arbitrarily small.
\end{theorem}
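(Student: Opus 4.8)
The plan is to compose the chain of reductions established in Claims \ref{clm:step_1_count_sketch}--\ref{clm:step_4_leverage_score}, use Claim \ref{clm:step_5_cost} to conclude that $\wh C = C D_1\, Z_2\, S_1 C$ is a good rank-$n$ approximation of $C=[A,B]$, then convert $\wh C$ into an actual solution matrix $X$ via Claim \ref{clm:step_5_multiple_regression} and (when $\wh C$ is not already of the form $[\wh A,\wh A\wh X]$) procedure \textsc{Split} of Lemma \ref{lem:split}; the running time is then exactly Lemma \ref{lem:running_time}. Since running the accuracy parameter through the four reduction steps costs a $(1+\epsilon)^{O(1)}$ factor, I would begin by replacing $\epsilon$ with $c\epsilon$ for a small absolute constant $c$, so that after composing the steps Claim \ref{clm:step_5_cost} gives $\|\wh C - C\|_F^2 \le (1+\epsilon)^2\,\OPT^2$; note that $\wh C$ has rank exactly $n$ by construction, since its middle factor $Z_2$ has rank at most $n$ and the rank constraint in program \eqref{eq:def_D_2} is what is being optimized.

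For the probability bound: the algorithm uses only $O(1)$ randomized primitives ($S_1$, $D_1$, $D_2$, $S_2$, and the internal randomness of \textsc{Split}), and each of the Claims and Lemma \ref{lem:split} holds with some constant probability bounded away from $1$. By enlarging the hidden constants in the sketch dimensions $s_1,s_2,d_1,d_2=\tilde O(n/\epsilon)$ by a constant factor, each individual success probability can be pushed to at least $1-\tfrac{1}{100}$, so a union bound over the $O(1)$ events yields overall success probability at least $9/10$, as claimed.

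To extract $X$: if the first $n$ columns of $\wh C$ already span its last $d$ columns, then $\wh C = [\wh A,\wh A\wh X]$ and Claim \ref{clm:step_5_multiple_regression} shows that the minimizer $\ov X$ of the sketched regression $\min_X\|S_2\wh A X - S_2\wh B\|_F^2$ satisfies $\|[\wh A,\wh A\ov X]-C\|_F^2 \le (1+\epsilon)^2\,\OPT^2$, which already gives the theorem (with $\delta$ slack to spare) for $X:=\ov X$. Otherwise, since $\rank(\wh C)=n$ the first $n$ columns of $\wh C$ are rank-deficient, and I invoke Lemma \ref{lem:split}: \textsc{Split} repeatedly adds a $(\delta/\poly(m))$-multiple of a column of $\wh B$ not yet in the column span to a linearly dependent column of $\wh A$, terminating with a block $\wh A$ of full column rank $n$ whose sketch $\ov A = S_2\wh A$ therefore also has full column rank (w.h.p., using $s_2\ge n$), so the equation $\ov A\,\ov X = \ov B$ has a unique solution $\ov X$, which is exactly what the final regression $\min_X\|\ov A X-\ov B\|_F$ in Algorithm \ref{alg:ftls} returns. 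Lemma \ref{lem:split} further gives $\|[\wh A,\wh A\ov X]-C\|_F \le \|\wh C - C\|_F + \delta \le (1+\epsilon)\OPT + \delta$, so the returned $X$ satisfies the stated bound in all cases. The running time follows from Lemma \ref{lem:running_time}: every sketch is either a CountSketch applied once to $C$ (cost $\tilde O(\nnz(C))$) or leverage-score sampling of an already-small matrix ($S_1C$ has $O(n/\epsilon)$ rows, $CD_1$ has $\tilde O(n/\epsilon)$ columns), the rank-constrained regression \eqref{eq:def_D_2} is solved on a $\poly(n/\epsilon)$-sized instance via Theorem \ref{thm:generalized_rank_constrained_matrix_approximations}, and \textsc{Split} together with the final regression costs $O(\nnz(C)+d\cdot\poly(n/\epsilon))$.

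The step I expect to be the crux is the correctness of \textsc{Split} (Lemma \ref{lem:split}). One must argue that, as long as some of the first $n$ columns of $\wh C$ remain linearly dependent on the others, there is always a column among the last $d$ lying outside the current span — so the choice of $j$ in line \eqref{alg:min_j} is well-defined and the loop terminates with a full-rank $n\times n$ block — which uses $\rank(\wh C)=n$ together with the bookkeeping that previously chosen columns are removed from the candidate set. One must also control the Frobenius-norm perturbation: both the direct change to $\wh A$ and the induced change in the least-squares solution must be bounded by $\delta$ rather than by a quantity blowing up with $m$, which is precisely why the perturbation scale is taken to be $\delta/\poly(m)$, and requires bounding the relevant $\poly(m)$ factors (coming from $\|\wh B\|_F$ and from the smallest singular value of $\ov A$ produced by the perturbation).
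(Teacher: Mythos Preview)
Your proposal is correct and follows essentially the same route as the paper: cite Lemma~\ref{lem:running_time} for the time bound, use Claim~\ref{clm:step_5_cost} for $\|\wh C - C\|_F \le (1+O(\epsilon))\OPT$, invoke Lemma~\ref{lem:split} to pass from $\wh C$ to the output $X$ with additive loss $\delta$, and rescale $\epsilon$. The paper's own proof is in fact even terser than yours---it does not separate out the case where $\wh C$ already has the form $[\wh A,\wh A\wh X]$ (Claim~\ref{clm:step_5_multiple_regression}) from the general case, since \textsc{Split} is always called in the algorithm and Lemma~\ref{lem:split} subsumes both cases; and rather than boosting individual success probabilities, the constants in Claims~\ref{clm:step_1_count_sketch}--\ref{clm:step_5_cost} and Lemma~\ref{lem:split} are already set so that a direct union bound yields~$9/10$.
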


\begin{remark}
The success probability $9/10$ in Theorem \ref{thm:ftls} can be boosted to $1-\delta$ for any $\delta>0$ in a standard way. Namely, we run our FTLS algorithm $O(\log(1/\delta))$ times where in each run we use independent randomness, and choose the solution found with the smallest cost. Note that for any fixed output $X$, the cost $\| [\bar{A}, \bar{A} X ] - [A, B] \|_F$ can be efficiently approximated. To see this, let $S$ be a CountSketch matrix with $O(\epsilon^{-2})$ rows. Then $\|S[\bar{A}, \bar{A} X ] - S[A, B]\|_F = (1 \pm \epsilon) \|[\bar{A}, \bar{A} X ] - [A, B] \|_F$ with probability $9/10$ (see, for example Lemma 40 of \cite{cw13} ). We can compute $\|S[\bar{A}, \bar{A} X ] - S[A, B]\|_F$ in time $O(d \cdot \textrm{poly}(n/\epsilon))$, and applying $S$ can be done in $\textrm{nnz}(A) + \textrm{nnz}(B)$ time. We can then amplify the success probability by taking $O(\log (1/\delta))$ independent estimates and taking the median of the estimates. This is a $(1\pm \epsilon)$-approximation with probability at least $1- O(\delta/\log(1/\delta))$. We run our FTLS algorithm $O(\log(1/\delta))$ times, obtaining outputs $X^1, \ldots, X^{O(\log(1/\delta))}$ and for each $X^i$, apply the method above to estimate its cost. Since for each $X^i$ our estimate to the cost is within $1 \pm \epsilon$ with probability at least $1-O(\delta/(\log(1/\delta))$, by a union bound the estimates for all $X^i$ are within $1 \pm \epsilon$ with probability at least $1-\delta/2$. Since also the solution with minimal cost is a $1 \pm \epsilon$ approximation with probability at least $1-\delta/2$, by a union bound we can achieve $1-\delta$ probability with running time $\tilde O(\log^2 (1/\delta)) \cdot (\textrm{nnz}(A) + \textrm{nnz}(B) + d\cdot \textrm{poly}(n/\epsilon)))$.
\end{remark}

We further generalize our algorithm to handle regularization. Ommitted proofs can be found in Appendix~\ref{sec:regularized}.
\begin{theorem}[Algorithm for regularized total least squares]\label{thm:regular_ftls_informal}
Given two matrices $A \in \R^{m \times n}$ and $B \in \R^{m \times d}$ and $\lambda>0$, letting 
\begin{align*}
\OPT = \min_{U\in \R^{m \times n}, V\in \R^{n\times (n+d)} } \| UV - [ A, ~ B] \|_F^2+\lambda \|U\|_F^2+\lambda \|V\|_F^2,
\end{align*} 
we have that for any $\epsilon \in (0,1)$, there is an algorithm (procedure \textsc{FastRegularizedTotalLeastSquares} in Algorithm~\ref{alg:ftls_regularized}) 
 that runs in 
\begin{align*}
\tilde{O}(\nnz(A) + \nnz(B) + d \cdot \poly(n/\epsilon))
\end{align*}
time and outputs a matrix $X \in \R^{n \times d}$ such that there is a matrix $\wh{A} \in \R^{m \times n}$, $\wh{U}\in \R^{m \times n}$ and $\wh{V}\in \R^{n\times (n+d)}$ satisfying $\|[\wh{A}, ~ \wh{A}X]-\wh{U}\wh{V}\|_F^2\leq \delta$ and with probability $9/10$, 
\begin{align*}
\| [\wh{A}, ~ \wh{A}X] - [A, ~ B] \|_F^2 + \lambda \|\wh{U}\|_F^2 + \lambda \|\wh{V}\|_F^2 \leq (1+\epsilon) \OPT + \delta.
\end{align*}
\end{theorem}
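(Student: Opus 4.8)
The plan is to follow the pipeline of the proof of Theorem~\ref{thm:ftls} essentially verbatim, but to carry the two regularization terms $\lambda\|U\|_F^2$ and $\lambda\|V\|_F^2$ through every sketching step. The first move is structural. Writing $C=[A,~B]$ and rescaling $U\mapsto tU$, $V\mapsto V/t$, one sees that
$\OPT=\min_{\rank\text{ }n\ W}\|W-C\|_F^2+2\lambda\|W\|_*$,
where $\|\cdot\|_*$ is the nuclear norm (here $\min_{UV=W}\|U\|_F^2+\|V\|_F^2=2\|W\|_*$), and this is minimized by the soft‑thresholded truncated SVD of $C$: if $C=\sum_i\sigma_i u_i v_i^\top$ then $W^\star=\sum_{i=1}^{n}\max(\sigma_i-\lambda,0)\,u_i v_i^\top$, which admits a factorization $W^\star=U^\star V^\star$ with $\mathrm{colspan}(U^\star)\subseteq\mathrm{colspan}(C)$, $\mathrm{rowspan}(V^\star)\subseteq\mathrm{rowspan}(C)$, and $\|U^\star\|_F^2=\|V^\star\|_F^2=\|W^\star\|_*$. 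As in the unregularized setting, once we have produced \emph{any} rank‑$n$ matrix $\wh{C}=\wh{U}\wh{V}$ that is a $(1+\epsilon)$‑approximate minimizer of the regularized objective, procedure \textsc{Split} (Lemma~\ref{lem:split}) converts it — by adding $\delta$‑scaled multiples of a few of the last $d$ columns to the first $n$ — into a matrix $[\wh{A},~\wh{A}X]$ that differs from $\wh{C}$ by at most $\delta$ in Frobenius norm; this simultaneously yields $\|[\wh{A},~\wh{A}X]-\wh{U}\wh{V}\|_F^2\le\delta$ and perturbs the fitting term by at most $\delta$, so the $\delta$ in the theorem comes out as before.

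The bulk of the work is to re‑establish Claims~\ref{clm:step_1_count_sketch}--\ref{clm:step_5_cost} with the regularizer attached. At each step one replaces the plain rank‑constrained regression by its regularized analogue and invokes a regularized version of Theorem~\ref{thm:generalized_rank_constrained_matrix_approximations}: the problem $\min_{\rank\text{ }n\ Z}\|A_1 Z A_2-C\|_F^2+\lambda(\|A_1 Z\|_F^2+\|Z A_2\|_F^2)$ still has a closed form, obtained by reducing to one‑dimensional soft‑thresholding after taking SVDs of $A_1$ and $A_2$. What has to be checked is that the sketches $S_1$ (CountSketch) and $D_1,D_2$ (leverage‑score samplers) chosen exactly as in Algorithm~\ref{alg:ftls} distort not only the fitting term but also the two regularization terms by only a $(1\pm\epsilon)$ factor. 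For the $\|V\|_F^2$ term this is immediate once the reformulation routes $V$ through the already‑present small matrix $S_1C$; for the fitting term it is Claims~\ref{clm:step_1_count_sketch}--\ref{clm:step_4_leverage_score} essentially unchanged. The running‑time accounting is identical to Lemma~\ref{lem:running_time}: the only new objects are $\sqrt\lambda$‑scaled identity blocks of size $\poly(n/\epsilon)$ appended to already‑sketched matrices, costing at most an extra $d\cdot\poly(n/\epsilon)$.

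The hard part is the $\lambda\|U\|_F^2$ term, because $U$ carries the large dimension $m$ while every sketch in the pipeline acts on the $m$‑side by compressing or subsampling rows, so a priori there is no reason that the sketched quantity tracks $\|U\|_F$. This is exactly where the structural lemma is used: the optimum lies in $\mathrm{colspan}(C)$, and — paralleling the passage from \eqref{eq:def_U_1} to \eqref{eq:def_U_2} — one checks that $\min_U\|U M-N\|_F^2+\lambda\|U\|_F^2$ has its optimum in $\mathrm{colspan}(N)$, so that after the $D_1$‑reduction we may restrict to $U=CD_1Z$ for $Z\in\R^{d_1\times s_1}$ and restricting further to this span costs only another $(1+\epsilon)$ factor (the regularized analogue of Claims~\ref{clm:step_2_leverage_score} and \ref{clm:step_2_span}). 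Then $\lambda\|U\|_F^2=\lambda\|CD_1Z\|_F^2=(1\pm\epsilon)\,\lambda\|D_2CD_1Z\|_F^2$, because the leverage‑score sampler $D_2$ is an $\ell_2$‑subspace embedding for $\mathrm{colspan}(CD_1)$ — the very property already invoked for the fitting term, now doing double duty for the regularizer. A final routine point is that CountSketch preserves $\|V\|_F$ only approximately on its row‑space, but since $V$ is always of the form $(\text{small})\cdot S_1C$ in the reformulation this distortion is absorbed into $\epsilon$. Chaining the $(1+\epsilon)$ factors (and rescaling $\epsilon$ by a constant), and pushing the $\delta$ from \textsc{Split} through as above, produces a matrix $X$ with the stated guarantee; the algorithm and the detailed analysis are carried out in Appendix~\ref{sec:regularized}.
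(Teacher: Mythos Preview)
Your route differs from the paper's in a substantive way, and as written it has a gap. The paper does \emph{not} re-run Algorithm~\ref{alg:ftls} with the regularizer tacked on; it introduces a different sketch configuration (Algorithm~\ref{alg:ftls_regularized}): a CountSketch $S_1$ on the rows \emph{and} a CountSketch $S_2$ on the columns, followed by one leverage-score sampler $D_1$ on the rows of $CS_2^\top$. The structural step (your ``restrict $U$ and $V$ to small spans'') is obtained not via subspace-embedding arguments on the existing sketches but by two applications of the ridge-regression sketching theorem of \cite{acw17} (Theorem~\ref{thm:sketch_rigid}): fixing $U^\star$ and sketching $\min_V\|U^\star V-C\|_F^2+\lambda\|V\|_F^2$ with $S_1$ forces $V_1\in\mathrm{rowspan}(S_1C)$; then fixing $V_1$ and sketching $\min_U\|UV_1-C\|_F^2+\lambda\|U\|_F^2$ with $S_2$ forces $U_1\in\mathrm{colspan}(CS_2^\top)$. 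This yields variables $Z_1\in\R^{n\times s_1}$, $Z_2\in\R^{s_2\times n}$ with $\wh U=CS_2^\top Z_2$, $\wh V=Z_1S_1C$ --- two genuinely separate factors of the right shapes --- and the small problem is then solved by the closed-form regularized low-rank lemma of \cite{acw17} (Lemma~\ref{lem:low_rank_exact_solution}). Your nuclear-norm reformulation is correct but the paper never uses it.

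The concrete gap in your proposal is the intermediate objective
\[
\min_{\rank\text{-}n\ Z}\ \|A_1 Z A_2-C\|_F^2+\lambda\bigl(\|A_1 Z\|_F^2+\|Z A_2\|_F^2\bigr),
\]
with $A_1=CD_1\in\R^{m\times d_1}$ and $A_2=S_1C\in\R^{s_1\times(n+d)}$. Here $A_1Z\in\R^{m\times s_1}$ and $ZA_2\in\R^{d_1\times(n+d)}$ are \emph{not} the $\wh U\in\R^{m\times n}$ and $\wh V\in\R^{n\times(n+d)}$ that the theorem asks you to exhibit, and $\|A_1Z\|_F^2+\|ZA_2\|_F^2$ bears no controlled relation to $\|\wh U\|_F^2+\|\wh V\|_F^2$ (nor to $2\|A_1ZA_2\|_*$, which is what your nuclear-norm lemma would actually demand). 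Relatedly, when you write ``$U=CD_1Z$ for $Z\in\R^{d_1\times s_1}$'' the second dimension should be $n$, not $s_1$; this is a symptom of conflating the single rank-$n$ variable $Z$ from the unregularized pipeline with the pair of factor variables that the regularized objective forces on you. If you split $Z=Z_UZ_V$ with $Z_U\in\R^{d_1\times n}$, $Z_V\in\R^{n\times s_1}$ and regularize $\|CD_1Z_U\|_F^2+\|Z_VS_1C\|_F^2$, you are essentially back to the paper's formulation --- but then the step ``restricting to $\mathrm{colspan}(CD_1)$ loses only $(1+\epsilon)$'' is a ridge-regression sketching claim for the column sketch $D_1$, and you have not justified why the leverage scores of $(S_1C)^\top$ are the right ones for that ridge problem. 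The paper sidesteps this entirely by using an oblivious CountSketch $S_2$ on the columns and invoking Theorem~\ref{thm:sketch_rigid} directly.
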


\section{Experiments}

We conduct several experiments to verify the running time and optimality of our fast total least squares algorithm~\ref{alg:ftls}.
Let us first recall the multiple-response regression problem.
Let $A\in \mathbb{R}^{m\times n}$ and $B\in \mathbb{R}^{m\times d}$.
In this problem, we want to find $X\in \mathbb{R}^{n\times d}$ so that $AX\sim B$.
The least squares method (LS) solves the following optimization program:
\begin{align*}
c_{\LS}:= &  \min_{ X\in \mathbb{R}^{n\times d}, \Delta B\in \mathbb{R}^{m\times d}  } \| \Delta B \|_F^2, \\
& \text{subject to }  AX=B+\Delta B.
\end{align*}
On the other hand, the total least squares method (TLS) solves the following optimization program:
\begin{align*}
c_{\TLS}:=\min_{\rank-n~C' \in \R^{m \times (n+d)}} \| C' - [A ~ B] \|_F.
\end{align*}
The fast total least squares method (FTLS) returns $X\in \R^{n\times d}$,
which provides an approximation $C'=[\wh{A} ~\wh{A}X]$ to the TLS solution,
and the cost is computed as $ c_{\FTLS} = \|  C' - C \|_F^2. $

Our numerical tests are carried out on an Intel Xeon E7-8850 v2 server with 2.30GHz and 4GB RAM under Matlab R2017b.
\footnote{The code can be found at \url{https://github.com/yangxinuw/total_least_squares_code}.}
\subsection{A Toy Example}\label{sec:exp_toy_example}
We first run our FTLS algorithm on the following toy example, for which we have the analytical solution exactly.
Let $A\in \mathbb{R}^{3\times 2}$ be $A_{11}=A_{22}=1$ and $0$ everywhere else.
Let $B\in \mathbb{R}^{3\times 1}$ be $B_{3}=3$ and $0$ everywhere else.
We also consider the generalization of this example with larger dimension in Appendix \ref{sec:appendix-large-toy}.
The cost of LS is $9$,
since $AX$ can only have non-zero entries on the first $2$ coordinates,
so the $3$rd coordinate of $AX-B$ must have absolute value $3$. Hence the cost is at least $9$.
Moreover,
a cost $9$ can be achieved by setting $X=0$ and $\Delta B=-B$.
However,
for the TLS algorithm,
the cost is only $1$.
Consider $\Delta A\in \mathbb{R}^{3\times 2}$ where $A_{11}=-1$ and $0$ everywhere else.
Then $C':=[(A+\Delta A), ~ B]$ has rank $2$,
and $\|C'-C\|_F=1$.

We first run experiments on this small matrix.
Since we know the solution of LS and TLS exactly in this case,
it is convenient for us to compare their results with that of the FTLS algorithm.
When we run the FTLS algorithm,
we sample $2$ rows in each of the sketching algorithms.

The experimental solution of LS is $C_{\LS}=\diag(0,1,3)$ which matches the theoretical solution. The cost is $9$. The experimental solution of TLS is $C_{\TLS}= \diag(1,1,0)$ which also matches the theoretical result.
The cost is $1$. 

FTLS is a randomized algorithm,
so the output varies.
We post several outputs:{
\[
 C_{\FTLS}=
\begin{bmatrix}
 .06  & -.01  &  .25\\
   -.01 &   .99  &  .00\\
    .76  &  .01  &  2.79\\
\end{bmatrix},
\begin{bmatrix}
    .14  & -.26 &  -.22\\
   -.26  &  .91 &  -.06\\
   -.67  & -.20 &   2.82\\
\end{bmatrix}
\] }
These solutions have cost of $1.55$ and  $1.47$.

We run the FTLS multiple times to analyze the distribution of costs.
Experimental result,
which can be found in Appendex \ref{sec:toy_experiment},
shows that FTLS is a stable algorithm,
and consistently performs better than LS. 

We also consider a second small toy example.
Let $A$ still be a $10\times 5$ matrix and $B$ be a $10\times 1$ vector.
Each entry $A(i,j)$ is chosen i.i.d. from the normal distribution $N(0,1)$,
and each entry $B(i)$ is chosen from $N(0,3)$. 
Because entries from $A$ and $B$ have different variance,
we expect the results of LS and TLS to be quite different.
When we run the FTLS algorithm,
we sample $6$ rows.

We run FTLS $1000$ times, and compute the distribution of costs.
The results of this experiment,
which is in Appendex \ref{sec:toy_experiment},
again demonstrates the stability of the algorithm.
\subsection{Large Scale Problems}
\begin{figure*}[!t]
  \centering
    \includegraphics[width=0.21\textwidth]{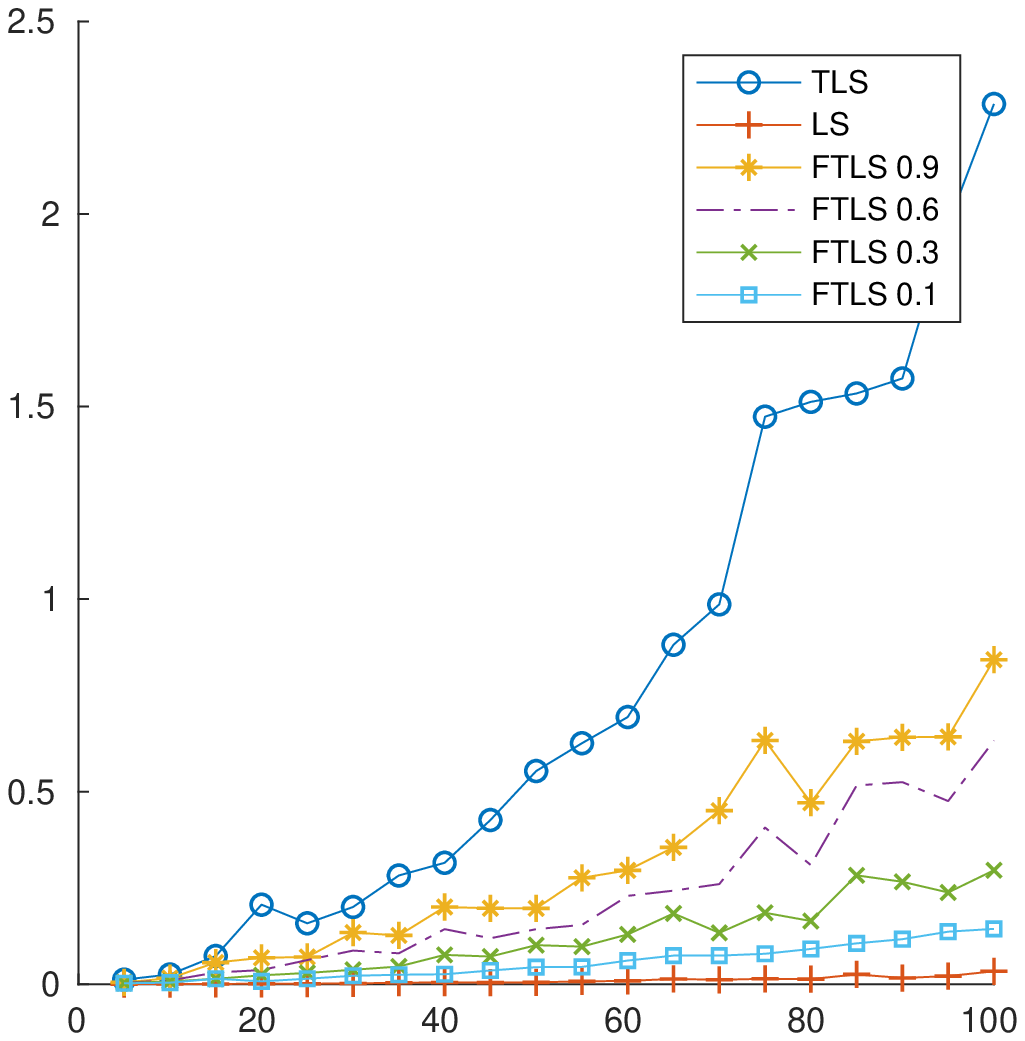}
    \hspace{4mm}
    \includegraphics[width=0.21\textwidth]{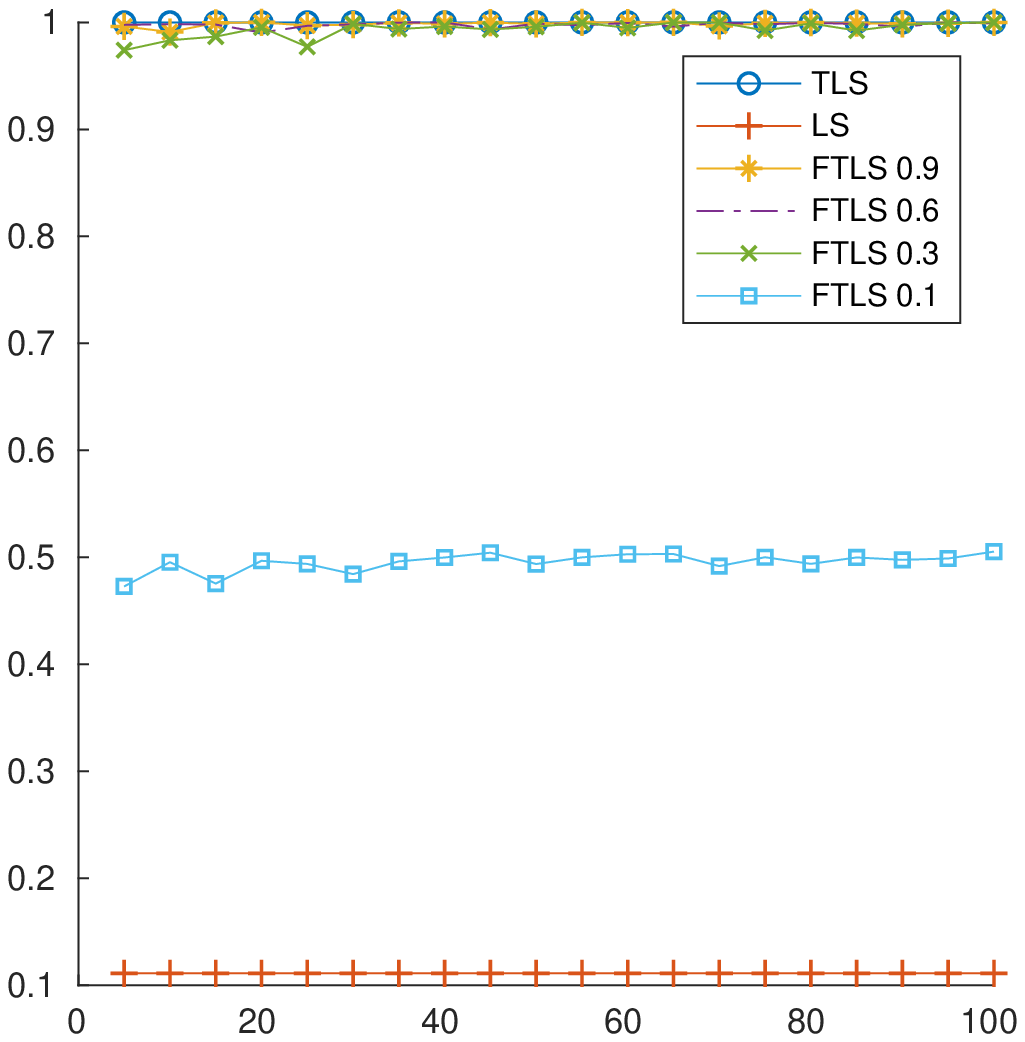}
 \hspace{4mm}
     \includegraphics[width=0.21\textwidth]{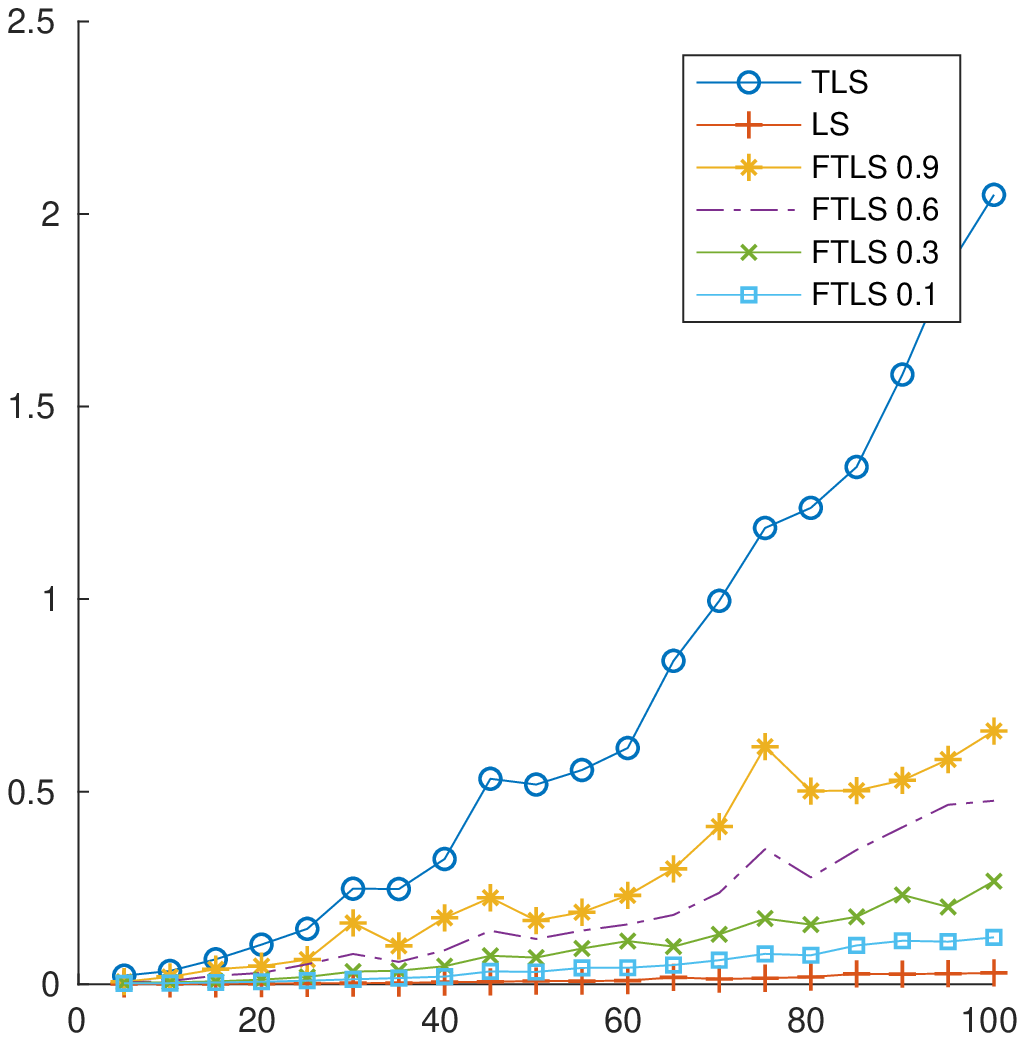}
    \hspace{4mm}
    \includegraphics[width=0.21\textwidth]{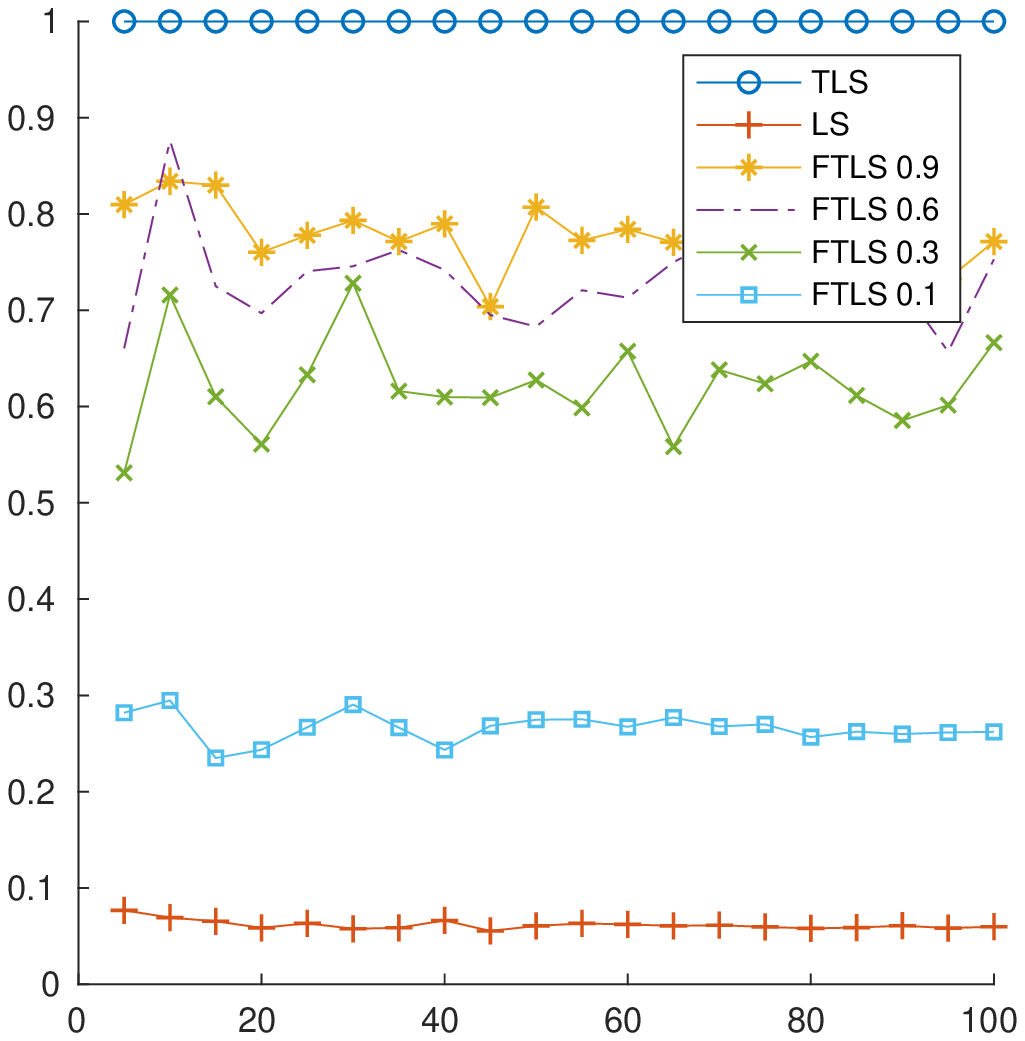}
    \caption{\small Running time and accuracy of our FTLS algorithms. The left 2 figures are for the sparse matrix. The right 2 pictures are for the Gaussian matrix. (Left) The $y$-axis is the running time of each algorithm (counted in seconds); the $x$-axis is the size of the matrix. (Right) The $y$-axis is cost-TLS/cost-other, where cost-other is the cost achieved by other algorithms. (Note we want to minimize the cost); the $x$-axis is the size of the matrix.}
    \label{fig:exp-large}
\end{figure*}

{
\begin{table*}[!t]\footnotesize
\centering
\begin{tabular}{|l|l|l|l|l|} \hline
{\bf Method} & {\bf Cost} &{\bf C-std} & {\bf Time} &{\bf T-std} \\ \hline
TLS & 0.10      & 0      &1.12    &  0.05\\ \hline
LS & $10^6$  & 0      &0.0012  &  0.0002\\ \hline
FTLS 0.9 & 0.10 & 0.0002 &0.16    &  0.0058\\ \hline
FTLS 0.6 & 0.10 & 0.0003 &0.081   &  0.0033\\ \hline
FTLS 0.3 & 0.10 & 0.0007 &0.046   &  0.0022\\ \hline
FTLS 0.1 & 0.10 & 0.0016 &0.034   &  0.0024\\ \hline
\end{tabular}
\hfill
\begin{tabular}{|l|l|l|l|l|} \hline
{\bf Method} & {\bf Cost} &{\bf C-std} & {\bf Time} &{\bf T-std} \\ \hline
TLS & 0.93 & 0& 1.36& 0.16\\ \hline
LS & 666 & 0 &0.0012& 0.001\\ \hline
FTLS 0.9 & 0.93&0.0032 & 0.30& 0.025\\ \hline
FTLS 0.6 & 0.94&0.0050 & 0.17& 0.01\\ \hline
FTLS 0.3 & 0.95&0.01 & 0.095& 0.005\\ \hline
FTLS 0.1 & 0.99&0.03 & 0.074& 0.004\\ \hline
\end{tabular}
\hfill
\begin{tabular}{|l|l|l|l|l|} \hline
{\bf Method} & {\bf Cost} &{\bf C-std} & {\bf Time} &{\bf T-std} \\ \hline
TLS & 1.85 & 0&29.44&1.44 \\ \hline
LS & 2794 & 0&0.0022&0.001 \\ \hline
FTLS 0.9 & 1.857 & 0.001&3.12&0.081 \\ \hline
FTLS 0.6 & 1.858 & 0.002&1.62&0.054\\ \hline
FTLS 0.3 & 1.864 & 0.006&0.77&0.027 \\ \hline
FTLS 0.1 & 1.885 & 0.019&0.60&0.017 \\ \hline
\end{tabular}
\hfill
\begin{tabular}{|l|l|l|l|l|} \hline
{\bf Method} & {\bf Cost} &{\bf C-std} & {\bf Time} &{\bf T-std} \\ \hline
TLS & 0.550 & 0&125.38&82.9 \\ \hline
LS & 303 & 0&0.019&0.02 \\ \hline
FTLS 0.9 & 0.553 & 0.003&21.313&1.867 \\ \hline
FTLS 0.6 & 0.558 & 0.011&13.115&1.303\\ \hline
FTLS 0.3 & 0.558 & 0.054&7.453&1.237 \\ \hline
FTLS 0.1 & 0.732 & 0.227&4.894&0.481 \\ \hline
\end{tabular}
\caption{Up Left: Airfoil Self-Noise. Up Right: Red wine. Down Left: White wine. Down Right: Insurance Company Benchmark. C-std is the standard deviation for cost. T-std is the standard deviation for running time.}\label{tab:data3}
\end{table*}
}
We have already seen that FTLS works pretty well on small matrices.
We next show that the fast total least squares method also provides a good estimate 
for large scale regression problems.
The setting for matrices is as follows:
for $k=5,10,\cdots,100$,
we set $A$ to be a $20k\times 2k$ matrix where $A(i,i)=1$ for $i=1,\cdots,2k$ and $0$ everywhere else, 
and we set $B$ to be a $20k\times 1$ vector where $B(2k+1)=3$ and $0$ elsewhere.
As in the small case,
the cost of TLS is $1$,
and the cost of LS is $9$.

Recall that in the FTLS algorithm,
we use Count-Sketch/leverage scores sampling/Gaussian sketches to speed up the algorithm.
In the experiments,
we take sample density $\rho=0.1,0.3,0.6,0.9$ respectively to check our performance.
The left 2 pictures in Figure \ref{fig:exp-large} show 
the running time together with the ratio TLS/FTLS for different sample densities.

We can see that the running time of FTLS is significantly smaller than that of TLS.
This is because the running time of TLS depends heavily on $m$,
the size of matrix $A$.
When we apply sketching techniques,
we significantly improve our running time.
The fewer rows we sample,
the faster the algorithm runs.
We can see that FTLS has pretty good performance;
even with $10\%$ sample density,
FTLS still performs better than LS.
Moreover,
the more we sample,
the better accuracy we achieve.

The above matrix is extremely sparse.
We also consider another class of matrices.
For $k=5,10,\cdots,100$,
we set $A$ to be a $20k\times 2k$ matrix where $A(i,j)\sim N(0,1)$;
we set $B$ to be a $20k\times 1$ vector where $B(i)\sim N(0,3)$. 
As in previous experiments,
we take sample densities of $\rho=0.1,0.3,0.6,0.9$, respectively, to check our performance.
The results of this experiment are shown in the right 2 pictures in Figure \ref{fig:exp-large}.

We see that compared to TLS,
our FTLS sketching-based algorithm significantly reduces the running time.
FTLS is still slower than LS,
though,
because in the FTLS algorithm we still need to solve a LS problem of the same size.
However, as discussed, LS is inadequate in a number of applications 
as it does not allow for changing the matrix $A$. 
The accuracy of our FTLS algorithms is also shown.


We also conducted experiments on real datasets from the UCI Machine Learning Repository \cite{Dua:2017}.
We choose datasets with regression task.
Each dataset consists of input data and output data.
To turn it into a total least squares problem,
we simply write down the input data as a matrix $A$ and the output data as a matrix $B$,
then run the corresponding algorithm on $(A,B)$. 
We have four real datasets :
 Airfoil Self-Noise \cite{data1} in Table~\ref{tab:data3}(a),
 Wine Quality Red wine \cite{data2,cortez2009modeling} in Table~\ref{tab:data3}(b),
 Wine Quality White wine \cite{data2,cortez2009modeling} in Table~\ref{tab:data3}(c),
 Insurance Company Benchmark (COIL 2000) Data Set \cite{data3,ps00}
 From the results,,
 we see that FTLS also performs well on real data:
 when FTLS samples $10\%$ of the rows,
 the result is within $5\%$ of the optimal result of TLS,
 while the running time is $20-40$ times faster.
 In this sense,
 FTLS achieves the advantages of both TLS and LS: FTLS has almost the same accuracy as TLS,
 while FTLS is significantly faster.



\newpage
\bibliographystyle{alpha}
\bibliography{ref}

\onecolumn
\appendix
\section*{Appendix}

\section{Notation}
In addition to $O(\cdot)$ notation, for two functions $f,g$, we use the shorthand $f\lesssim g$ (resp. $\gtrsim$) to indicate that $f\leq C g$ (resp. $\geq$) for an absolute constant $C$. We use $f\eqsim g$ to mean $cf\leq g\leq Cf$ for constants $c,C$.

\section{Oblivious and Non-oblivious sketching matrix}\label{sec:sketching_matrices}
In this section we introduce techniques in sketching.
In order to optimize performance,
we introduce multiple types of sketching matrices,
which are used in Section \ref{sec:proof_of_main_theorem}.
In Section~\ref{sec:def_count_sketch_gaussian}, we provide the definition of CountSketch and Gaussian Transforms. In Section~\ref{sec:def_leverage_score}, we introduce leverage scores and sampling based on leverage scores.
\subsection{CountSketch and Gaussian Transforms}\label{sec:def_count_sketch_gaussian}
CountSketch matrix comes from the data stream literature~\cite{CCF02,tz12}.
\begin{definition}[Sparse embedding matrix or CountSketch transform]\label{def:count_sketch_transform}
A CountSketch transform is defined to be $\Pi=  \Phi D\in \mathbb{R}^{m\times n}$. Here,  $D$ is an $n\times n$ random diagonal matrix with each diagonal entry independently chosen to be $+1$ or $-1$ with equal probability, and $\Phi\in \{0,1\}^{m\times n}$ is an $m\times n$ binary matrix with $\Phi_{h(i),i}=1$ and all remaining entries $0$, where $h:[n]\rightarrow [m]$ is a random map such that for each $i\in [n]$, $h(i) = j$ with probability $1/m$ for each $j \in [m]$. For any matrix $A\in \mathbb{R}^{n\times d}$, $\Pi A$ can be computed in $O(\nnz(A))$ time. 
\end{definition}

To obtain the optimal number of rows,
we need to apply Gaussian matrix,
which is another well-known oblivious sketching matrix.
\begin{definition}[Gaussian matrix or Gaussian transform]\label{def:gaussian_transform}
Let $S=\frac 1{\sqrt{m}}\cdot G \in \mathbb{R}^{m\times n}$ where  each entry of $G\in \mathbb{R}^{m\times n}$ is chosen independently from the standard Gaussian distribution. For any matrix $A\in \mathbb{R}^{n\times d}$, $SA$ can be computed in $O(m \cdot \nnz(A))$ time. 
\end{definition}

We can combine CountSketch and Gaussian
transforms to achieve the following:
\begin{definition}[CountSketch + Gaussian transform]\label{def:fast_gaussian_transform}
Let $S' = S \Pi$, where $\Pi\in \mathbb{R}^{t\times n}$ is the CountSketch transform (defined in Definition~\ref{def:count_sketch_transform}) and $S\in \mathbb{R}^{m \times t}$ is the Gaussian transform (defined in Definition~\ref{def:gaussian_transform}). For any matrix $A\in \mathbb{R}^{n\times d}$, $S'A$ can be computed in $O(\nnz(A) + dtm^{\omega-2})$ time, where $\omega$ is the matrix multiplication exponent.
\end{definition}
\subsection{Leverage Scores}\label{sec:def_leverage_score}
We do want to note that there are other ways of constructing sketching matrix though, such as through sampling the rows of $A$ via
a certain distribution and reweighting them.
This is called \emph{leverage score sampling} \cite{dmm06,dmm06c,dmms11}. 
We first give the concrete definition of leverage scores.
\begin{definition}[Leverage scores]
Let $U\in \mathbb{R}^{n\times k}$ have orthonormal columns with $n \geq k$. We will use the notation $p_i = u_i^2 / k$, where $u_i^2 = \| e_i^\top U \|_2^2$ is referred to as the $i$-th leverage score of $U$. 
\end{definition}

Next we explain the leverage score sampling.
 Given $A\in\mathbb{R}^{n\times d}$ with rank $k$, let $U\in \mathbb{R}^{n\times k}$ be an orthonormal basis of the column span of $A$, and for each $i$ let $k\cdot p_i$ be the squared row norm of the $i$-th row of $U$. Let $p_i$ denote the $i$-th leverage score of $U$. Let $\beta>0$ be a constant and $q=(q_1, \cdots,q_n)$ denote a distribution such that, for each $i\in [n]$, $q_i \geq \beta p_i$. Let $s$ be a parameter. Construct an $n\times s$ sampling matrix $B$ and an $s\times s$ rescaling matrix $D$ as follows. Initially, $B = 0^{n\times s}$ and $D = 0^{s\times s}$. For the same column index $j$ of $B$ and of $D$, independently, and with replacement, pick a row index $i\in [n]$ with probability $q_i$, and set $B_{i,j}=1$ and $D_{j,j}=1/\sqrt{q_i s}$. We denote this procedure \textsc{Leverage score sampling} according to the matrix $A$.

Leverage score sampling is efficient in the sense that leverage score can be efficiently approximated.
\begin{theorem}[Running time of over-estimation of leverage score, Theorem 14 in \cite{nn13}]\label{thm:leverage_score_time}
For any $\epsilon>0$,
with probability at least $2/3$,
we can compute $1\pm \epsilon$ approximation of all leverage scores of matrix $A\in \R^{n\times d}$ in time $\wt{O}(\nnz(A)+r^{\omega}\epsilon^{-2\omega})$ where $r$ is the rank of $A$ and $\omega \approx 2.373$ is the exponent of matrix multiplication \cite{cw87,w12}.
\end{theorem}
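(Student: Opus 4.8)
The plan is to prove the statement by following the standard fast leverage-score approximation scheme: combine a sparse subspace embedding (the OSNAP construction that is the main object of \cite{nn13}) with a Johnson--Lindenstrauss sketch, and charge the ``dense'' linear algebra to the $r^\omega\epsilon^{-2\omega}$ term while charging everything that touches $A$ to the $\nnz(A)$ term. First I would reformulate the leverage scores so that no explicit orthonormal basis is needed. Writing a thin SVD $A = U\Sigma V^\top$ with $U\in\R^{n\times r}$, the $i$-th leverage score is $\ell_i = \norm{e_i^\top U}_2^2 = e_i^\top A (A^\top A)^\dagger A^\top e_i = \norm{e_i^\top A R}_2^2$, where $R\in\R^{d\times r}$ is any matrix with $R R^\top = (A^\top A)^\dagger$, equivalently any $R$ for which $AR$ has orthonormal columns. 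Thus it suffices to (a) produce a cheaply computable whitener $R$ for which $AR$ has \emph{nearly} orthonormal columns, and (b) estimate the squared row norms of $AR$ without ever forming the $n\times r$ matrix $AR$.

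For step (a) I would take $\Pi_1\in\R^{m_1\times n}$ to be a sparse embedding that is a $(1\pm c\epsilon)$-subspace embedding for the $r$-dimensional column space of $A$; by the OSNAP guarantees this holds with probability at least a fixed constant using $m_1 = \wt O(r\epsilon^{-2})$ rows, and $\Pi_1 A$ is computable in $\wt O(\nnz(A))$ time. I would then extract $R$ from a QR/SVD factorization of the small sketched matrix $\Pi_1 A$, choosing $R$ so that $\Pi_1 A R$ has orthonormal columns. The subspace-embedding property transfers near-orthonormality from $\Pi_1 A R$ back to $AR$: since $\Pi_1$ preserves the norm of every vector in the column space of $A$ up to $(1\pm c\epsilon)$, all singular values of $AR$ lie in $[1-O(\epsilon),1+O(\epsilon)]$, which gives $\norm{e_i^\top A R}_2^2 = (1\pm O(\epsilon))\,\ell_i$ for every row $i$ simultaneously. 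So that the dense factorization costs only $r^\omega\epsilon^{-2\omega}$ rather than something depending on $d$, I would first apply a rank-preserving right sketch $A\Pi_2$ with $\Pi_2$ a constant-distortion subspace embedding of the row space ($m_2=\wt O(r)$ columns): this leaves the column span of $A$ \emph{exactly} unchanged with high probability (rank is preserved and the span can only shrink), hence leaves all $\ell_i$ exactly invariant, after which both sketched dimensions are $\wt O(r\epsilon^{-2})$ and the SVD/pseudoinverse costs $\wt O((r\epsilon^{-2})^\omega) = \wt O(r^\omega\epsilon^{-2\omega})$ by fast matrix multiplication.

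For step (b) I would apply a Johnson--Lindenstrauss matrix $G\in\R^{r\times p}$ with $p=O(\epsilon^{-2}\log n)$ columns to $R$ first, forming $RG$, and then compute $A(RG)\in\R^{n\times p}$ in $\wt O(\nnz(A)\,\epsilon^{-2})$ time; the estimate is $\wt\ell_i = \norm{e_i^\top A R G}_2^2$. The JL lemma gives $\norm{e_i^\top A R G}_2^2 = (1\pm\epsilon)\norm{e_i^\top A R}_2^2$ for a fixed $i$ with failure probability $n^{-\Omega(1)}$, and a union bound over the $n$ rows makes this hold for all rows with high probability. Composing the two multiplicative distortions from (a) and (b) and rescaling $\epsilon$ by a constant yields $\wt\ell_i = (1\pm\epsilon)\,\ell_i$ for all $i$; the overall success probability is $2/3$, bottlenecked by the constant-probability OSNAP subspace-embedding guarantee, since the right sketch and the JL step can each be driven to high probability. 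Summing the costs gives $\wt O(\nnz(A))$ for applying $\Pi_1,\Pi_2$ and for the final multiply $A(RG)$, plus $\wt O(r^\omega\epsilon^{-2\omega})$ for the dense factorization, which is the claimed bound. I expect the main obstacle to be the running-time accounting of the dense step: one must ensure both sketch dimensions are genuinely $\wt O(r\epsilon^{-2})$ while the right sketch preserves the column span (and hence the leverage scores) \emph{exactly}, so that the factorization cost is $r^\omega\epsilon^{-2\omega}$ with no hidden dependence on $d$, and simultaneously verify that the subspace-embedding and JL distortions compose cleanly into a single $(1\pm\epsilon)$ guarantee.
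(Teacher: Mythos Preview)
The paper does not supply a proof of this statement; it is quoted as a black box from \cite{nn13} and used only as a tool in the running-time analysis. There is therefore no in-paper argument to compare your proposal against.

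Your outline is the standard route to this result---a sparse subspace embedding to produce an approximate whitener $R$, followed by a Johnson--Lindenstrauss sketch to estimate the row norms of $AR$---and matches how the cited theorem is actually proved. Two remarks. First, there is an internal inconsistency in your time accounting: in step (b) you correctly cost the multiply $A(RG)$ at $\wt O(\nnz(A)\,\epsilon^{-2})$ since $G$ has $p = O(\epsilon^{-2}\log n)$ columns, yet in your final tally you write $\wt O(\nnz(A))$ for that same step. Under the paper's convention $\wt O(f)=f\cdot\log^{O(1)}(f)$, the $\epsilon^{-2}$ is not absorbed, so your two statements disagree; you should reconcile this against the precise formulation in \cite{nn13} rather than the paper's paraphrase. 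Second, your claim that the right sketch $\Pi_2$ preserves the column span of $A$ \emph{exactly} is correct but deserves one more sentence of justification: a subspace embedding for the $r$-dimensional row space of $A$ forces $\rank(A\Pi_2)=r$, and since $\mathrm{colspan}(A\Pi_2)\subseteq\mathrm{colspan}(A)$ with equal dimension, the spans coincide and hence so do the leverage scores. With that made explicit, the reduction of the dense step to dimensions $\wt O(r/\epsilon^2)$ on both sides, and hence cost $\wt O((r/\epsilon^2)^\omega)=\wt O(r^\omega\epsilon^{-2\omega})$, goes through.
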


In Section~ \ref{sec:regression} we show how to apply matrix sketching to solve regression problems faster.
In Section~\ref{sec:low_rank},
we give a structural result on rank-constrained approximation problems.
\section{Multiple Regression}\label{sec:regression}
Linear regression is a fundamental problem in Machine Learning. There are a lot of attempts trying to speed up the running time of different kind of linear regression problems via sketching matrices \cite{cw13,mm13,psw17,lhw17,dssw18,alszz18,cww19}. A natural generalization of linear regression is multiple regression.

We first show how to use CountSketch to reduce to a multiple regression problem:
\begin{theorem}[Multiple regression, \cite{w14}]\label{thm:count_sketch}
Given $A \in \R^{n \times d}$ and $B \in \R^{n \times m}$, let $S \in \R^{s \times n}$ denote a sampling and rescaling matrix according to $A$. Let $X^*$ denote $\arg\min_X \| A X - B \|_F^2$ and $X'$ denote $\arg\min_X \| S A X - S B \|_F^2$.  If $S$ has $s = O(d / \epsilon)$ rows, then we have that
\begin{align*}
\| A X' - B \|_F^2 \leq (1+\epsilon) \| A X^* - B \|_F^2
\end{align*}
holds with probability at least $0.999$.
\end{theorem}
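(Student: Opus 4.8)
The plan is to prove this as a standard instance of the sketch-and-solve paradigm, relying on only two properties of the sampling-and-rescaling matrix $S$: that it is a subspace embedding for the column space of $A$, and that it satisfies an approximate matrix product bound. First I would pass to an orthonormal parametrization. Let $U \in \R^{n \times k}$ be an orthonormal basis for the column span of $A$, where $k = \rank(A) \le d$. Minimizing $\|AX - B\|_F$ over $X$ is equivalent to minimizing $\|UY - B\|_F$ over $Y$, with optimum at $Y^* = U^\top B$ and residual $B^\perp := (I - UU^\top)B$; here $\|AX^* - B\|_F = \|B^\perp\|_F$ and, by the normal equations, $U^\top B^\perp = 0$. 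The same change of variables turns the sketched problem $\min_X \|SAX - SB\|_F$ into $\min_Y \|SUY - SB\|_F$, whose solution is $Y' = ((SU)^\top SU)^{-1}(SU)^\top SB$ once $SU$ has full column rank.

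Next I would write the error of the sketched solution in closed form. Substituting $SB = SUY^* + SB^\perp$ into the normal equations of the sketched problem gives $Y' - Y^* = ((SU)^\top SU)^{-1}(SU)^\top S B^\perp$. Because $U$ has orthonormal columns and $B^\perp$ is orthogonal to the column span of $U$, the Pythagorean theorem yields, for the \emph{true} objective, $\|AX' - B\|_F^2 = \|U(Y' - Y^*)\|_F^2 + \|B^\perp\|_F^2 = \|Y' - Y^*\|_F^2 + \|B^\perp\|_F^2$. Hence it suffices to show $\|Y' - Y^*\|_F^2 \le \epsilon \|B^\perp\|_F^2$, which reduces the entire theorem to bounding the single quantity $((SU)^\top SU)^{-1}(SU)^\top S B^\perp$.

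The two sketching properties close the argument. For the subspace embedding I would take $S$, with $s = O(d/\epsilon)$ rows sampled and rescaled according to the leverage scores of $A$, to satisfy $\tfrac12\|Ux\|_2^2 \le \|SUx\|_2^2 \le \tfrac32\|Ux\|_2^2$ for all $x$; this controls the spectrum of $(SU)^\top SU$ so that $\|((SU)^\top SU)^{-1}\|_2 \le 2$ and in particular $SU$ has full column rank. For the approximate matrix product I would use that, with the same number of rows, $\E\,\|(SU)^\top S B^\perp - U^\top B^\perp\|_F^2 \le \tfrac{c}{s}\|U\|_F^2 \|B^\perp\|_F^2$. Since $U^\top B^\perp = 0$ and $\|U\|_F^2 = k \le d$, the left-hand side equals $\E\,\|(SU)^\top S B^\perp\|_F^2 \le \tfrac{cd}{s}\|B^\perp\|_F^2$, which is at most $\tfrac{\epsilon}{16}\|B^\perp\|_F^2$ once the constant in $s = O(d/\epsilon)$ is chosen large enough; Markov's inequality then makes the deterministic bound hold with probability $0.999$. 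Combining, $\|Y' - Y^*\|_F \le \|((SU)^\top SU)^{-1}\|_2 \cdot \|(SU)^\top S B^\perp\|_F \le 2 \cdot \tfrac{\sqrt{\epsilon}}{4}\|B^\perp\|_F = \tfrac{\sqrt{\epsilon}}{2}\|B^\perp\|_F$, so $\|Y' - Y^*\|_F^2 \le \tfrac{\epsilon}{4}\|B^\perp\|_F^2 \le \epsilon\|B^\perp\|_F^2$, and therefore $\|AX' - B\|_F^2 \le (1+\epsilon)\|AX^* - B\|_F^2$.

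The main obstacle I anticipate is bookkeeping the dependence on the number of rows. The subspace embedding step needs only \emph{constant} distortion, but for leverage-score sampling this generically costs $O(d\log d)$ rows at high probability, whereas the entire $\epsilon^{-1}$ factor enters only through the approximate matrix product step. I would therefore carefully separate the two roles — using a constant-distortion embedding so that the conditioning constant $2$ is absolute, and pushing all $\epsilon$-dependence into the AMP bound — and verify that $s = O(d/\epsilon)$ suffices for the AMP step once the factor $\|U\|_F^2 = k \le d$ is accounted for. Finally I would confirm that the stated $0.999$ success probability follows from Markov with an enlarged constant in $s$, together with a union bound against the (high-probability) embedding event.
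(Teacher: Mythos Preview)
The paper does not prove this theorem; it is stated as a known result, with a citation to \cite{w14}, and is used as a black box throughout the rest of the argument. So there is no ``paper's own proof'' to compare against.

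That said, your proposal is exactly the standard argument from the cited reference: reparametrize via an orthonormal basis $U$ for the column span of $A$, use the Pythagorean identity $\|AX'-B\|_F^2 = \|Y'-Y^*\|_F^2 + \|B^\perp\|_F^2$, and then control $\|Y'-Y^*\|_F$ by combining a constant-distortion subspace embedding for $U$ (to bound $\|((SU)^\top SU)^{-1}\|_2$) with an approximate matrix product bound for $(SU)^\top S B^\perp$ (exploiting $U^\top B^\perp = 0$). This is precisely the proof in \cite{w14}, so your approach is the intended one.

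The concern you flag at the end is legitimate and worth noting: for leverage-score sampling, obtaining the subspace-embedding event with failure probability as small as $0.001$ generically costs $O(d\log d)$ rows via matrix Chernoff, which is not dominated by $O(d/\epsilon)$ unless $\epsilon \lesssim 1/\log d$. The paper itself acknowledges this implicitly, since its separate leverage-score theorem (Theorem~\ref{thm:leverage-score-sampling}) states the row count as $O(d\log d + d/\epsilon)$ rather than $O(d/\epsilon)$. This is a wrinkle in the theorem \emph{statement} as written, not a flaw in your proof strategy.
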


The following theorem says leverage score sampling solves multiple response 
regression:
\begin{theorem}[See, e.g., the combination of Corollary C.30 and Lemma C.31 in \cite{swz19}]\label{thm:leverage-score-sampling}
Given $A \in \R^{n \times d}$ and $B \in \R^{n \times m}$, let $D \in \R^{n \times n}$ denote a sampling and rescaling matrix according to $A$. Let $X^*$ denote $\arg\min_X \| A X - B \|_F^2$ and $X'$ denote $\arg\min_X \| D A X - S B \|_F^2$.  If $D$ has $O(d\log d + d/\epsilon)$ non-zeros in expectation, that is, this is the
expected number of sampled rows, then we have that
\begin{align*}
\| A X' - B \|_F^2 \leq (1+\epsilon) \| A X^* - B \|_F^2
\end{align*}
holds with probability at least $0.999$.
\end{theorem}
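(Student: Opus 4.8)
The plan is to run the standard sketch-and-solve argument, reducing the guarantee to two structural properties of the leverage-score sampling matrix $D$: a subspace embedding property and an approximate matrix multiplication property (these are exactly what Corollary C.30 and Lemma C.31 of \cite{swz19} package). First I would set up the geometry. Write the thin SVD $A = U\Sigma V^\top$ with $U \in \R^{n\times d}$ having orthonormal columns spanning the column space of $A$, and decompose $B = AX^* + B^\perp$ where, by the normal equations of the unsketched problem, the residual satisfies $U^\top B^\perp = 0$. Since $A(X'-X^*)$ lies in the column span of $A$ while $B^\perp$ is orthogonal to it, Pythagoras gives
\begin{align*}
\|AX' - B\|_F^2 = \|A(X'-X^*)\|_F^2 + \|AX^*-B\|_F^2.
\end{align*}
Writing $\mathrm{OPT} := \min_X \|AX-B\|_F^2 = \|AX^*-B\|_F^2$, it therefore suffices to show $\|A(X'-X^*)\|_F^2 \leq \epsilon \cdot \mathrm{OPT}$.

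Next I would exploit optimality of $X'$ for the sketched problem $\min_X \|D(AX-B)\|_F^2$ (reading the $SB$ in the statement as $DB$). Its normal equations read $(DA)^\top DA(X'-X^*) = (DA)^\top DB^\perp$. Substituting $A = U\Sigma V^\top$ and setting $Z := \Sigma V^\top(X'-X^*)$, so that $A(X'-X^*) = UZ$ and $\|A(X'-X^*)\|_F = \|Z\|_F$, the equation simplifies (after cancelling the invertible factor $V\Sigma$) to $(DU)^\top DU \, Z = (DU)^\top DB^\perp$, whence
\begin{align*}
\|A(X'-X^*)\|_F = \|Z\|_F \leq \big\|((DU)^\top DU)^{-1}\big\|_2 \cdot \big\|(DU)^\top DB^\perp\big\|_F.
\end{align*}

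The two factors are controlled separately. For the first, a subspace embedding bound for leverage-score sampling (via matrix Chernoff) gives $\|U^\top D^\top DU - I\|_2 \leq 1/2$ with $s = O(d\log d)$ sampled rows, so $\big\|((DU)^\top DU)^{-1}\big\|_2 \leq 2$. For the second, since $D$ samples according to probabilities $q_i \geq \beta p_i = \beta\|U_i\|_2^2/d$ and $\|U\|_F^2 = d$, the approximate matrix multiplication bound yields $\E\big\|(DU)^\top DB^\perp - U^\top B^\perp\big\|_F^2 \leq \frac{d}{\beta s}\|B^\perp\|_F^2$; using $U^\top B^\perp = 0$ and $\|B^\perp\|_F^2 = \mathrm{OPT}$, taking $s = O(d/\epsilon)$ and applying Markov's inequality makes $\big\|(DU)^\top DB^\perp\big\|_F^2 \leq O(\epsilon)\,\mathrm{OPT}$ with probability $0.999$. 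Combining the two factors gives $\|A(X'-X^*)\|_F^2 \leq 4 \cdot O(\epsilon)\,\mathrm{OPT}$, and rescaling $\epsilon$ by a constant yields the claim; the total sample budget is the sum $O(d\log d) + O(d/\epsilon) = O(d\log d + d/\epsilon)$, matching the statement.

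I expect the main technical obstacle to be the probability bookkeeping and the split of the sample complexity: the subspace-embedding factor needs only a constant-accuracy embedding (hence the $d\log d$ term, which must hold with high probability so the inverse is well conditioned), while the target accuracy $\epsilon$ enters solely through the approximate matrix multiplication factor (hence the additive $d/\epsilon$ term, where a single Markov step suffices for constant failure probability). Care is also needed for rank-deficient $A$: there inverses should be read as pseudoinverses and the argument restricted to the column space, and since $X^*, X'$ are only determined up to the kernels of $A$ and $DA$, I would phrase everything in terms of the products $AX^*, AX'$ rather than the solution matrices themselves.
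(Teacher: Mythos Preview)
The paper does not give its own proof of this statement: it is quoted as an external result, with the citation ``See, e.g., the combination of Corollary C.30 and Lemma C.31 in \cite{swz19}'' in lieu of an argument. There is therefore nothing to compare your proposal against inside the paper.

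That said, your proposal is the standard and correct sketch-and-solve analysis for sampling-based regression, and you have identified the roles of the two cited ingredients correctly: a constant-accuracy subspace embedding for $U$ (this is what a result like Corollary~C.30 provides, needing $O(d\log d)$ samples) controls the conditioning factor $\|((DU)^\top DU)^{-1}\|_2$, and an approximate matrix multiplication bound for $U^\top D^\top D B^\perp$ (this is what a result like Lemma~C.31 provides, needing $O(d/\epsilon)$ samples) controls the cross term. The Pythagorean decomposition, the use of the sketched normal equations, and the union of the two sample budgets into $O(d\log d + d/\epsilon)$ are all exactly as in the standard proof. Your caveats about the typo $SB$ versus $DB$, about reading $X^*,X'$ through the products $AX^*,AX'$, and about the rank-deficient case are apt.
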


\section{Generalized Rank-Constrained Matrix Approximation}\label{sec:low_rank}

We state a tool which has been used in several recent works \cite{bwz16,swz17,swz19}.
\begin{theorem}[Generalized rank-constrained matrix approximation, Theorem 2 in \cite{ft07}]\label{thm:generalized_rank_constrained_matrix_approximations}
Given matrices $A \in \R^{n \times d}$, $B \in \R^{n \times p}$, and $C \in \R^{q \times d}$, let the singular value decomposition (SVD) of $B$ be $B = U_B \Sigma_B V_B^\top$ and the SVD of $C$ be $C = U_C \Sigma_C V_C^\top$. Then
\begin{align*}
B^\dagger ( U_B U_B^\top A V_C V_C^\top )_k C^\dagger = \underset{\rank-k~X \in \R^{p \times q}}{\arg\min} \| A - B X C \|_F
\end{align*}
where $(U_B U_B^\top A V_C V_C^\top)_k \in \R^{n \times d}$ is of rank at most $k$ and denotes the best $\rank$-$k$ approximation to $U_B U_B^\top A V_C V_C^\top \in \R^{n \times d}$ in Frobenius norm.

Moreover,
$( U_B U_B^\top A V_C V_C^\top )_k $ can be computed by first computing the SVD decomposition of $U_B U_B^\top A V_C V_C^\top$ in time $O(nd^2)$,
then only keeping the largest $k$ coordinates.
Hence $B^\dagger ( U_B U_B^\top A V_C V_C^\top )_k C^\dagger$ can be computed in $O(nd^2+np^2+qd^2)$ time.
\end{theorem}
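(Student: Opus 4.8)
The plan is to reduce the constrained problem $\min_{\rank\text{-}k~X}\|A - BXC\|_F$ to an unconstrained best rank-$k$ approximation problem governed by the Eckart--Young--Mirsky theorem, and then to verify that the resulting minimizer coincides with the closed form $B^\dagger(U_BU_B^\top A V_C V_C^\top)_k C^\dagger$. Throughout I work with thin SVDs, so $U_B,V_B,U_C,V_C$ have orthonormal columns and $\Sigma_B,\Sigma_C$ are invertible.

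First I would write $P := U_B U_B^\top$ and $Q := V_C V_C^\top$, the orthogonal projections onto the column space of $B$ and the row space of $C$, and record two structural facts: (i) every product $BXC$ is fixed by the two-sided projection, $P(BXC)Q = BXC$ (using $U_B^\top U_B = I$ and $V_C^\top V_C = I$), and (ii) $P(A - PAQ)Q = 0$ since $P^2=P$, $Q^2=Q$. Setting $Y := PAQ - BXC$, fact (i) gives $PYQ = Y$; then self-adjointness of $P,Q$ together with the cyclic property of the trace yields
\[
\Tr\big((A-PAQ)^\top Y\big) = \Tr\big((A-PAQ)^\top PYQ\big) = \Tr\big((P(A-PAQ)Q)^\top Y\big) = 0 .
\]
Hence $A - BXC = (A - PAQ) + (PAQ - BXC)$ is an orthogonal (Pythagorean) decomposition, so
\[
\|A - BXC\|_F^2 = \|A - PAQ\|_F^2 + \|PAQ - BXC\|_F^2 ,
\]
where the first term is independent of $X$. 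Thus it suffices to minimize $\|PAQ - BXC\|_F$ over rank-$k$ matrices $X$.

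Next I would diagonalize by substituting the SVDs. Writing $G := U_B^\top A V_C$ and $W := V_B^\top X U_C$, one checks $PAQ = U_B G V_C^\top$ and $BXC = U_B \Sigma_B W \Sigma_C V_C^\top$, so $PAQ - BXC = U_B(G - \Sigma_B W \Sigma_C)V_C^\top$. Since $U_B$ and $V_C$ have orthonormal columns, this gives $\|PAQ - BXC\|_F = \|G - \Sigma_B W \Sigma_C\|_F$. The objective now depends on $X$ only through $W = V_B^\top X U_C$; and as $X$ ranges over all matrices of rank $\le k$, $W$ ranges over exactly all matrices of rank $\le k$ (the forward direction since $\rank W \le \rank X$, the converse by exhibiting $X = V_B W U_C^\top$, which has rank $\le k$ and realizes the given $W$). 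As $\Sigma_B,\Sigma_C$ are invertible, the substitution $N := \Sigma_B W \Sigma_C$ is a rank-preserving bijection, so the problem becomes $\min_{\rank\text{-}k~N}\|G - N\|_F$, solved by Eckart--Young--Mirsky at $N = G_k$, the best rank-$k$ approximation of $G$. Unwinding gives $W^\star = \Sigma_B^{-1} G_k \Sigma_C^{-1}$ and hence $X^\star = V_B \Sigma_B^{-1} G_k \Sigma_C^{-1} U_C^\top$.

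Finally I would identify $X^\star$ with the stated formula and bound the cost. Using $B^\dagger = V_B \Sigma_B^{-1} U_B^\top$, $C^\dagger = V_C \Sigma_C^{-1} U_C^\top$, and the fact that the best rank-$k$ approximation of $PAQ = U_B G V_C^\top$ is $(U_B U_B^\top A V_C V_C^\top)_k = U_B G_k V_C^\top$ (since $U_B G V_C^\top$ inherits an SVD from that of $G$), the orthonormality $U_B^\top U_B = I$, $V_C^\top V_C = I$ collapses $B^\dagger (U_B U_B^\top A V_C V_C^\top)_k C^\dagger = V_B \Sigma_B^{-1} G_k \Sigma_C^{-1} U_C^\top = X^\star$, establishing equality of minimizers. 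For the running time, forming and taking the SVD of the $n\times d$ matrix $U_B U_B^\top A V_C V_C^\top$ costs $O(nd^2)$, while computing the pseudoinverses $B^\dagger$ and $C^\dagger$ contributes $O(np^2)$ and $O(qd^2)$ respectively. I expect the main obstacle to be the first step: pinning down the orthogonal splitting cleanly, i.e.\ verifying $P(BXC)Q = BXC$ for all admissible $X$ and that the cross term vanishes, since everything downstream is a change of variables plus Eckart--Young. A secondary point requiring care is the claim that rank-$\le k$ matrices $X$ realize exactly the rank-$\le k$ matrices $W$, which is what legitimately transfers the rank constraint through the reduction; I would also remark that the minimizer is unique precisely when $\sigma_k(G) > \sigma_{k+1}(G)$, matching the usual Eckart--Young caveat.
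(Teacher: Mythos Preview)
The paper does not prove this theorem; it is quoted from \cite{ft07} as a black-box tool, so there is no in-paper argument to compare against. Your proof is correct and is essentially the standard one: split $A-BXC$ orthogonally via the projections $P=U_BU_B^\top$ and $Q=V_CV_C^\top$, reduce to $\min_{\rank\text{-}k~N}\|U_B^\top A V_C - N\|_F$ by the rank-preserving change of variables $N=\Sigma_B(V_B^\top X U_C)\Sigma_C$, apply Eckart--Young--Mirsky, and then rewrite the minimizer using $B^\dagger=V_B\Sigma_B^{-1}U_B^\top$, $C^\dagger=V_C\Sigma_C^{-1}U_C^\top$ together with $(U_BGV_C^\top)_k=U_BG_kV_C^\top$. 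The two places you flagged as needing care (the vanishing cross term and the exact correspondence between rank-$\le k$ $X$ and rank-$\le k$ $W$) are handled correctly by your argument, and the running-time accounting matches the statement.
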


\section{Closed Form for the Total Least Squares Problem}\label{sec:closed_form_tls}

Markovsky and Huffel~\cite{mv07} propose the following alternative formulation of total least squares problem.
\begin{align}\label{eq:tls2}
\min_{\rank-n~C' \in \R^{m \times (n+d)}} \| C' - C \|_F
\end{align}
When program \eqref{eq:tls1} has a solution $(X,\Delta A,\Delta B)$,
we can see that \eqref{eq:tls1} and \eqref{eq:tls2} are in general equivalent by setting $C'=[A + \Delta A,  ~ B + \Delta B]$.
However,
there are cases when program \eqref{eq:tls1} fails to have a solution,
while \eqref{eq:tls2} always has a solution.

As discussed, a solution to the total least squares problem
can sometimes be written in closed form.  
Letting $C=[A, ~ B]$,
denote the singular value decomposition (SVD) of $C$ by $U\Sigma V^{\top}$, where $\Sigma=\mathsf{diag}(\sigma_1,\cdots,\sigma_{n+d})\in \R^{m\times (n+d)}$ with $\sigma_1\geq \sigma_2\geq \cdots \geq \sigma_{n+d}$. 
Also we represent $(n+d)\times (n+d)$ matrix $V$ as $\begin{bmatrix}
 V_{11} & V_{12}\\
 V_{21} & V_{22}\\
\end{bmatrix}$
where $V_{11}\in \R^{n\times n}$ and $V_{22}\in \R^{d\times d}$.

Clearly $\hat C=U\mathsf{diag}(\sigma_1,\cdots,\sigma_n,0,\cdots,0)V^{\top}$ is a minimizer of program \eqref{eq:tls2}. 
But whether a solution to program \eqref{eq:tls1} exists depends on the singularity of $V_{22}$.
In the rest of this section we introduce different cases of the solution to program \eqref{eq:tls1}, and discuss how our algorithm deals with each case.

\subsection{Unique Solution}
We first consider the case when 
the Total Least Squares problem has a unique solution.
\begin{theorem}[Theorem 2.6 and Theorem 3.1 in \cite{vv91}]\label{thm:tls_unique}
If $\sigma_n>\sigma_{n+1}$, and
$V_{22}$ is non-singular,
then the minimizer $\hat C$ is given by $U\mathsf{diag}(\sigma_1,\cdots,\sigma_n,0,\cdots,0)V^{\top}$,
and the optimal solution $\hat X$ is given by $-V_{12}V_{22}^{-1}$.
\end{theorem}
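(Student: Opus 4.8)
The plan is to combine the variational characterization of the best rank-$n$ approximation with an explicit algebraic manipulation that peels off the solution matrix $\hat X$ from the last $d$ columns. First I would recall that, since $\sigma_n > \sigma_{n+1}$, the best rank-$n$ approximation $\hat C = U\,\mathsf{diag}(\sigma_1,\dots,\sigma_n,0,\dots,0)\,V^\top$ to $C$ in Frobenius norm is \emph{unique} (the gap in the spectrum forces the top-$n$ singular subspace to be well-defined), so $\hat C$ is the unique minimizer of program~\eqref{eq:tls2}. The task is then to show (i) that $\hat C$ actually has the form $[\hat A,\ \hat A \hat X]$ required by program~\eqref{eq:tls1}, i.e.\ that the last $d$ columns of $\hat C$ lie in the column span of the first $n$ columns, and (ii) to identify $\hat X = -V_{12}V_{22}^{-1}$.

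For step (i) and (ii) together, I would write $V = \begin{bmatrix} V_{11} & V_{12} \\ V_{21} & V_{22}\end{bmatrix}$ and note that the last $d$ columns of $V$, namely $\begin{bmatrix} V_{12} \\ V_{22}\end{bmatrix}$, span the orthogonal complement of the top-$n$ right singular subspace; equivalently, for the matrix $\begin{bmatrix} X \\ -I_d\end{bmatrix}$ with $X = -V_{12}V_{22}^{-1}$ (well-defined precisely because $V_{22}$ is nonsingular), we have $C\begin{bmatrix} X \\ -I_d\end{bmatrix}$ lying in the bottom singular directions. The key computation is: since the columns of $\begin{bmatrix} V_{12}\\ V_{22}\end{bmatrix}$ are orthogonal to the first $n$ right singular vectors, and since $\begin{bmatrix} X \\ -I_d \end{bmatrix} = -\begin{bmatrix} V_{12} \\ V_{22}\end{bmatrix} V_{22}^{-1}$ has columns in that same span, it follows that $\hat C \begin{bmatrix} X \\ -I_d\end{bmatrix} = 0$. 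Writing $\hat C = [\hat A,\ \hat B]$, this is exactly $\hat A X = \hat B$, which proves $\hat B$ is in the column span of $\hat A$ and simultaneously exhibits $\hat X = X = -V_{12}V_{22}^{-1}$ as the solution. Finally, $\Delta A = \hat A - A$, $\Delta B = \hat B - B$ gives a feasible triple for \eqref{eq:tls1} with cost $\|\hat C - C\|_F = \OPT$, and any feasible solution of \eqref{eq:tls1} yields a rank-$n$ competitor for \eqref{eq:tls2}, so this triple is optimal; uniqueness of $\hat C$ plus invertibility of $V_{22}$ (which makes $\hat X$ determined by $\hat C$) gives uniqueness of $\hat X$.

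The main obstacle, and the only genuinely delicate point, is the orthogonality/rank bookkeeping needed to conclude $\hat C\begin{bmatrix} X\\ -I_d\end{bmatrix}=0$: one has to be careful that $\hat C = C V_n V_n^\top$ where $V_n$ holds the first $n$ columns of $V$, and that $V_n^\top \begin{bmatrix} V_{12} \\ V_{22}\end{bmatrix} = 0$ by orthonormality of $V$'s columns — so that right-multiplying $\hat C$ by any vector in the span of the trailing columns of $V$ annihilates it. The nonsingularity of $V_{22}$ is used exactly once but crucially: it guarantees that the trailing-column span contains a block of the form $\begin{bmatrix} * \\ -I_d\end{bmatrix}$, which is what lets us solve for $\hat B$ in terms of $\hat A$ and read off $\hat X$. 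Since this theorem is quoted from \cite{vv91}, I would present the argument compactly and refer there for the full details of the uniqueness claims.
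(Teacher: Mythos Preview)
The paper does not give its own proof of this theorem; it is stated as a cited result from \cite{vv91} and used only as background in Section~\ref{sec:closed_form_tls}. Your argument is correct and is essentially the standard proof: Eckart--Young gives the unique best rank-$n$ approximation $\hat C$ (uniqueness from the spectral gap $\sigma_n>\sigma_{n+1}$), and then orthogonality of the columns of $V$ yields $\hat C\,[V_{12}^\top,\,V_{22}^\top]^\top=0$, from which $\hat A X=\hat B$ with $X=-V_{12}V_{22}^{-1}$ follows once $V_{22}$ is invertible.

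One small point worth tightening: your uniqueness claim for $\hat X$ says invertibility of $V_{22}$ ``makes $\hat X$ determined by $\hat C$.'' What is actually needed is that $\hat A$ has full column rank $n$, so that $\hat A X=\hat B$ has a unique solution. Since $\hat A = U_n\,\mathsf{diag}(\sigma_1,\dots,\sigma_n)\,V_{11}^\top$, this amounts to $V_{11}$ being nonsingular. For an orthogonal matrix partitioned with square diagonal blocks, nonsingularity of $V_{22}$ is equivalent to nonsingularity of $V_{11}$ (a consequence of the CS decomposition, or directly from the block relations in $V^\top V=VV^\top=I$), so the conclusion stands---but you should state this link explicitly rather than leave it implicit.
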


Our algorithm will first find a rank $n$ matrix $C'=[A', ~ B']$ so that $\|C'-C\|_F$ is small,
then solve a regression problem to find $X'$ so that $A'X'= B'$.
In this sense,
this is the most favorable case to work with,
because a unique optimal solution $\hat C$ exists,
so if $C'$ approximates $\hat C$ well,
then the regression problem $A'X'=B'$ is solvable.

\subsection{Solution exists, but is not unique}

If $\sigma_n=\sigma_{n+1}$,
then it is still possible that the Total Least Squares problem has a unique solution,
although this time, the solution $\hat X$ is not unique.
Theorem \ref{thm:tls_not_unique} is a generalization of Theorem \ref{thm:tls_unique}.
\begin{theorem}[Theorem 3.9 in \cite{vv91}]\label{thm:tls_not_unique}
Let $p\leq n$ be a number so that $\sigma_p>\sigma_{p+1}=\cdots =\sigma_{n+1}$.
Let $V_p$ be the submatrix that contains the last $d$ rows and the last $n-p+d$ columns of $V$. 
If 
$V_p$ is non-singular,
then multiple minimizers $\hat C=[\hat A, ~ \hat B]$ exist,
and there exists $\hat X\in \R^{n\times d}$ so that $\hat A\hat X=\hat B$.
\end{theorem}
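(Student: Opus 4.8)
The plan is to turn the statement into linear algebra about the SVD $C=U\Sigma V^\top$, extending the analysis behind Theorem~\ref{thm:tls_unique}. The first step is to describe all minimizers of program~\eqref{eq:tls2}. By the Eckart--Young theorem the optimal value is $\OPT=(\sum_{i=n+1}^{n+d}\sigma_i^2)^{1/2}$, and $\hat C$ is a minimizer if and only if $\hat C=CP_W$ for the orthogonal projection $P_W$ onto an $n$-dimensional subspace $W\subseteq\R^{n+d}$ with
\[
\mathrm{span}\{v_1,\dots,v_p\}\ \subseteq\ W\ \subseteq\ \mathrm{span}\{v_i:\sigma_i\ge\sigma_{n+1}\}=\mathrm{span}\{v_1,\dots,v_q\},
\]
where $q\ge n+1$ is the largest index with $\sigma_q=\sigma_{n+1}$; equivalently $W^\perp$ is any $d$-dimensional subspace with $\mathrm{span}\{v_{q+1},\dots,v_{n+d}\}\subseteq W^\perp\subseteq\mathrm{span}\{v_{p+1},\dots,v_{n+d}\}$. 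The only freedom is the choice of the $(n-p)$-dimensional subspace of the tied block $\mathrm{span}\{v_{p+1},\dots,v_q\}$ retained in $W$; in the regime of interest $\sigma_n=\sigma_{n+1}$, so $p<n<q$ (the case $p=n$ being exactly Theorem~\ref{thm:tls_unique}), and this choice ranges over a positive-dimensional Grassmannian, so there are infinitely many distinct minimizers $\hat C$ — this gives the ``multiple minimizers'' claim.

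The second step is to express ``$\exists\,\hat X\in\R^{n\times d}$ with $\hat A\hat X=\hat B$'' in terms of $W$. Write $\hat C=[\hat A,\hat B]$ and let $\pi\colon\R^{n+d}\to\R^d$ be the projection onto the last $d$ coordinates. Then $\hat A\hat X=\hat B$ holds exactly when the column span of $\bigl[\begin{smallmatrix}-\hat X\\ I_d\end{smallmatrix}\bigr]$ lies in $\ker\hat C$; since that column span is mapped onto $\R^d$ by $\pi$, it suffices (and, by a rank count, is necessary) that $\pi(\ker\hat C)=\R^d$. Because $\ker\hat C\supseteq W^\perp$ and $\dim W^\perp=d$, it is enough to find an admissible $W$ for which $\pi|_{W^\perp}$ is a bijection onto $\R^d$, i.e., for which the $d\times d$ matrix formed from the last $d$ rows of an orthonormal basis of $W^\perp$ is invertible. (Any rank-deficiency of $C$ only enlarges $\ker\hat C$ and is harmless here.)

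The main obstacle — and the only place the hypothesis enters — is to construct such a $W$ from the non-singularity of $V_p$, the $d\times(n+d-p)$ block of $V$ consisting of the last $d$ rows and the columns $v_{p+1},\dots,v_{n+d}$. Full row rank of $V_p$ says precisely that $\pi$ maps $\mathrm{span}\{v_{p+1},\dots,v_{n+d}\}$ onto $\R^d$, so its restriction there has kernel $\Gamma$ of dimension $(n+d-p)-d=n-p$. One then wants a $d$-dimensional $W^\perp$ that (i) contains the forced discards $\mathrm{span}\{v_{q+1},\dots,v_{n+d}\}$, so that $W$ is admissible, and (ii) meets $\Gamma$ only in $\{0\}$, so that $\pi|_{W^\perp}$ is injective; concretely one chooses the $(q-n)$-dimensional subspace of the tied block $\mathrm{span}\{v_{p+1},\dots,v_q\}$ to include in $W^\perp$ so as to avoid (the image of) $\Gamma$, and checks by a dimension count that $W^\perp\oplus\Gamma=\mathrm{span}\{v_{p+1},\dots,v_{n+d}\}$. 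Arranging (i) and (ii) simultaneously is the delicate point, and it is exactly here that the rank-$d$ hypothesis is used. Once $W^\perp$ is fixed, bijectivity of $\pi|_{W^\perp}$ yields a basis of $W^\perp$ of the form $\bigl[\begin{smallmatrix}-\hat X\\ I_d\end{smallmatrix}\bigr]$ for a unique $\hat X\in\R^{n\times d}$, and $\hat C\bigl[\begin{smallmatrix}-\hat X\\ I_d\end{smallmatrix}\bigr]=0$ unpacks to $\hat A\hat X=\hat B$. The detailed verification appears in \cite{vv91}.
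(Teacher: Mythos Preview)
The paper does not prove this statement: it is quoted from \cite{vv91} as background in the section on closed-form solutions to TLS, with no argument supplied. So there is no proof in the paper to compare your proposal against.

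Your outline is the right shape --- Eckart--Young with ties parametrizes the minimizers by a choice of $(n-p)$-dimensional subspace of the tied block, and solvability of $\hat A\hat X=\hat B$ becomes the condition that $\pi$ (projection to the last $d$ coordinates) is injective on $W^\perp$. But the step you flag as ``delicate'' hides a genuine obstruction that the hypothesis as written in the paper does \emph{not} remove. Every admissible $W^\perp$ must contain the forced part $F=\mathrm{span}\{v_{q+1},\dots,v_{n+d}\}$, so before you can choose anything you already need $F\cap\Gamma=\{0\}$. Full row rank of the $d\times(n{+}d{-}p)$ block $V_p$ (the only sensible reading of ``non-singular'' for that non-square matrix) does not force this. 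Concretely, take $n=d=2$, $p=1$, $q=3$, singular values $3>2=2>1$, and $(v_1,v_2,v_3,v_4)=(e_2,e_3,e_4,e_1)$: then $V_p=\bigl(\begin{smallmatrix}1&0&0\\0&1&0\end{smallmatrix}\bigr)$ has full row rank, yet $v_4\in\Gamma$, and one checks that for \emph{every} minimizer the first column of $\hat C$ vanishes, so $\hat A$ has rank $\le 1$ while $\hat C$ has rank $2$ and no $\hat X$ exists. Thus the paper's paraphrase of the hypothesis from \cite{vv91} is slightly loose; the actual condition there carries more information (about the block below the tie as well), and your deferral to that reference is the right move --- just do not expect ``$V_p$ non-singular'' alone to close the argument.
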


We can also handle this case. 
As long as the Total Least Squares problem has a solution $\hat X$,
we are able to approximate it by first finding $C'=[A', ~ B']$ and then solving a regression problem. 

\subsection{Solution does not exist}
Notice that the cost $\|\wh{C}-C\|_F^2$, where $\wh{C}$ is the optimal solution to program \eqref{eq:tls2},
always lower bounds the cost of program \eqref{eq:tls1}.
But there are cases where this cost is not approchable in program \eqref{eq:tls1}.
\begin{theorem}[Lemma 3.2 in \cite{vv91}]\label{thm:tls_nongeneric}
If $V_{22}$ is singular,
letting $\wh{C}$ denote $[\wh {A}, ~\wh{B}]$,
then $\wh {A}X=\wh{B}$ has no solution.
\end{theorem}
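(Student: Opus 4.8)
The plan is to reduce solvability of $\wh{A}X=\wh{B}$ to a rank equality, and then reduce that rank equality to the nonsingularity of $V_{22}$ via a classical determinant identity for orthogonal matrices.

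First I would record the elementary observation that $\wh{A}X=\wh{B}$ has a solution $X\in\R^{n\times d}$ if and only if $\mathrm{colspan}(\wh{B})\subseteq\mathrm{colspan}(\wh{A})$, which in turn is equivalent to $\rank(\wh{C})=\rank([\wh{A},~\wh{B}])=\rank(\wh{A})$. So it suffices to show $\rank(\wh{A})<\rank(\wh{C})$ whenever $V_{22}$ is singular. Next I would unfold the SVD: writing $U_1\in\R^{m\times n}$ and $V_1\in\R^{(n+d)\times n}$ for the first $n$ columns of $U$ and $V$, and $\Sigma_1=\diag(\sigma_1,\dots,\sigma_n)$, we have $\wh{C}=U_1\Sigma_1 V_1^\top$. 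Partition $V_1$ into its top $n$ rows $V_{11}\in\R^{n\times n}$ and its bottom $d$ rows $V_{21}\in\R^{d\times n}$ (these are exactly the blocks named in the theorem statement), so that $\wh{A}=U_1\Sigma_1 V_{11}^\top$ and $\wh{B}=U_1\Sigma_1 V_{21}^\top$. Since $U_1$ has orthonormal columns and (assuming the nondegenerate case $\sigma_n>0$, otherwise $C$ already has rank $<n$) $\Sigma_1$ is invertible, we get $\rank(\wh{C})=n$ while $\rank(\wh{A})=\rank(\Sigma_1 V_{11}^\top)=\rank(V_{11})$.

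The crux is then to show that $V_{22}$ singular forces $V_{11}$ singular. I would invoke Jacobi's identity on complementary minors of the invertible matrix $V$: the minor of $V^{-1}$ on rows $[n]$ and columns $[n]$ equals $\pm\det(V)^{-1}$ times the complementary minor of $V$ on rows $\{n+1,\dots,n+d\}$ and columns $\{n+1,\dots,n+d\}$. Since $V$ is orthogonal, $V^{-1}=V^\top$ and $\det(V)=\pm1$, so $\det(V_{11})=\det\big((V^\top)_{[n],[n]}\big)=\pm\det(V_{22})$. Hence $\det(V_{22})=0$ implies $\det(V_{11})=0$, i.e. $\rank(V_{11})<n$. (Equivalently one can read this off the CS decomposition of the orthogonal matrix $V$, which shows $|\det V_{11}|=|\det V_{22}|$ directly.)

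Finally, combining the pieces: $\rank(\wh{A})=\rank(V_{11})<n=\rank(\wh{C})$, so $\mathrm{colspan}(\wh{B})\not\subseteq\mathrm{colspan}(\wh{A})$, and therefore $\wh{A}X=\wh{B}$ has no solution. I expect the main obstacle to be the determinant identity $|\det V_{11}|=|\det V_{22}|$; the rest is routine bookkeeping with the SVD, though one should also take a moment to dispatch the degenerate case $\sigma_n=0$ (where $C$ itself has rank less than $n$ and the statement should be interpreted accordingly).
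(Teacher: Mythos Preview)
Your argument is correct. The paper does not actually supply a proof of this statement: it is quoted as Lemma~3.2 of \cite{vv91} and used as a black box in the discussion of when the closed-form TLS solution fails to exist. So there is no ``paper's own proof'' to compare against here.

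That said, your route is the standard one and matches what one finds in the total least squares literature: reduce solvability of $\wh{A}X=\wh{B}$ to $\rank(\wh{A})=\rank(\wh{C})$, use the SVD block structure to identify $\rank(\wh{A})=\rank(V_{11})$, and then invoke the complementary-minor identity $|\det V_{11}|=|\det V_{22}|$ for an orthogonal $V$ (via Jacobi's formula or the CS decomposition). Each step is sound. Two small remarks: (i) you correctly flagged the degenerate case $\sigma_n=0$; in the paper's setting this means $C$ already has rank at most $n$, so $\wh{C}=C$ and the TLS problem is trivial, which is outside the intended scope of the theorem; (ii) the determinant identity is the only nonroutine ingredient, and your justification via $V^{-1}=V^\top$ together with Jacobi's complementary minors is exactly the clean way to get it.
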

Theorem \ref{thm:tls_nongeneric} shows that even if we can compute $\hat C$ precisely,
we cannot output $X$,
because the first $n$ columns of $\hat{C}$ cannot span the rest $d$ columns.
In order to generate a meaningful result,
our algorithm will perturb $C'$ by an arbitrarily small amount 
so that $A'X'=B'$ has a solution.
This will introduce an arbitrarily small additive error  
in addition to our relative error guarantee. 

\begin{algorithm}[t]\caption{\small Least Squares and Total Least Squares Algorithms}
\begin{algorithmic}[1] {
\Procedure{LeastSquares}{$A,B$}
	\State $X \leftarrow \min_{X} \| A X - B \|_F$
	\State $C_{\LS} \leftarrow [A, ~ A X]$
	\State \Return $C_{\LS}$
\EndProcedure
\Procedure{TotalLeastSquares}{$A,B$}
	\State $C_{\TLS} \leftarrow \min_{\rank-n~C'} \| C - C' \|_F $
	\State \Return $C_{\TLS}$
\EndProcedure}
\end{algorithmic}
\end{algorithm}

\section{Omitted Proofs in Section~\ref{sec:proof_of_main_theorem}}

\subsection{Proof of Claim~\ref{clm:step_1_count_sketch}}\label{sec:proof_of_step_1_count_sketch}

\begin{proof}
Let $C^*$ be the optimal solution of $\min_{\rank-n~C' \in \R^{m \times (n + d)} } \| C' - [ A, ~ B] \|_F$.
Since $\rank(C^*)=n\ll m$,
there exist $U^*\in \R^{m\times s_1}$ and $V^{*}\in \R^{s_1\times (n+d)}$ so that $C^*=U^*V^*$,
and $\rank(U^*)=\rank(V^*)=n$.
Therefore
\begin{align*}
\min_{V\in \R^{s_1\times (n+d)}} \| U^* V - C \|_F^2 =\OPT^2.
\end{align*}
Now consider the problem formed by multiplying by $S_1$ on the left,
\begin{align*}
\min_{V\in \R^{s_1\times (n+d)}}  \| S_1 U^* V - S_1 C \|_F^2 .
\end{align*}
Letting $V'$ be the minimizer to the above problem, we have
\begin{align*}
V' = (S_1 U^*)^\dagger S_1 C.
\end{align*}

Thus, we have
\begin{align*}
\min_{\rank-n~U\in \R^{m \times s_1}} \| U S_1 C - C \|_F^2 
\leq & ~ \| U^* (S_1 U^*)^\dagger S_1 C - C \|_F^2 \\
= & ~ \| U^* V' - C \|_F^2 \\
\leq & ~ (1+\epsilon) \| S_1 U^* V' - S_1 C \|_F^2 \\
\leq & ~ (1+\epsilon) \| S_1 U^* V^* - S_1 C \|_F^2 \\
\leq & ~ (1+\epsilon)^2 \| U^* V^* - C \|_F^2 \\
= & ~ (1+\epsilon)^2 \OPT^2
\end{align*}
where the first step uses the fact that $U^* (S_1 U^*)^\dagger S_1\in \R^{m \times s_1}$ with rank $n$,
the second step is the definition of $V'$,
the third step follows from the definition of the Count-Sketch matrix $S_1$ and Theorem \ref{thm:count_sketch},
the fourth step uses the optimality of $V'$,
and the fifth step again uses Theorem \ref{thm:count_sketch}.
\end{proof}

\subsection{Proof of Claim~\ref{clm:step_2_leverage_score}}\label{sec:proof_of_step_2_leverage_score}

\begin{proof}
We have
\begin{align*}
\| U_2 S_1 C - C \|_F^2 \leq & ~ (1+\epsilon) \| U_2 S_1 C D_1- C D_1\|_F^2 \\
\leq & ~ (1+\epsilon) \| U_1 S_1 C D_1- C D_1\|_F^2 \\
\leq & ~ (1 + \epsilon)^2 \| U_1 S_1 C - C \|_F^2,
\end{align*}
where the first step uses the property of a leverage score sampling matrix $D_1$, the second step follows from the definition of $U_2$ (i.e., $U_2$ is the minimizer), and the last step follows from the property of the leverage score sampling matrix $D_1$ again. 
\end{proof}

\subsection{Proof of Claim~\ref{clm:step_3}}\label{sec:proof_of_step_3}

\begin{proof}
From Claim \ref{clm:step_2_leverage_score} we have 
that $U_2 \in \text{colspan}(C D_1)$. 
Hence we can choose $Z$ so that $CD_1Z=U_2$.
Then by Claim \ref{clm:step_1_count_sketch} and Claim \ref{clm:step_2_leverage_score},
we have
\begin{align*}
\| C D_1 Z S_1 C - C \|_F^2= \| U_2 S_1 C - C \|_F^2 \leq (1+\epsilon)^4 \OPT^2.
\end{align*}
Since $Z_1$ is the optimal solution, the objective value can only be smaller.
\end{proof}

\subsection{Proof of Claim~\ref{clm:step_4_leverage_score}}\label{sec:proof_of_step_4_leverage_score}

\begin{proof}
Recall that $Z_1=\arg\min_{ \rank-n~Z \in \R^{d_1 \times s_1} } \| C D_1 Z S_1 C - C \|_F^2 $.
Then we have
\begin{align*}
  \| C D_1 Z_2 S_1 C - C \|_F^2  
\leq & ~(1+\epsilon) \| D_2C D_1 Z_2 S_1 C - D_2C \|_F^2\\
\leq & ~(1+\epsilon) \| D_2C D_1 Z_1 S_1 C - D_2C \|_F^2\\
\leq & ~ (1+\epsilon)^2 \| C D_1 Z_1 S_1 C - C \|_F^2,
\end{align*}
where the first step uses the property of the leverage score sampling matrix $D_2$,
the second step follows from the definition of $Z_2$ (i.e., $Z_2$ is a minimizer), 
and the last step follows from the property of the leverage score sampling matrix $D_2$.
\end{proof}

\subsection{Proof of Claim~\ref{clm:step_5_cost}}\label{sec:proof_of_step_5_cost}

\begin{proof}
\begin{align*}
  ~ \| \wh{C} - C \|_F^2  
 =& ~\| CD_1 \cdot Z_2 \cdot S_1 C - C \|_F^2\\
 \leq& ~(1+\epsilon)^2 \| C D_1 Z_1 S_1 C - C \|_F^2 \\
\leq & ~ (1+O(\epsilon)) \OPT^2
\end{align*}
where the first step is the definition of $\wh{C}$,
the second step is Claim \ref{clm:step_4_leverage_score},
and the last step is Claim \ref{clm:step_3}.
\end{proof}

\subsection{Proof of Claim~\ref{clm:step_5_multiple_regression}}\label{sec:proof_of_step_5_multiple_regression}

\begin{proof}
By the condition that $\wh{C}=[\wh{A} ~ \wh{A}\wh{X}]$,
$\wh{B}=\wh{A}\wh{X}$,
hence
$\wh{X}$ is the optimal solution to the program $\min_{ X \in \R^{n \times d} } \| \wh{A} X - \wh{B} \|_F^2$.
Hence by Theorem \ref{thm:count_sketch},
with probability at least $0.99$,
\begin{align*}
\| \wh{A} \ov{X} - \wh{B} \|_F^2\leq (1+\epsilon)\| \wh{A} \wh{X} - \wh{B} \|_F^2=0
\end{align*}
Therefore
\[
\| [\wh{A}, ~\wh{A} \ov{X}] - [A, ~B] \|_F^2 = \| [\wh{A}, ~\wh{B} ] - C \|_F^2= \| \wh{C}  - C \|_F^2. 
\]
Then it follows from Claim~\ref{clm:step_5_cost}.
\end{proof}

\subsection{Proof of Lemma~\ref{lem:split}}\label{sec:proof_of_split}

\begin{proof}

{\bf Proof of running time.}
Let us first check the running time.
We can compute $\ov{C}=S_2\cdot \wh{C}$ by first computing $S_2\cdot CD_1$,
then computing $(S_2CD_1)\cdot Z_2$,
then finally computing $S_2CD_1Z_2S_1C$.
Notice that $D_1$ is a leverage score sampling matrix,
so $\nnz(CD_1)\leq \nnz(C)$.
So by Definition \ref{def:count_sketch_transform},
we can compute $S_2\cdot CD_1$ in time $O(\nnz(C))$.
All the other matrices have smaller size,
so we can do matrix multiplication in time $O(d\cdot \poly(n/\epsilon))$.
Once we have $\ov{C}$,
the independence between columns in $\ov{A}$ can be checked in time $O(s_2\cdot n)$.
The FOR loop will be executed at most $n$ times,
and inside each loop,
line \eqref{alg:min_j} will take at most $d$ linear independence checks.
So the running time of the FOR loop is at most $O(s_2\cdot n)\cdot n\cdot d=O(d\cdot \poly(n/\epsilon))$.
Therefore the running time is as desired. 

{\bf Proof of Correctness.} 
We next argue the correctness of procedure $\textsc{Split}$.
Since $\rank(\wh{C})=n$,
with high probability $\rank(\ov{C})=\rank(S_2\cdot \wh{C})= n$.
Notice that $\ov{B}$ is never changed in this subroutine.
In order to show there exists an $X$ so that $\ov{A}X=\ov{B}$,
it is sufficient to show that at the end of procedure $\textsc{Split}$,
$\rank(\ov{A})=\rank(\ov{C})$,
because this means that the columns of $\ov{A}$ span each of the columns of $\ov{C}$,
including $\ov{B}$.
Indeed,
whenever $\rank(\ov{A}_{*,[i]})<i$,
line 25 will be executed.
Then by doing line 26,
the rank of $\ov{A}$ will increase by $1$,
since by the choice of $j$,
$\ov{A}_{*,i}+\delta\cdot \ov{B}_{*,j}$ is independent form $\ov{A}_{*,[i-1]}$.
Because $\rank(\ov{C})=n$,
at the end of the FOR loop we will have $\rank(\ov{A})=n$.

Finally let us compute the cost.
In line \eqref{alg:choice_delta} we use $\delta/\poly(m)$,
and thus 
\begin{align}\label{eq:split_delta}
\|[\wh{A}, ~\wh{B}]- \wh{C}\|_F^2\leq \frac{\delta^2}{\poly(m)} \cdot \|\wh{B}\|_F^2\leq \delta^2. 
\end{align}
We know that
$\ov{X}$ is the optimal solution to the program $\min_{ X \in \R^{n \times d} } \| S_2\wh{A} X - S_2\wh{B} \|_F^2$.
Hence by Theorem \ref{thm:count_sketch},
with probability $0.99$,
\begin{align*}
\| \wh{A} \ov{X} - \wh{B} \|_F^2\leq (1+\epsilon)\min_{ X \in \R^{n \times d} } \| S_2\wh{A} X - S_2\wh{B} \|_F^2=0. 
\end{align*}
which implies $\wh{A} \ov{X}= \wh{B}$.
Hence we have
\begin{align*}
  ~ \| [\wh{A}, ~ \wh{A} \ov{X}] - C \|_F  
\leq & ~\| [\wh{A}, ~ \wh{A}\ov{X}] - \wh{C} \|_F+\| \wh{C} - C \|_F\\
= & ~\| [\wh{A}, ~ \wh{B}] - \wh{C} \|_F+\| \wh{C} - C \|_F\\
\leq & ~\delta+\| \wh{C} - C \|_F
\end{align*}
where the first step follows by triangle inequality, and the last step follows by \eqref{eq:split_delta}.
\end{proof}

\subsection{Proof of Lemma~\ref{lem:running_time}}\label{sec:proof_of_running_time}

\begin{proof}
We bound the time of each step:

1. Construct the $s_1\times m$ Count-Sketch matrix $S_1$ and compute $S_1C$ with $s_1=O( n/\epsilon )$. This step takes time $\nnz(C)  + d \cdot \poly(n/\epsilon)$.

2. Construct the $(n+d) \times d_1$ leverage sampling and rescaling matrix $D_1$  with $d_1 = \tilde{O}(n/\epsilon)$ nonzero diagonal entries and compute $CD_1$. This step takes time $\tilde{O}(\nnz(C)  + d \cdot \poly(n/\epsilon))$.

3. Construct the $d_2 \times m$ leverage sampling and rescaling matrix $D_2$  with $d_2 = \tilde{O}(n/\epsilon)$ nonzero diagonal entries. This step takes time $\tilde{O}(\nnz(C)  + d \cdot \poly(n/\epsilon))$ according to Theorem \ref{thm:leverage_score_time}.

4. Compute $Z_2 \in \R^{d_1 \times s_1}$ by solving the rank-constrained system: 
\begin{align*}
	\min_{\rank-n~Z \in \R^{d_1 \times s_1}} \| D_2 C D_1 Z S_1 C - D_2 C \|_F^2.
\end{align*}
Note that $D_2 C D_1$ has size $\tilde{O}(n/\epsilon)\times \tilde{O}(n/\epsilon)$, $S_1C$ has size $O( n/\epsilon )\times (n+d)$, and $D_2C$ has size $\tilde{O}(n/\epsilon)\times (n+d)$, so according to Theorem~\ref{thm:generalized_rank_constrained_matrix_approximations}, we have an explicit closed form for $Z_2$, and the time taken is $d\cdot \poly(n/\epsilon)$.

5. Run procedure $\textsc{Split}$ to get $\ov{A}\in \R^{s_2\times n}$ and $\ov{B}\in \R^{s_2\times d}$ with $s_2=O( n/\epsilon )$.
By Lemma \ref{lem:split},
this step takes time $O(\nnz(C)+d\cdot \poly(n/\epsilon))$.

6. Compute $X$ by solving the regression problem $\min_{X\in \R^{n\times d}}\| \ov{A} X - \ov{B} \|_F^2$ in time $O(d \cdot \poly(n/\epsilon))$. This is because $X= ( \ov{A} )^\dagger \ov{B}$, and $\ov{A}$ has size $O( n/\epsilon )\times n$, so we can compute $( \ov{A} )^\dagger$ in time $O(( n/\epsilon )^{\omega})=\poly(n/\epsilon)$, and then compute $X$ in time $O( (n/\epsilon)^2\cdot d )$ since $\ov{B}$ is an $O( n/\epsilon )\times d$ matrix.

	

Notice that $\nnz(C)=\nnz(A)+\nnz(B)$, so we have the desired running time. 
\end{proof}

\subsection{Procedure \textsc{Evaluate}}\label{subsec:evaluate}

In this subsection we explain what procedure \textsc{Evaluate} does.
Ideally,
we would like to apply procedure \textsc{Split} on the matrix $\wh{C}$ directly so that the linear system 
$\wh{A}X=\wh{B}$ has a solution.
However,
$\wh{C}$ has $m$ rows, which is computationally expensive to work with.
So in the main algorithm we actually apply procedure \textsc{Split} on the sketched matrix $S_2\wh{C}$. 
When we need to compute the cost,
we shall redo the operations in procedure \textsc{Split} on $\wh{C}$ to split $\wh{C}$ correctly.
This is precisely what we are doing in lines \eqref{alg:start-wha} to \eqref{alg:end-wha}.

\subsection{Putting it all together}\label{sec:last_step_theorem_1}
\begin{proof}
The running time follows from Lemma~\ref{lem:running_time}. For the approximation ratio, 
let $\wh{A}$, $\ov{A}$ be defined as in Lemma \ref{lem:split}. 
From Lemma~\ref{lem:split}, there exists $\ov{X} \in \R^{n \times d}$ satisfying $\ov{A} \ov{X} = \ov{B}$. Since $X$ is obtained from solving the regression problem $\| \ov{A} X - \ov{B} \|_F^2$, we also have $\ov{A} X = \ov{B}$. Hence with probability 0.9, 

\hspace{25mm}$
\| [ \widehat{A}, \widehat{A} X ] - C \|_F 
\leq \delta + \| \widehat{C} - C \|_F 
\leq \delta + ( 1+ O ( \epsilon ) ) \OPT,
$ 

where the first step uses Lemma~\ref{lem:split} and the second step uses Claim~\ref{clm:step_5_cost}.
Rescaling $\epsilon$ gives the desired statement.
\end{proof}

\section{Extension to regularized total least squares problem}\label{sec:regularized}

In this section we provide our algorithm for the regularized total least squares problem and prove its correctness.
Recall our regularized total least squares problem is defined as follows.
\begin{align}\label{eq:ftls_regularized}
\OPT:=\min_{\wh{A}\in \R^{m\times n}, X\in\R^{n\times d},U\in \R^{m \times n}, V\in \R^{n\times (n+d)} } &\| UV - [ A, ~ B] \|_F^2+\lambda \|U\|_F^2+\lambda \|V\|_F^2\\
\mathrm{~subject~to~} & [\wh{A},~\wh{A}X]=UV\notag
\end{align}
\begin{algorithm*}[!t]\caption{Our Fast Total Least Squares Algorithm with Regularization}\label{alg:ftls_regularized}
\begin{algorithmic}[1] {
\Procedure{FastRegularizedTotalLeastSquares}{$A,B,n,d,\lambda\epsilon,\delta$} \Comment{Theorem~\ref{thm:regular_ftls}} 
	\State $s_1 \leftarrow \wt{O}(n/\epsilon)$, $s_2 \leftarrow \wt{O}(n/\epsilon)$, , $s_3 \leftarrow \wt{O}(n/\epsilon)$, $d_1 \leftarrow \wt{O}(n/\epsilon)$
	 \State Choose $S_1 \in \R^{ s_1 \times m}$ to be a CountSketch matrix, then compute $S_1 C$
	 \State Choose $S_2 \in \R^{ s_2 \times (n+d)}$ to be a CountSketch matrix, then compute $CS_2^\top$
	 \State Choose $D_1 \in \R^{ d_1 \times m }$ to be a leverage score sampling and rescaling matrix according to the rows of $CS_2^\top$
	 \State $\wh{Z}_1,\wh{Z}_2 \leftarrow \arg\min_{ Z_1\in \R^{n\times s_1},Z_2\in \R^{s_2\times n} } \| D_1CS_2^\top Z_2Z_1S_1C - D_1C \|_F^2+\lambda\|D_1CS_2^\top Z_2\|_F^2+\lambda\|Z_1S_1C\|_F^2$ \Comment{Theorem~\ref{lem:low_rank_exact_solution}}
	 \State $\ov{A}, \ov{B}, \pi \leftarrow \textsc{Split}(C S_2^\top, \wh{Z}_1,\wh{Z}_2, S_1 C ,n,d,\delta/\poly(m))$, $X \leftarrow \min \| \ov{A} X - \ov{B} \|_F$
	 \State \Return $X$
		
\EndProcedure
\Procedure{Split}{$C S_2^\top, \wh{Z}_1,\wh{Z}_2, S_1 C ,n,d,\delta$}\Comment{Lemma \ref{lem:split}}
	\State Choose $S_3 \in \R^{s_3 \times m}$ to be a CountSketch matrix
	\State $\ov{C}\leftarrow (S_3\cdot C S_2^\top) \cdot \wh{Z}_2\cdot \wh{Z}_1 \cdot S_1 C$ \Comment{$\wh{C} =C S_2^\top \wh{Z}_2\wh{Z}_1 S_1 C$ ; $\ov{C}=S_3\wh{C}$}
	\State $\ov{A} \leftarrow \ov{C}_{*,[n]}$,  $\ov{B} \leftarrow \ov{C}_{*,[n+d] \backslash [n]}$ \Comment{$\wh{A}= \wh{C}_{*,[n]},\wh{B}= \wh{C}_{*,[n+d] \backslash [n]} $; $\ov{A}=S_3\wh{A}$, $\ov{B}=S_3\wh{B}$}
	\State $T \leftarrow \emptyset $, $\pi(i)=-1$ for all $i\in[n]$
	\For{$i = 1 \to n$}
		\If{$\ov{A}_{*,i}$ is linearly dependent of $\ov{A}_{*,[n] \backslash \{i\} }$}
			\State $j \leftarrow \min_{j \in [d] \backslash T} \{ \ov{B}_{*,j} \text{~is~linearly~independent~of~}\ov{A} \}$, $\ov{A}_{*,i} \leftarrow \ov{A}_{*,i} + \delta \cdot \ov{B}_{*,j} $, $T \leftarrow T \cup \{ j \}$, $\pi(i)\leftarrow j$
		\EndIf
	\EndFor
	\State \Return $\ov{A}$, $\ov{B}$, $\pi$  \Comment{$\pi: [n]\rightarrow \{-1\}\cup([n+d]\backslash [n])$}
\EndProcedure 

}
\end{algorithmic}
\end{algorithm*}

\begin{definition}[Statistical Dimension, e.g., see \cite{acw17}]
For $\lambda>0$ and rank $k$ matrix $A$,
the \emph{statistical dimension} of the ridge regression problem with regularizing weight $\lambda$ is defined as
\begin{align*}
\sd_\lambda(A):=\sum_{i\in [k]}\frac {1}{1+\lambda/\sigma_i^2}
\end{align*}
where $\sigma_i$ is the $i$-th singular value of $A$ for $i\in[k]$.
\end{definition}
Notice that $\sd_\lambda(A)$ is decreasing in $\lambda$,
so we always have $\sd_\lambda(A)\leq \sd_0(A)=\rank(A)$.

\begin{lemma}[Exact solution of low rank approximation with regularization, Lemma 27 of \cite{acw17}]\label{lem:low_rank_exact_solution}
Given positive integers $n_1, n_2, r, s, k$ and parameter $\lambda \geq 0$. 
For $C \in \R^{n_1 \times r}$, $D \in \R^{s \times n_2}$, $B \in \R^{n_1 \times n_2}$, the problem of finding
\begin{align*}
\min_{Z_R \in \R^{r \times k} , Z_S \in \R^{k \times s}  } 
\| C Z_R Z_S D - B \|_F^2 + \lambda \| C Z_R \|_F^2 + \lambda \| Z_S D \|_F^2,
\end{align*}
and the minimizing of $C Z_R \in \R^{n_1 \times k}$ and $Z_S D \in \R^{k \times n_2}$, can be solved in
\begin{align*}
O(n_1 r \cdot \rank(C) + n_2 s \cdot \rank(D) + \rank(D) \cdot n_1 (n_2 + r_C) )
\end{align*}
time.
\end{lemma}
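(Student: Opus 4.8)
The plan is to strip away, exactly and without loss of optimality, both Tikhonov terms and the implicit subspace constraints (that $CZ_R$ lie in the column space of $C$ and $Z_SD$ in the row space of $D$), reducing to an unconstrained ``low rank plus nuclear norm'' problem with a closed-form SVD solution, and then to reconstruct the required products $CZ_R$, $Z_SD$ and account for the cost of each step.

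\textbf{Reduction.} Let $C=U_C\Sigma_CV_C^\top$ and $D=U_D\Sigma_DV_D^\top$ be thin SVDs, where $U_C\in\R^{n_1\times r_C}$, $V_D\in\R^{n_2\times r_D}$ have orthonormal columns, $\Sigma_C,\Sigma_D$ are invertible, $r_C=\rank(C)$, $r_D=\rank(D)$. Substitute $P:=\Sigma_CV_C^\top Z_R\in\R^{r_C\times k}$ and $Q:=Z_SU_D\Sigma_D\in\R^{k\times r_D}$; one verifies $CZ_R=U_CP$, $Z_SD=QV_D^\top$, $\norm{CZ_R}_F=\norm{P}_F$, $\norm{Z_SD}_F=\norm{Q}_F$, and that $(Z_R,Z_S)\mapsto(P,Q)$ maps onto all of $\R^{r_C\times k}\times\R^{k\times r_D}$ (inverse $Z_R=V_C\Sigma_C^{-1}P$, $Z_S=Q\Sigma_D^{-1}U_D^\top$), so no optimizer is lost. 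Splitting $B$ via the orthogonal projections $U_CU_C^\top$ and $V_DV_D^\top$, the Pythagorean theorem gives $\norm{U_CPQV_D^\top-B}_F^2=\norm{PQ-\wt B}_F^2+(\norm{B}_F^2-\norm{\wt B}_F^2)$ with $\wt B:=U_C^\top BV_D\in\R^{r_C\times r_D}$. Hence, up to an additive constant, the objective is $\norm{PQ-\wt B}_F^2+\lambda\norm{P}_F^2+\lambda\norm{Q}_F^2$, and it suffices to return $CZ_R^*=U_CP^*$ and $Z_S^*D=Q^*V_D^\top$ for an optimal $(P^*,Q^*)$.

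\textbf{Closed form.} For a fixed product $Y=PQ$ (necessarily of rank $\le k$), the minimum of $\norm{P}_F^2+\norm{Q}_F^2$ over factorizations with inner dimension $k$ equals $2\norm{Y}_*$: the bound $\norm{P}_F^2+\norm{Q}_F^2\ge2\norm{P}_F\norm{Q}_F\ge2\norm{PQ}_*$ is tight for the balanced factorization built from the SVD of $Y$ (zero-padded to inner dimension $k$). So the problem becomes $\min_{\rank(Y)\le k}\norm{Y-\wt B}_F^2+2\lambda\norm{Y}_*$. With $\wt B=\wh U\wh\Sigma\wh V^\top$ and $\wh\sigma_1\ge\wh\sigma_2\ge\cdots\ge0$, von Neumann's trace inequality yields $\norm{Y-\wt B}_F^2+2\lambda\norm{Y}_*\ge\sum_i\big(\sigma_i(Y)^2-2\sigma_i(Y)(\wh\sigma_i-\lambda)\big)+\norm{\wt B}_F^2$, with equality iff $Y$ shares its singular vectors with $\wt B$; minimizing each summand over $\sigma_i(Y)\ge0$ with at most $k$ nonzero (the per-index gain $(\wh\sigma_i-\lambda)^2$ is nonincreasing in $i$) gives $Y^*=\sum_{i=1}^{k'}(\wh\sigma_i-\lambda)\,\wh u_i\wh v_i^\top$ with $k'=\min\{k,\,|\{i:\wh\sigma_i>\lambda\}|\}$. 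Take $P^*=\wh U_{k'}(\wh\Sigma_{k'}-\lambda I)^{1/2}$ and $Q^*=(\wh\Sigma_{k'}-\lambda I)^{1/2}\wh V_{k'}^\top$, zero-padded to $k$ columns/rows.

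\textbf{Running time and main obstacle.} Bounding step by step: the thin SVDs of $C$ and $D$ cost $O(n_1r\cdot\rank(C))$ and $O(n_2s\cdot\rank(D))$; forming $\wt B=U_C^\top(BV_D)$ costs $O(n_1n_2\cdot\rank(D))$ for $BV_D$ plus $O(n_1\cdot\rank(C)\cdot\rank(D))$ for the second product, i.e.\ $O(\rank(D)\cdot n_1(n_2+r_C))$; the SVD and soft-thresholding of the small matrix $\wt B\in\R^{r_C\times r_D}$ are lower order; and the two final multiplications $U_CP^*$, $Q^*V_D^\top$ are lower order, giving the claimed total. I expect the main obstacle to be the reduction step: verifying that the change of variables is genuinely surjective onto $\R^{r_C\times k}\times\R^{k\times r_D}$ so that the reparametrized problem is exactly equivalent, and tracking the additive constants through the Pythagorean split. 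The von Neumann / coordinatewise argument for the optimal $Y$ (and the interaction of the soft-threshold $\lambda$ with the rank cap $k$), as well as the running-time bookkeeping, are then routine.
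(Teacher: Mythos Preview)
The paper does not give its own proof of this lemma; it is quoted as Lemma~27 of \cite{acw17} and used as a black box in the regularized algorithm. Your argument is correct and follows the standard route one would expect that reference to take: pass to the orthonormal bases $U_C,V_D$ via thin SVDs, observe that the objective depends on $(Z_R,Z_S)$ only through $(P,Q)=(\Sigma_CV_C^\top Z_R,\,Z_SU_D\Sigma_D)$ and that this reparametrization is onto, split off the Pythagorean constant, invoke the variational identity $\min_{PQ=Y}\|P\|_F^2+\|Q\|_F^2=2\|Y\|_*$, and finish with the rank-capped soft-thresholding of $\wt B=U_C^\top BV_D$ via von Neumann. The reduction step you flagged as the ``main obstacle'' is indeed the one place a reader might hesitate, and you have handled it: the map is surjective because $\Sigma_CV_C^\top$ has full row rank and $U_D\Sigma_D$ has full column rank, while the objective factors through $(P,Q)$, so nothing is lost.

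Two small points worth tightening. First, the bound $O(n_1r\cdot\rank(C))$ for the thin SVD of $C$ is sharper than the textbook $O(n_1r\min(n_1,r))$; you should say explicitly that you first run a rank-revealing step (e.g., pivoted QR terminating at rank $r_C$) to reduce to an $n_1\times r_C$ problem. Second, for the final reconstructions $U_CP^*$ and $Q^*V_D^\top$ to be ``lower order'' you are implicitly using that $P^*,Q^*$ have only $k'\le\min(r_C,r_D)$ nonzero columns/rows, so the effective costs $O(n_1r_Ck')$ and $O(n_2r_Dk')$ are dominated by the first two terms of the stated bound; make that explicit rather than saying ``lower order''.
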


\begin{theorem}[Sketching for solving ridge regression, Theorem 19 in \cite{acw17}]\label{thm:sketch_rigid}
Fix $m\geq n$.
For $A\in \R^{m\times n}$, $B\in \R^{n\times d}$ and $\lambda>0$,
consider the rigid regression problem
\begin{align*}
\min_{X\in \R^{n\times d}} \|AX-B\|_F^2+\lambda\|X\|_F^2.
\end{align*}
Let $S \in \R^{s \times m}$ be a CountSketch matrix with $s = \wt{O}( \sd_\lambda(A)/\epsilon )=\wt{O}(n/\epsilon )$,
then with probability 0.99,
\begin{align*}
\min_{X\in \R^{n\times d}} \|SAX-SB\|_F^2+\lambda\|X\|_F^2\leq (1+\epsilon)\min_{X\in \R^{n\times d}} \|AX-B\|_F^2+\lambda\|X\|_F^2
\end{align*}
Moreover,
$SA$, $SB$ can be computed in time 
\begin{align*}
O(\nnz(A)+\nnz(B))+\wt{O}\left((n+d)(\sd_\lambda(A)/\epsilon+\sd_\lambda(A)^2)\right).
\end{align*}
\end{theorem}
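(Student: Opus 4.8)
The plan is to reduce the ridge regression problem to an ordinary least squares (OLS) problem by the standard augmentation trick, and then run the usual sketch-and-solve analysis while tracking the statistical dimension in place of the ambient dimension $n$. Set $\bar A = \begin{bmatrix} A \\ \sqrt{\lambda}\, I_n \end{bmatrix}\in\R^{(m+n)\times n}$ and $\bar B=\begin{bmatrix} B \\ 0\end{bmatrix}$, so that $\|AX-B\|_F^2+\lambda\|X\|_F^2=\|\bar A X-\bar B\|_F^2$; moreover the sketched objective $\|SAX-SB\|_F^2+\lambda\|X\|_F^2$ equals $\|\bar S\bar A X-\bar S\bar B\|_F^2$ for the block-diagonal sketch $\bar S=\begin{bmatrix} S & 0\\ 0 & I_n\end{bmatrix}$. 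It therefore suffices to show that $\bar S$ is a faithful sketch for the OLS problem $\min_X\|\bar A X-\bar B\|_F^2$: if I establish, for all $X$, that $\|\bar S\bar A X-\bar S\bar B\|_F^2=(1\pm\epsilon)\|\bar A X-\bar B\|_F^2$, then taking minima on both sides yields the displayed inequality (and, a fortiori, the easy upper direction also follows by plugging the true optimizer $X^*$ into the sketched objective).

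The two properties I would establish are: (i) a ridge subspace embedding, $\|SAx\|_2^2+\lambda\|x\|_2^2 = (1\pm\epsilon_0)(\|Ax\|_2^2+\lambda\|x\|_2^2)$ for all $x$; and (ii) an approximate matrix product bound against the optimal residual $R=\bar A X^*-\bar B$, which by the normal equations satisfies $\bar A^\top R=0$. Using the thin SVD $A=U\Sigma V^\top$ and the change of variables $z=(\Sigma^2+\lambda I)^{1/2}V^\top x$, property (i) is equivalent to the spectral bound $\|M^\top S^\top S M - M^\top M\|_2 \le \epsilon_0$, where $M := U\Sigma(\Sigma^2+\lambda I)^{-1/2}$. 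The crucial computation is that $\|M\|_F^2 = \sum_i \sigma_i^2/(\sigma_i^2+\lambda)=\sd_\lambda(A)$, so the ``mass'' the CountSketch must spread is exactly the statistical dimension rather than $n$. Likewise, a short manipulation shows $\bar A^\top\bar S^\top\bar S R = A^\top(S^\top S-I)(AX^*-B)$, so the residual interacts with $A$ only through $M$, and the approximate-matrix-product error is governed by $\|M\|_F^2$ and $\|AX^*-B\|_F^2$.

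The main obstacle is quantitative: showing that $s=\wt O(\sd_\lambda(A)/\epsilon)$ CountSketch rows suffice for (i) and (ii). A naive application of the CountSketch second-moment bound $\E\|M^\top S^\top S M-M^\top M\|_F^2\le \tfrac{2}{s}\|M\|_F^4$ controls the spectral norm only with $s=O(\sd_\lambda^2/\epsilon^2)$; to obtain the claimed near-linear dependence on $\sd_\lambda$ one must invoke the sharper oblivious-subspace-embedding moment analysis of sparse embeddings, applied to the effectively $\sd_\lambda$-dimensional ``ridge'' subspace spanned by the columns of $M$, which is where the logarithmic factors hidden in $\wt O(\cdot)$ arise. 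I would import this refined estimate as the technical engine; property (ii) then follows from the same second-moment machinery, since $\bar A^\top R=0$ makes the target product vanish and so $\|\bar A^\top\bar S^\top\bar S R\|_F$ is directly controlled.

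Given (i) and (ii), I would finish with the textbook argument: from the sketched normal equations $\bar A^\top\bar S^\top\bar S\bar A(\tilde X-X^*)=-\bar A^\top\bar S^\top\bar S R$, property (i) makes $\bar A^\top\bar S^\top\bar S\bar A$ invertible and spectrally close to $\bar A^\top\bar A$, while (ii) bounds the right-hand side, together giving $\|\bar A(\tilde X-X^*)\|_F^2\le O(\epsilon)\,\OPT$; Pythagoras (using $\bar A^\top R=0$) then upgrades this to $\|\bar A\tilde X-\bar B\|_F^2\le(1+\epsilon)\OPT$, which after rescaling $\epsilon_0$ is the stated guarantee. For the running time, $SA$ and $SB$ are formed in $O(\nnz(A)+\nnz(B))$ time because CountSketch applies in input-sparsity time, and the additive $\wt O\big((n+d)(\sd_\lambda(A)/\epsilon+\sd_\lambda(A)^2)\big)$ term accounts for materializing the $s\times n$ and $s\times d$ sketched matrices and solving the resulting small ridge system in closed form.
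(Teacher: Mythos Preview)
The paper does not prove this theorem; it is quoted verbatim as Theorem~19 of \cite{acw17} and used as a black box in the proof of Claim~\ref{clm:step_1_count_sketch_regularized}. So there is no ``paper's own proof'' to compare against.

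That said, your sketch is the right one and is essentially the argument in \cite{acw17}: augment to $\bar A=\begin{bmatrix}A\\ \sqrt\lambda I_n\end{bmatrix}$, observe that the ridge leverage matrix $M=U\Sigma(\Sigma^2+\lambda I)^{-1/2}$ has $\|M\|_F^2=\sd_\lambda(A)$, and then run the standard subspace-embedding plus approximate-matrix-product analysis at effective dimension $\sd_\lambda(A)$ rather than $n$. Your identification of the quantitative obstacle (the naive Frobenius AMM bound only gives $s=O(\sd_\lambda^2/\epsilon^2)$, and one must appeal to the sharper sparse-OSE moment analysis to get the near-linear $\wt O(\sd_\lambda/\epsilon)$) is exactly the point.

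One small correction: your sentence ``for all $X$, $\|\bar S\bar A X-\bar S\bar B\|_F^2=(1\pm\epsilon)\|\bar A X-\bar B\|_F^2$'' is too strong, since $\bar B-\bar A X$ need not lie in the column span of $\bar A$ and a subspace embedding for $\bar A$ says nothing about components orthogonal to it. You implicitly fix this in the next paragraph by splitting into (i) the ridge subspace embedding and (ii) the AMM bound on $A^\top(S^\top S-I)(AX^*-B)$ and finishing via the normal equations and Pythagoras; that is the correct route, so just drop the overreaching ``for all $X$'' claim. Also note the paper's statement has a typo ($B\in\R^{n\times d}$ should be $B\in\R^{m\times d}$), which you have silently corrected.
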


We claim that it is sufficient to look at solutions of the form $CS_2^\top Z_2Z_1S_1C$.
\begin{claim}[CountSketch matrix for low rank approximation problem]\label{clm:step_1_count_sketch_regularized}
Given matrix $C \in \R^{m \times (n+d)}$. Let $\OPT$ be defined as in \eqref{eq:ftls_regularized}. For any $\epsilon>0$,
let $S_1\in \R^{s_1\times m}$, $S_2\in \R^{s_2\times m}$ be the sketching matrices defined in Algorithm \ref{alg:ftls_regularized},
then with probability $0.98$,
\begin{align*}
\min_{ Z_1\in \R^{n\times s_1},Z_2\in \R^{s_2\times n} } \| CS_2^\top Z_2Z_1S_1C - C \|_F^2+\lambda\|CS_2^\top Z_2\|_F^2+\lambda\|Z_1S_1C\|_F^2 \leq (1+\epsilon)^2 \OPT.
\end{align*}
\end{claim}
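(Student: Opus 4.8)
The plan is to mimic the (unregularized) argument in Claim~\ref{clm:step_1_count_sketch}, but carry the two regularization penalties through each sketching step and invoke Theorem~\ref{thm:sketch_rigid} in place of the plain multiple-regression guarantee (Theorem~\ref{thm:count_sketch}). First I would fix an optimal pair $(U^*, V^*)$ achieving $\OPT$ in \eqref{eq:ftls_regularized}, with $U^* \in \R^{m\times n}$ and $V^* \in \R^{n \times (n+d)}$, so that
\begin{align*}
\|U^*V^* - C\|_F^2 + \lambda\|U^*\|_F^2 + \lambda\|V^*\|_F^2 = \OPT.
\end{align*}
The goal is to show that replacing $U^*$ by something of the form $CS_2^\top Z_2$ and $V^*$ by something of the form $Z_1 S_1 C$ costs at most a $(1+\epsilon)^2$ factor.

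The natural order is to sketch on the right first, then on the left (or vice versa), peeling off one factor at a time. Consider first holding $U^*$ fixed and viewing the objective as a ridge-regression problem in $V$: $\min_V \|U^* V - C\|_F^2 + \lambda\|V\|_F^2 + \lambda\|U^*\|_F^2$, whose optimum is $V^*$. Applying Theorem~\ref{thm:sketch_rigid} with the CountSketch matrix $S_1$ acting on the $m$-dimensional side, the sketched minimizer $V' = \arg\min_V \|S_1 U^* V - S_1 C\|_F^2 + \lambda\|V\|_F^2$ satisfies, with probability $0.99$,
\begin{align*}
\|U^* V' - C\|_F^2 + \lambda\|U^*\|_F^2 + \lambda\|V'\|_F^2 \leq (1+\epsilon)\left(\|U^* V^* - C\|_F^2 + \lambda\|U^*\|_F^2 + \lambda\|V^*\|_F^2\right) = (1+\epsilon)\OPT,
\end{align*}
and crucially $V'$ lies in the row space of $S_1 C$, so $V' = Z_1 S_1 C$ for some $Z_1 \in \R^{n \times s_1}$. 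Now repeat the argument on the other side: with $V' = Z_1 S_1 C$ fixed, view the objective as a ridge regression in $U$, namely $\min_U \|U V' - C\|_F^2 + \lambda\|U\|_F^2 + \lambda\|V'\|_F^2$ — here the regression is in the transposed sense, so I would apply Theorem~\ref{thm:sketch_rigid} to $(V')^\top$ with the CountSketch matrix $S_2$, producing $U' = CS_2^\top Z_2$ in the column space of $CS_2^\top$ with
\begin{align*}
\|U' V' - C\|_F^2 + \lambda\|U'\|_F^2 + \lambda\|V'\|_F^2 \leq (1+\epsilon)\left(\|U^{**}V' - C\|_F^2 + \lambda\|U^{**}\|_F^2 + \lambda\|V'\|_F^2\right) \leq (1+\epsilon)^2 \OPT,
\end{align*}
where $U^{**}$ is the (unsketched) ridge-optimal $U$ for the fixed $V'$, which is at least as good as $U^*$. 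Taking a union bound over the two invocations of Theorem~\ref{thm:sketch_rigid} gives the claimed $0.98$ success probability, and since $\wh Z_1, \wh Z_2$ minimize the sketched objective, their value is no larger than that attained by $(Z_1, Z_2)$.

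The main obstacle is making sure the statistical-dimension bound $\sd_\lambda(\cdot) \le n$ is legitimately available at \emph{each} sketching step, so that $s_1, s_2 = \wt O(n/\epsilon)$ rows genuinely suffice: in the first step the relevant matrix is $U^*$, which has rank at most $n$, so $\sd_\lambda(U^*)\le n$; in the second step the relevant matrix is $(V')^\top$, again of rank at most $n$, so $\sd_\lambda \le n$ there too — this needs to be spelled out because $S_1$ and $S_2$ in Algorithm~\ref{alg:ftls_regularized} are declared with $\wt O(n/\epsilon)$ rows, not $\wt O(\sd_\lambda/\epsilon)$ rows for the specific matrices at hand. A secondary subtlety is that the penalty term $\lambda\|V'\|_F^2$ (resp. $\lambda\|U^*\|_F^2$) is a constant with respect to the variable being optimized at each step, so it rides along unchanged through the inequality rather than needing its own approximation; I would state this explicitly to justify that the $(1+\epsilon)$ factor multiplies the full objective (including the inert penalty). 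Finally, I would note that the additional matrices $D_1$ and $S_3$ appearing in Algorithm~\ref{alg:ftls_regularized} play no role in this claim — they are used in later steps — so the proof here is exactly the two-sided sketching argument above.
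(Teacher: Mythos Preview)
Your proposal is correct and follows essentially the same route as the paper's own proof: fix an optimal $(U^*,V^*)$, apply Theorem~\ref{thm:sketch_rigid} with $S_1$ to replace $V^*$ by some $V_1=Z_1S_1C$ in the row span of $S_1C$, then apply Theorem~\ref{thm:sketch_rigid} transposed with $S_2$ to replace $U^*$ by some $U_1=CS_2^\top Z_2$ in the column span of $CS_2^\top$, chaining the two $(1+\epsilon)$ factors and union-bounding for the $0.98$ probability. Your write-up is in fact slightly more careful than the paper's in two places --- you explicitly justify $\sd_\lambda\le n$ at each step (the paper silently uses $s_1,s_2=\wt O(n/\epsilon)$), and you spell out that the inert penalty term rides through as an additive constant --- but the argument is otherwise identical.
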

\begin{proof}
Let $U^*\in \R^{m\times n}$ and $V^{*}\in \R^{n\times (n+d)}$ be the optimal solution to the program \eqref{eq:ftls_regularized}.
Consider the following optimization problem:
\begin{align}\label{eq:regularized_V}
\min_{ V\in \R^{n\times (n+d)} } \| U^*V - C \|_F^2+\lambda \|V\|_F^2
\end{align}
Clearly $V^* \in \R^{n \times (n+d)}$ is the optimal solution to program \eqref{eq:regularized_V},
since for any solution $V \in \R^{n \times (n+d)}$ to program \eqref{eq:regularized_V} with cost $c$,
$(U^*,V)$ is a solution to program \eqref{eq:ftls_regularized} with cost $c+\lambda \|U^*\|_F^2$.

Program \eqref{eq:regularized_V} is a ridge regression problem.
Hence we can take a CountSketch matrix $S\in \R^{s_1\times m}$ with $s_1=\wt{O}(n/\epsilon)$ to obtain
\begin{align}\label{eq:regularized_V_simple_sketch}
\min_{ V\in \R^{n\times (n+d)} } \| S_1 U^*V - S_1 C \|_F^2+\lambda \|V\|_F^2
\end{align}
Let $V_1 \in \R^{n\times (n+d)}$ be the minimizer of the above program,
then we know 
\begin{align*}
V_1=\begin{bmatrix}S_1U^*\\
\sqrt{\lambda} I_n\end{bmatrix}^{\dagger}
\begin{bmatrix}S_1C \\0 \end{bmatrix},
\end{align*}
which means $V_1 \in \R^{n \times (n+d)}$ lies in the row span of $S_1C \in \R^{ s_1 \times (n+d) }$.
Moreover,
by Theorem \ref{thm:sketch_rigid}, with probability at least $0.99$ we have
\begin{align}\label{eq:regular_V_approx}
\| U^*V_1 - C \|_F^2+\lambda \|V_1\|_F^2\leq (1+\epsilon)\| U^*V^* - C \|_F^2+\lambda \|V^*\|_F^2
\end{align}

Now consider the problem
\begin{align}\label{eq:regularized_U} 
\min_{U\in \R^{m \times n}} \| UV_1 - C \|_F^2+\lambda \|U\|_F^2
\end{align}
Let $U_0 \in \R^{m \times n}$ be the minimizer of program \eqref{eq:regularized_U}.
Similarly,
 we can take a CountSketch matrix $S_2\in \R^{s_2\times (n+d)}$ with $s_2=\wt{O}(n/\epsilon)$ to obtain
\begin{align}\label{eq:regularized_U_simple_sketch}
\min_{ U\in \R^{m\times n} } \| UV_1S_2^\top - C S_2^\top \|_F^2+\lambda\|U\|_F^2
\end{align}
Let $U_1 \in \R^{m \times n}$ be the minimizer of program \eqref{eq:regularized_U_simple_sketch},
then we know 
\begin{align*}
U_1^\top=\begin{bmatrix}S_2V_1^\top \\
\sqrt{\lambda} I_n\end{bmatrix}^{\dagger}
\begin{bmatrix}S_2C^\top \\0 \end{bmatrix},
\end{align*}

which means $U_1 \in \R^{m \times n}$ lies in the column span of $CS_2^\top \in \R^{m \times s_2}$.
Moreover, with probability at least $0.99$ we have
\begin{align}\label{eq:regular_U_approx}
\| U_1V_1- C  \|_F^2+\lambda\|U_1\|_F^2
&\leq (1+\lambda)\cdot(\| U_0V_1- C  \|_F^2+\lambda\|U_0\|_F^2)\notag\\
&\leq (1+\lambda)\cdot(\| U^*V_1- C  \|_F^2+\lambda\|U^*\|_F^2)
\end{align}
where the first step we use Theorem \ref{thm:sketch_rigid} and the second step follows that $U_0$ is the minimizer.

Now let us compute the cost.
\begin{align*}
&\| U_1V_1 - C\|_F^2+\lambda \|U_1\|_F^2+\lambda \|V_1\|_F^2\\
=&~\lambda \|V_1\|_F^2+(\| U_1V_1 - C\|_F^2+\lambda \|U_1\|_F^2)\\
\leq&~\lambda \|V_1\|_F^2+(1+\epsilon)(\| U^*V_1 - C\|_F^2+\lambda \|U^*\|_F^2)\\
\leq &~(1+\epsilon)\cdot\left(\lambda \|U^*\|_F^2+(\| U^*V_1 - C\|_F^2+\lambda \|V_1\|_F^2)\right)\\
\leq& ~(1+\epsilon)\cdot\left(\lambda \|U^*\|_F^2+(1+\epsilon)^2\cdot(\| U^*V^* - C\|_F^2+\lambda \|V^*\|_F^2)\right)\\
\leq &~(1+\epsilon)^2\cdot(\| U^*V^* - C\|_F^2+\lambda \|U^*\|_F^2+\lambda \|V^*\|_F^2)\\
=&~(1+\epsilon)^2\OPT
\end{align*}
where the second step follows from \eqref{eq:regular_U_approx},
the fourth step follows from \eqref{eq:regular_V_approx},
and the last step follows from the definition of $U^* \in \R^{m \times n},V^* \in \R^{n \times (n+d)}$.

Finally,
since $V_1 \in \R^{n \times (n+d)}$ lies in the row span of $S_1C \in \R^{s_1 \times (n+d)}$ and $U_1 \in \R^{m \times n}$ lies in the column span of $CS_2^\top \in \R^{m \times s_2}$,
there exists $Z_1^*\in \R^{n\times s_1}$ and $Z_2^*\in \R^{s_2\times n}$ so that $V_1=Z_1^*S_1C \in \R^{n \times (n+d)}$ and $U_1=CS_2^\top Z_2^* \in \R^{m \times n}$.
Then the claim stated just follows from $(Z_1^*,Z_2^*)$ are also feasible.
\end{proof}

Now we just need to solve the optimization problem
\begin{align}\label{eq:regularized_transform}
\min_{ Z_1\in \R^{n\times s_1},Z_2\in \R^{s_2\times n} } \| CS_2^\top Z_2Z_1S_1C - C \|_F^2+\lambda\|CS_2^\top Z_2\|_F^2+\lambda\|Z_1S_1C\|_F^2 
\end{align}
The size of this program is quite huge,
i.e., we need to work with an $m\times d_2$ matrix $CS_2^\top$.
To handle this problem,
we again apply sketching techniques.
Let $D_1 \in \R^{d_1 \times m}$ be a leverage score sampling and rescaling matrix according to the matrix $C S_2 \in \R^{m \times s_2}$, so that $D_1$ has $d_1 = \tilde{O}(n/\epsilon)$ nonzeros on the diagonal.
Now,
we arrive at the small program that we are going to directly solve: 
\begin{align}\label{eq:regularized_transform_small}
\min_{ Z_1\in \R^{n\times s_1},Z_2\in \R^{s_2\times n} } \| D_1CS_2^\top Z_2Z_1S_1C - D_1C \|_F^2+\lambda\|D_1CS_2^\top Z_2\|_F^2+\lambda\|Z_1S_1C\|_F^2 
\end{align}

We have the following approximation guarantee.

\begin{claim}\label{clm:step_2_leverage_score_regularized}
Let $(Z_1^*,Z_2^*)$ be the optimal solution to program \eqref{eq:regularized_transform}.
Let $(\wh{Z}_1,\wh{Z}_2)$ be the optimal solution to program \eqref{eq:regularized_transform_small}.
With probability $0.96$,
\begin{align*}
&\| CS_2^\top \wh{Z}_2\wh{Z}_1S_1C - C \|_F^2+\lambda\|CS_2^\top \wh{Z}_2\|_F^2+\lambda\|\wh{Z}_1S_1C\|_F^2 \\
\leq &(1+\epsilon)^2(\| CS_2^\top Z_2^*Z_1^*S_1C - C \|_F^2+\lambda\|CS_2^\top Z_2^*\|_F^2+\lambda\|Z_1^*S_1C\|_F^2)
\end{align*}
\end{claim}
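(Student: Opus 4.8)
The plan is to treat program \eqref{eq:regularized_transform_small} as the leverage-score-sketched version of program \eqref{eq:regularized_transform}, exactly mirroring the unregularized argument in the proof of Claim~\ref{clm:step_2_leverage_score}, but now carrying along the two extra penalty terms. The key structural observation is that in \eqref{eq:regularized_transform} the matrix $CS_2^\top$ appears \emph{only} on the left of the variable block $Z_2 Z_1 S_1 C$ and inside $\lambda\|CS_2^\top Z_2\|_F^2$, while $D_1$ is a leverage score sampling matrix chosen according to the rows of $CS_2^\top$. So if we fix any feasible $(Z_1,Z_2)$ and set $N := Z_2 Z_1 S_1 C - [\,\text{(the identity selecting $C$)}\,]$ — more precisely, group the objective as
\begin{align*}
\left\| CS_2^\top (Z_2 Z_1 S_1 C) - C \right\|_F^2 + \lambda \|CS_2^\top Z_2\|_F^2 + \lambda\|Z_1 S_1 C\|_F^2,
\end{align*}
then the first two terms together are precisely a ridge-regression objective in the ``variable'' $Z_2$ with design matrix $CS_2^\top$ and right-hand sides $C$ (for the loss term) and $0$ (for the penalty), while the third term does not involve $CS_2^\top$ at all. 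Thus Theorem~\ref{thm:leverage-score-sampling} (leverage score sampling for multiple-response regression), applied with the regularization folded into an augmented matrix $\begin{bmatrix} CS_2^\top \\ \sqrt{\lambda} I\end{bmatrix}$ in the standard way, gives a two-sided $(1\pm\epsilon)$ guarantee: for \emph{every} fixed $Z_1$, sampling rows of $CS_2^\top$ (equivalently, rows of $C$, since $S_2$ acts on columns) by $D_1$ preserves the inner minimization over $Z_2$ up to $(1+\epsilon)$.

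From there the argument is the usual sandwich. First I would write down the chain
\begin{align*}
\| CS_2^\top \wh{Z}_2\wh{Z}_1S_1C - C \|_F^2+\lambda\|CS_2^\top \wh{Z}_2\|_F^2+\lambda\|\wh{Z}_1S_1C\|_F^2
&\leq (1+\epsilon)\big(\| D_1CS_2^\top \wh{Z}_2\wh{Z}_1S_1C - D_1C \|_F^2+\lambda\|D_1CS_2^\top \wh{Z}_2\|_F^2+\lambda\|\wh{Z}_1S_1C\|_F^2\big)\\
&\leq (1+\epsilon)\big(\| D_1CS_2^\top Z_2^*Z_1^*S_1C - D_1C \|_F^2+\lambda\|D_1CS_2^\top Z_2^*\|_F^2+\lambda\|Z_1^*S_1C\|_F^2\big)\\
&\leq (1+\epsilon)^2\big(\| CS_2^\top Z_2^*Z_1^*S_1C - C \|_F^2+\lambda\|CS_2^\top Z_2^*\|_F^2+\lambda\|Z_1^*S_1C\|_F^2\big),
\end{align*}
where the first and third inequalities are the two directions of the leverage-score subspace embedding / approximate-regression guarantee applied at the fixed right factors $\wh{Z}_1$ and $Z_1^*$ respectively (note the $\lambda\|\cdot S_1 C\|_F^2$ term is untouched by $D_1$, so it passes through exactly), and the middle inequality is optimality of $(\wh{Z}_1,\wh{Z}_2)$ for the sketched program \eqref{eq:regularized_transform_small}. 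The probability $0.96$ comes from a union bound over the two applications of Theorem~\ref{thm:leverage-score-sampling} (each succeeding with probability $\geq 0.999$, but we must also account for the earlier events); more carefully, the $0.96$ budgets for the two leverage-score events together with whatever randomness in $D_1$'s construction is needed. Since $\sd_\lambda(CS_2^\top)\leq \rank(CS_2^\top)\leq s_2 = \tilde O(n/\epsilon)$, the sampling dimension $d_1 = \tilde O(n/\epsilon)$ suffices.

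The main obstacle — and the only genuinely nontrivial point beyond bookkeeping — is justifying that the leverage score guarantee applies \emph{uniformly over the minimization in $Z_2$ while $Z_1$ is held fixed}, and in particular that the correct scores to sample by are the leverage scores of the \emph{augmented} matrix $\begin{bmatrix}CS_2^\top\\\sqrt\lambda I\end{bmatrix}$ (this is the standard ridge-leverage-score reduction, and must be reconciled with the algorithm's choice of sampling ``according to the rows of $CS_2^\top$''; one uses that for $\lambda$-ridge problems the ordinary leverage scores of $CS_2^\top$ over-estimate the ridge leverage scores up to the factor $1+\lambda/\sigma_i^2$, so $d_1 = \tilde O(\sd_\lambda)$ oversampled rows suffice). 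Once that reduction is in place, the three-line chain above is immediate, and the claim follows by rescaling $\epsilon$. The full details of the ridge-leverage reduction and the union bound are routine and I would defer them to the appendix.
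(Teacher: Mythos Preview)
Your proposal is correct and follows essentially the same sandwich argument as the paper: the paper's proof is exactly your three-line chain (with one extra intermediate line pulling the $\lambda\|\wh{Z}_1 S_1 C\|_F^2$ term inside the $(1+\epsilon)$ factor), justified by ``property of the leverage score sampling matrix $D_1$'' for the first and last steps and optimality of $(\wh{Z}_1,\wh{Z}_2)$ for the middle step. Your additional discussion of ridge leverage scores versus ordinary leverage scores is a detail the paper elides entirely, so you are in fact more careful than the original on that point.
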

\begin{proof}
This is because
\begin{align*}
&\| CS_2^\top \wh{Z}_2\wh{Z}_1S_1C - C \|_F^2+\lambda\|CS_2^\top \wh{Z}_2\|_F^2+\lambda\|\wh{Z}_1S_1C\|_F^2 \\
\leq &(1+\epsilon)\left(\| D_1CS_2^\top \wh{Z}_2\wh{Z}_1S_1C - D_1C \|_F^2+\lambda\|D_1CS_2^\top \wh{Z}_2\|_F^2\right)+\lambda\|\wh{Z}_1S_1C\|_F^2 \\
\leq &(1+\epsilon)\left(\| D_1CS_2^\top \wh{Z}_2\wh{Z}_1S_1C - D_1C \|_F^2+\lambda\|D_1CS_2^\top \wh{Z}_2\|_F^2+\lambda\|\wh{Z}_1S_1C\|_F^2\right) \\
\leq &(1+\epsilon)\left(\| D_1CS_2^\top Z_2^*Z_1^*S_1C - D_1C \|_F^2+\lambda\|D_1CS_2^\top Z_2^*\|_F^2+\lambda\|Z_1^*S_1C\|_F^2\right) \\
\leq &(1+\epsilon)^2\left(\| CS_2^\top Z_2^*Z_1^*S_1C - C \|_F^2+\lambda\|CS_2^\top Z_2^*\|_F^2+\lambda\|Z_1^*S_1C\|_F^2\right)
\end{align*}
where the first step  uses property of the leverage score sampling matrix $D_1$,
the third step follows from $(\wh{Z}_1,\wh{Z}_2)$ are minimizers of program \eqref{eq:regularized_transform_small},
and the fourth step again uses property of the leverage score sampling matrix $D_1$.
\end{proof}

Let $\wh{U}=CS_2^\top\wh{Z}_2$,$\wh{V}=\wh{Z}_1S_1C $ and $\wh{C}=\wh{U}\wh{V}$.
Combining Claim \ref{clm:step_1_count_sketch_regularized} and Claim \ref{clm:step_2_leverage_score_regularized} together,
we get with probability at least $0.91$,
\begin{align}\label{eq:uv_regularized}
 \|\wh{U}\wh{V} - [ A, ~ B] \|_F^2+\lambda \|\wh{U}\|_F^2+\lambda \|\wh{V}\|_F^2\leq (1+\epsilon)^4 \OPT
\end{align}
If the first $n$ columns of $\wh{C}$ can span the whole matrix $\wh{C}$,
then we are in good shape. 
In this case we have:
\begin{claim}[Perfect first $n$ columns]
Let $S_3\in \R^{s_3\times m}$ be the CountSketch matrix defined in Algorithm \ref{alg:ftls_regularized}.
Write $\wh{C}$ as $[\wh{A},~\wh{B}]$ where $\wh{A}\in \R^{m\times n}$ and $\wh{B}\in \R^{m\times d}$.
If there exists $\wh{X}\in \R^{n\times d}$ so that $\wh{B}=\wh{A}\wh{X}$,
let $\bar{X}\in \R^{n\times d}$ be the minimizer of $\min_{X\in \R^{n\times d}}\|S_3\wh{A}X-S_3\wh{B}\|_F^2$,
then with probability $0.9$,
\begin{align*}
\|[\wh{A}, \wh{A}\bar{X}] - [ A, ~ B] \|_F^2+\lambda \|\wh{U}\|_F^2+\lambda \|\wh{V}\|_F^2\leq (1+\epsilon)^4 \OPT
\end{align*}
\end{claim}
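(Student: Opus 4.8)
The plan is to mirror the argument used for the non-regularized case in the proof of Claim~\ref{clm:step_5_multiple_regression}, now applied to the sketched matrix $S_3 \wh{C}$. First I would observe that under the hypothesis $\wh{B} = \wh{A}\wh{X}$, the matrix $\wh{X}$ is a minimizer of $\min_{X \in \R^{n \times d}} \| \wh{A} X - \wh{B} \|_F^2$ with optimal value $0$. Since $S_3 \in \R^{s_3 \times m}$ is a CountSketch matrix with $s_3 = \wt{O}(n/\epsilon)$ rows, Theorem~\ref{thm:count_sketch} applied with the ``design matrix'' $\wh{A}$ and right-hand side $\wh{B}$ gives that $\bar{X}$, the minimizer of the sketched regression $\min_X \| S_3 \wh{A} X - S_3 \wh{B} \|_F^2$, satisfies $\| \wh{A} \bar{X} - \wh{B} \|_F^2 \le (1+\epsilon) \cdot 0 = 0$ with probability at least $0.999$. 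Hence $\wh{A}\bar{X} = \wh{B}$ exactly, so $[\wh{A}, \wh{A}\bar{X}] = [\wh{A}, \wh{B}] = \wh{C}$.

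Given this identity, the cost we need to bound collapses:
\begin{align*}
\|[\wh{A}, \wh{A}\bar{X}] - [A, ~B]\|_F^2 + \lambda \|\wh{U}\|_F^2 + \lambda \|\wh{V}\|_F^2
= \|\wh{C} - [A, ~B]\|_F^2 + \lambda \|\wh{U}\|_F^2 + \lambda \|\wh{V}\|_F^2,
\end{align*}
and since $\wh{C} = \wh{U}\wh{V}$ by definition, the right-hand side is exactly $\|\wh{U}\wh{V} - [A,~B]\|_F^2 + \lambda\|\wh{U}\|_F^2 + \lambda\|\wh{V}\|_F^2$, which is at most $(1+\epsilon)^4 \OPT$ by Eq.~\eqref{eq:uv_regularized}. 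That is the entire chain; after taking a union bound over the failure probability of $S_3$ (which costs at most $0.001$) and the $0.91$ success probability from \eqref{eq:uv_regularized}, the overall success probability is at least $0.9$ as claimed.

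The only mild subtlety is the probability accounting: $S_3$ must be independent of the randomness used to construct $\wh{U}, \wh{V}$ (i.e.\ of $S_1, S_2, D_1$), which is the case in Algorithm~\ref{alg:ftls_regularized} since $S_3$ is drawn fresh inside \textsc{Split}. One then conditions on the event of \eqref{eq:uv_regularized} holding and applies Theorem~\ref{thm:count_sketch} on top of it. I do not expect any genuine obstacle here: the regularization terms $\lambda\|\wh{U}\|_F^2 + \lambda\|\wh{V}\|_F^2$ are simply carried along untouched because $S_3$ acts only on the regression problem defining $\bar{X}$ and not on $\wh{U}$ or $\wh{V}$, and the perfect-span hypothesis makes the sketched regression exact. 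The main thing to be careful about in writing it up is to state clearly that $\bar{X}$ inherits $\wh{A}\bar{X} = \wh{B}$ exactly (not just approximately), since otherwise one would pick up an uncontrolled additive error rather than folding cleanly into \eqref{eq:uv_regularized}.
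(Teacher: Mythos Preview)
Your proposal is correct and follows essentially the same argument as the paper's own proof: apply Theorem~\ref{thm:count_sketch} to conclude $\wh{A}\bar{X}=\wh{B}$ exactly (since the unsketched optimum is zero), substitute $[\wh{A},\wh{A}\bar{X}]=\wh{C}=\wh{U}\wh{V}$, and invoke \eqref{eq:uv_regularized}. Your explicit union-bound bookkeeping and the remark on the independence of $S_3$ are a welcome addition but do not differ in substance from the paper.
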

\begin{proof}
We have with probability $0.99$,
\begin{align*}
\|\wh{A}\bar{X}-\wh{B}\|_F^2
\leq (1+\epsilon)\|\wh{A}\wh{X}-\wh{B}\|_F^2
=0
\end{align*}
where the first step follows from Theorem \ref{thm:count_sketch} and the second step follows from the assumption.
Recall that $\wh{C}=\wh{U}\wh{V}$,
so
\begin{align*}
 &\|[\wh{A}, \wh{A}\bar{X}] - [ A, ~ B] \|_F^2+\lambda \|\wh{U}\|_F^2+\lambda \|\wh{V}\|_F^2\\
 =&\|\wh{U}\wh{V} - [ A, ~ B] \|_F^2+\lambda \|\wh{U}\|_F^2+\lambda \|\wh{V}\|_F^2\leq (1+\epsilon)^4 \OPT
\end{align*}
where the last step uses \eqref{eq:uv_regularized}.
\end{proof}

However,
if $\wh{C}$ does not have nice structure,
then we need to apply our procedure $\textsc{Split}$,
which would introduce the additive error $\delta$.
 Overall, by rescaling $\epsilon$,
our main result is summarized as follows.
\begin{theorem}[Restatement of Theorem~\ref{thm:regular_ftls_informal}, algorithm for the regularized total least squares problem]\label{thm:regular_ftls}
Given two matrices $A \in \R^{m \times n}$ and $B \in \R^{m \times d}$ and $\lambda>0$, letting 
\begin{align*}
\OPT = \min_{U\in \R^{m \times n}, V\in \R^{n\times (n+d)} } \| UV - [ A, ~ B] \|_F^2+\lambda \|U\|_F^2+\lambda \|V\|_F^2,
\end{align*}
we have that for any $\epsilon \in (0,1)$, there is an algorithm 
that runs in 
\begin{align*}
\tilde{O}(\nnz(A) + \nnz(B) + d \cdot \poly(n/\epsilon))
\end{align*} time and outputs a matrix $X \in \R^{n \times d}$ such that there is a matrix $\wh{A} \in \R^{m \times n}$, $\wh{U}\in \R^{m \times n}$ and $\wh{V}\in \R^{n\times (n+d)}$ satisfying that $\|[\wh{A}, ~ \wh{A}X]-\wh{U}\wh{V}\|_F^2\leq \delta$ and
\begin{align*}
\| [\wh{A}, ~ \wh{A}X] - [A, ~ B] \|_F+\lambda \|\wh{U}\|_F^2+\lambda \|\wh{V}\|_F^2 \leq (1+\epsilon) \OPT + \delta
\end{align*}
\end{theorem}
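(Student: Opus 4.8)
The plan is to follow exactly the template of the unregularized case (Theorem \ref{thm:ftls}), replacing the low rank approximation steps with their ridge-regularized analogues and invoking the tools in Appendix \ref{sec:regularized}. First I would establish the reduction chain that mirrors Claims \ref{clm:step_1_count_sketch}--\ref{clm:step_5_cost}. Step one is Claim \ref{clm:step_1_count_sketch_regularized}: applying the sketch-for-ridge-regression guarantee (Theorem \ref{thm:sketch_rigid}) twice --- once on the inner factor $V$ using $S_1$, once on the outer factor $U$ using $S_2$ --- shows that restricting to solutions of the factored form $C S_2^\top Z_2 Z_1 S_1 C$ costs at most a $(1+\epsilon)^2$ factor, where the key point is that the optimal sketched factors $V_1$ and $U_1$ land in $\mathrm{rowspan}(S_1 C)$ and $\mathrm{colspan}(C S_2^\top)$ respectively because they are pseudoinverse solutions of the sketched normal equations. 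Step two is Claim \ref{clm:step_2_leverage_score_regularized}: a leverage-score sampling matrix $D_1$ chosen according to $C S_2^\top$ reduces the $m$-row regression \eqref{eq:regularized_transform} to the small program \eqref{eq:regularized_transform_small} at another $(1+\epsilon)^2$ cost, using the fact that $\lambda \|Z_1 S_1 C\|_F^2$ is unaffected by $D_1$ and the remaining terms form a multiple-response ridge regression in the rows indexed by $D_1$. Step three observes that \eqref{eq:regularized_transform_small} has small dimensions ($D_1 C S_2^\top$ is $\tilde O(n/\epsilon) \times \tilde O(n/\epsilon)$), so Lemma \ref{lem:low_rank_exact_solution} gives a closed-form exact solution $(\wh Z_1, \wh Z_2)$ in $d \cdot \poly(n/\epsilon)$ time; composing the three multiplicative losses yields \eqref{eq:uv_regularized}, i.e. $\wh C = \wh U \wh V$ with $\wh U = C S_2^\top \wh Z_2$, $\wh V = \wh Z_1 S_1 C$ satisfies $\|\wh C - [A,B]\|_F^2 + \lambda\|\wh U\|_F^2 + \lambda\|\wh V\|_F^2 \le (1+\epsilon)^4 \OPT$.

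Next I would handle the conversion from the low-rank solution $\wh C$ to an actual regression solution $X$, exactly as in the unregularized proof. If the first $n$ columns of $\wh C$ span the last $d$ columns, then the "Perfect first $n$ columns" claim applies: solving $\min_X \|S_3 \wh A X - S_3 \wh B\|_F^2$ via the CountSketch regression guarantee (Theorem \ref{thm:count_sketch}) gives $\bar X$ with $\wh A \bar X = \wh B$ exactly, so $[\wh A, \wh A \bar X] = \wh C$ and the cost bound is inherited verbatim. Otherwise I invoke procedure \textsc{Split} (Lemma \ref{lem:split}, whose statement and proof are insensitive to how $\wh C$ was produced, as long as $\mathrm{rank}(\wh C) = n$): it perturbs the offending columns of $\wh A$ by $\delta/\poly(m)$ multiples of unspanned columns of $\wh B$, producing $\ov A, \ov B$ with a genuine solution $\ov X$, and the triangle inequality gives $\|[\wh A, \wh A \ov X] - [A,B]\|_F \le \|\wh C - [A,B]\|_F + \delta$. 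Since $X = \ov A^\dagger \ov B$ from the final regression also satisfies $\ov A X = \ov B$, and setting $\wh U, \wh V$ as above one checks $\|[\wh A, \wh A X] - \wh U \wh V\|_F^2 \le \delta$ (the only discrepancy is the Split perturbation, which is $\le \delta$), combining with \eqref{eq:uv_regularized} and rescaling $\epsilon$ gives the stated guarantee with probability at least $9/10$ after a union bound over the $O(1)$ randomized steps.

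For the running time I would track each step as in Lemma \ref{lem:running_time}: computing $S_1 C$ and $C S_2^\top$ is $\nnz(C) + d\cdot\poly(n/\epsilon)$ by the CountSketch bound; approximating leverage scores of $C S_2^\top$ and forming $D_1 C$ is $\tilde O(\nnz(C)) + d\cdot\poly(n/\epsilon)$ by Theorem \ref{thm:leverage_score_time} (note $C S_2^\top$ has only $s_2 = \tilde O(n/\epsilon)$ columns so its rank is small); solving \eqref{eq:regularized_transform_small} is $d\cdot\poly(n/\epsilon)$ by Lemma \ref{lem:low_rank_exact_solution} with $n_1 = d_1 = \tilde O(n/\epsilon)$, $n_2 = n+d$; \textsc{Split} is $O(\nnz(C) + d\cdot\poly(n/\epsilon))$ as analyzed in Lemma \ref{lem:split}; and the final regression is $O(d\cdot\poly(n/\epsilon))$. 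Summing gives $\tilde O(\nnz(A) + \nnz(B) + d\cdot\poly(n/\epsilon))$.

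The main obstacle I anticipate is Claim \ref{clm:step_1_count_sketch_regularized}: in the unregularized setting the two-sided sketching argument is standard, but with the $\lambda\|U\|_F^2 + \lambda\|V\|_F^2$ penalty one must be careful that sketching the $V$-subproblem does not disturb the $\lambda\|U\|_F^2$ term (it does not, since $U$ is fixed there) and vice versa, and that the approximation factors multiply cleanly --- this requires arranging the regularizer terms so that at each stage one factor is held fixed while the other is sketched, and verifying that the fixed-factor regularizer is simply carried along as an additive constant in the ridge objective. A secondary subtlety is ensuring $\mathrm{rank}(\wh C) = n$ so that \textsc{Split} applies; one argues this holds with high probability over the CountSketch $S_3$ provided the un-sketched $\wh C$ has rank $n$, which follows from the factored form $\wh U \wh V$ with $\wh U \in \R^{m\times n}$, $\wh V \in \R^{n\times(n+d)}$ generically having full rank $n$ (and if not, a negligible perturbation fixes it, absorbed into $\delta$).
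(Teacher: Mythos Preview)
Your proposal is correct and follows essentially the same approach as the paper: the paper's proof in Appendix~\ref{sec:regularized} proceeds exactly through Claim~\ref{clm:step_1_count_sketch_regularized} (two applications of Theorem~\ref{thm:sketch_rigid}, one per factor), Claim~\ref{clm:step_2_leverage_score_regularized} (leverage-score sampling $D_1$), the exact solver Lemma~\ref{lem:low_rank_exact_solution}, the ``Perfect first $n$ columns'' claim, and \textsc{Split} for the degenerate case, with the running time bookkeeping mirroring Lemma~\ref{lem:running_time}. Your anticipated obstacle about carrying the fixed-factor regularizer through as an additive constant is precisely how the paper arranges the chain of inequalities in its proof of Claim~\ref{clm:step_1_count_sketch_regularized}.
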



\section{Toy Example}\label{sec:appendix-large-toy}

We first run our FTLS algorithm on the following toy example, for which we have an analytical solution exactly.
Let $A\in \mathbb{R}^{m\times n}$ be $A_{ii}=1$ for $i=1,\cdots,n$ and $0$ everywhere else.
Let $B\in \mathbb{R}^{m\times 1}$ be $B_{n+1}=3$ and $0$ everywhere else.

The cost of LS is $9$,
since $AX$ can only have non-zero entries on the first $n$ coordinates,
so the $(n+1)$-th coordinate of $AX-B$ must have absolute value $3$. Hence the cost is at least $9$.
Moreover,
a cost $9$ can be achieved by setting $X=0$ and $\Delta B=-B$.

However,
for the TLS algorithm,
the cost is only $1$.
Consider $\Delta A\in \mathbb{R}^{m\times n}$ where $A_{11}=-1$ and $0$ everywhere else.
Then $C':=[(A+\Delta A), ~ B]$ does have rank $n$,
and $\|C'-C\|_F=1$.

For a concrete example,
we set $m=10$, $n=5$.
That is, {
\[
C:=[A, ~ B]=\begin{bmatrix}
1 & 0 & 0 & 0 & 0 & 0 \\
0 & 1 & 0 & 0 & 0 & 0 \\
0 & 0 & 1 & 0 & 0 & 0 \\
0 & 0 & 0 & 1 & 0 & 0 \\
0 & 0 & 0 & 0 & 1 & 0 \\
0 & 0 & 0 & 0 & 0 & 3 \\
0 & 0 & 0 & 0 & 0 & 0 \\
0 & 0 & 0 & 0 & 0 & 0 \\
0 & 0 & 0 & 0 & 0 & 0 \\
0 & 0 & 0 & 0 & 0 & 0
\end{bmatrix}
\]}
We first run experiments on this small matrix.
Because we know the solution of LS and TLS exactly in this case,
it is convenient for us to compare their results with that of the FTLS algorithm.
When we run the FTLS algorithm,
we sample $6$ rows in all the sketching algorithms.

The experimental solution of LS is $C_{\LS}$ which is the same as the theoretical solution. The cost is $9$. The experimental solution of TLS is $C_{\TLS}$ which is also the same as the theoretical result.
The cost is $1$.

 {
\[
C_{\LS}=
\begin{bmatrix}
1 & 0 & 0 & 0 & 0 & 0 \\
0 & 1 & 0 & 0 & 0 & 0 \\
0 & 0 & 1 & 0 & 0 & 0 \\
0 & 0 & 0 & 1 & 0 & 0 \\
0 & 0 & 0 & 0 & 1 & 0 \\
0 & 0 & 0 & 0 & 0 & 0 \\
0 & 0 & 0 & 0 & 0 & 0 \\
0 & 0 & 0 & 0 & 0 & 0 \\
0 & 0 & 0 & 0 & 0 & 0 \\
0 & 0 & 0 & 0 & 0 & 0
\end{bmatrix}
 C_{\TLS}=
\begin{bmatrix}
0 & 0 & 0 & 0 & 0 & 0 \\
0 & 1 & 0 & 0 & 0 & 0 \\
0 & 0 & 1 & 0 & 0 & 0 \\
0 & 0 & 0 & 1 & 0 & 0 \\
0 & 0 & 0 & 0 & 1 & 0 \\
0 & 0 & 0 & 0 & 0 & 3 \\
0 & 0 & 0 & 0 & 0 & 0 \\
0 & 0 & 0 & 0 & 0 & 0 \\
0 & 0 & 0 & 0 & 0 & 0 \\
0 & 0 & 0 & 0 & 0 & 0
\end{bmatrix}
\]}

FTLS is a randomized algorithm,
so the output varies.
We post several outputs:{
\[
 C_{\FTLS}=
\begin{bmatrix}
0 & 0 & 0 & 0 & 0 & 0 \\
0 & 0.5 & 0.5 & 0 & 0 & 0 \\
0 & 0 & 0 & 0 & 0 & 0 \\
0 & 0 & 0 & 1 & 0 & 0 \\
0 & 0 & 0 & 0 & 0.1 & -0.3 \\
0 & 0 & 0 & 0 & -0.9 & 2.7 \\
0 & 0 & 0 & 0 & 0 & 0 \\
0 & 0 & 0 & 0 & 0 & 0 \\
0 & 0 & 0 & 0 & 0 & 0 \\
0 & 0 & 0 & 0 & 0 & 0
\end{bmatrix}
\] }
This solution has a cost of $4.3$.

\[
\hat C_{\FTLS}=
\begin{bmatrix}
0.5 & -0.5 & 0 & 0 & 0 & 0 \\
-0.5 & 0.5 & 0 & 0 & 0 & 0 \\
0 & 0 & 0 & 0 & 0 & 0 \\
0 & 0 & 0.09 & 0.09 & 0 & 0.27 \\
0 & 0 & 0 & 0 & 0 & 0 \\
0 & 0 & 0.82 & 0.82 & 0 & 2.45 \\
0 & 0 & 0 & 0 & 0 & 0 \\
0 & 0 & 0 & 0 & 0 & 0 \\
0 & 0 & 0 & 0 & 0 & 0 \\
0 & 0 & 0 & 0 & 0 & 0
\end{bmatrix}
\]

This solution has a cost of $5.5455$.

{\footnotesize
\[
C_{\FTLS}=
\begin{bmatrix}
0.5 & 0.5 & 0 & 0 & 0 & 0 \\
0 & 0 & 0 & 0 & 0 & 0 \\
0 & 0 & 1 & 0 & 0 & 0 \\
0 & 0 & 0 & 0 & 0 & 0 \\
0 & 0 & 0 & 0 & 1 & 0 \\
0 & 0 & 0 & -0.9 & 0 & 2.7 \\
0 & 0 & 0 & 0 & 0 & 0 \\
0 & 0 & 0 & 0 & 0 & 0 \\
0 & 0 & 0 & 0 & 0 & 0 \\
0 & 0 & 0 & 0 & 0 & 0
\end{bmatrix}
\]}
This solution has a cost of $3.4$.


\section{More Experiments}\label{sec:toy_experiment}


Figure \ref{fig:exp-frequency} shows the experimental result described in Section \ref{sec:exp_toy_example}.
It collects 1000 runs of our FTLS algorithm on 2 small toy examples.
In both figures, the $x$-axis is the cost of the FTLS algorithm, measured by $\|C'-C\|_F^2$ where $C'$ is the output of our FTLS algorithm;
the $y$-axix is the frequency of each cost that is grouped in suitable range.

\begin{figure}[!h]
  \centering
    \includegraphics[width=0.49\textwidth]{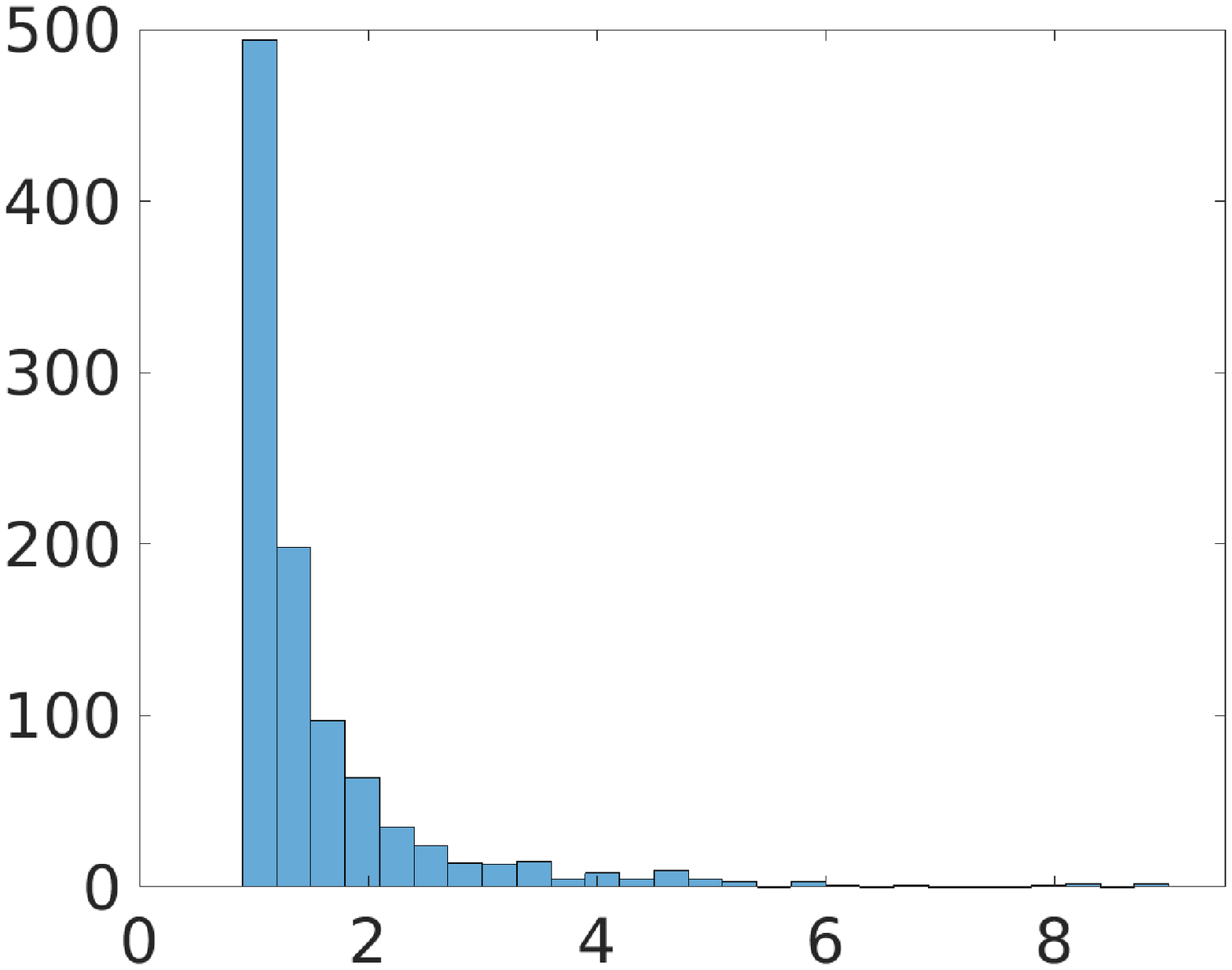}
    \includegraphics[width=0.49\textwidth]{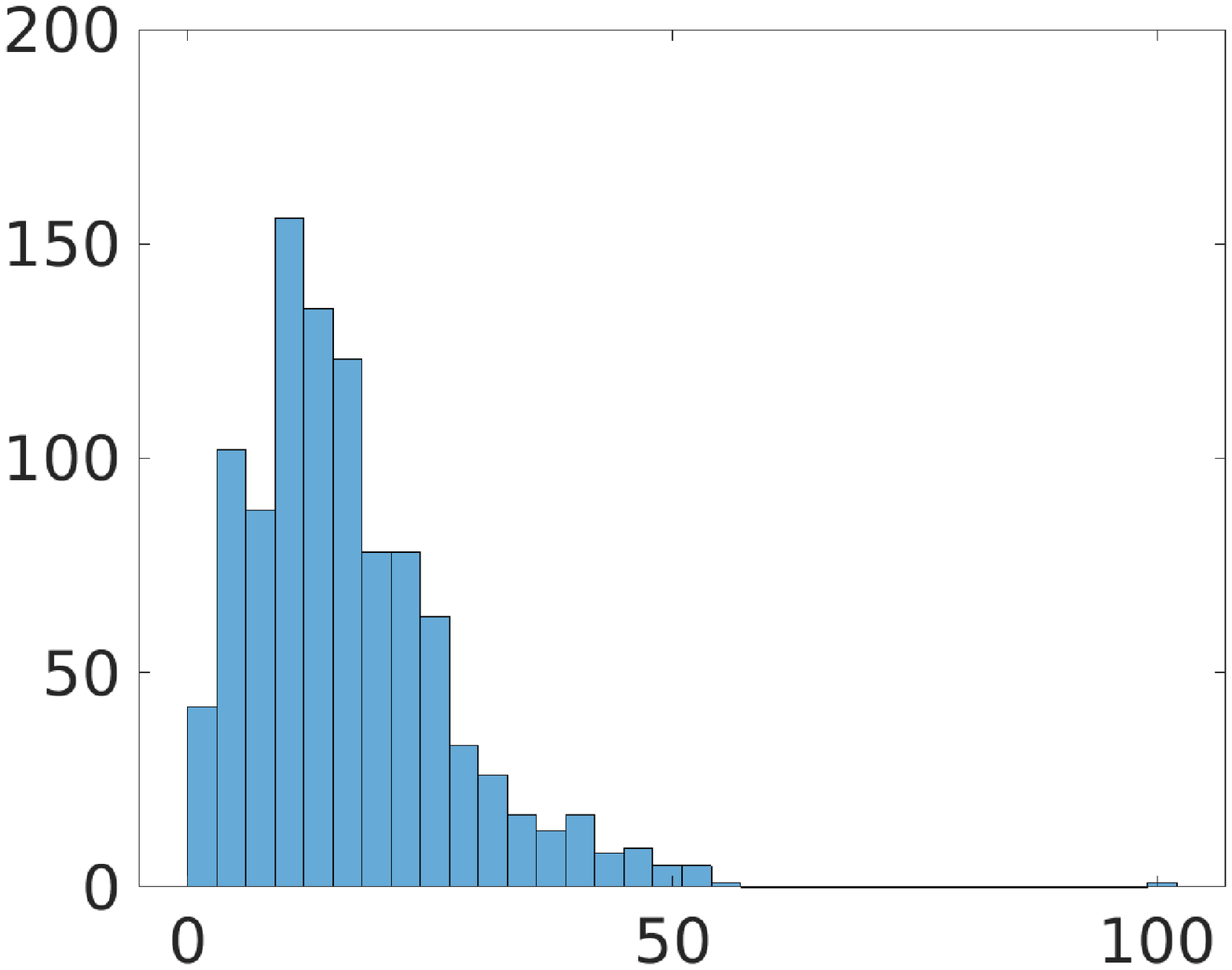}
    \caption{Cost distribution of our fast least squares algorithm on toy examples. \small The $x$-axis is the cost for FTLS. (Note that we want to minimize the cost); the $y$-axis is the frequency of each cost. (Left) First toy example, TLS cost is 1, LS cost is 9. (Right) Second toy example, TLS cost is 1.30, LS cost is 40.4} 
    \label{fig:exp-frequency}\vspace{-2mm}
\end{figure}

\end{document}